\DeclareSymbolFont{cyrillic}{T2A}{cmr}{m}{n}
\DeclareMathSymbol{\El}{\mathalpha}{cyrillic}{203}
\definecolor{TUMblue}{HTML}{0065bd}
\def\moverlay{\mathpalette\mov@rlay}
\def\mov@rlay#1#2{\leavevmode\vtop{%
   \baselineskip\z@skip \lineskiplimit-\maxdimen
   \ialign{\hfil$\m@th#1##$\hfil\cr#2\crcr}}}
\newcommand{\charfusion}[3][\mathord]{
    #1{\ifx#1\mathop\vphantom{#2}\fi
        \mathpalette\mov@rlay{#2\cr#3}
      }
    \ifx#1\mathop\expandafter\displaylimits\fi}
\newtheorem{theorem}{Theorem}[section]
\newtheorem{corollary}[theorem]{Corollary}
\newtheorem{lemma}[theorem]{Lemma}
\newtheorem{proposition}[theorem]{Proposition}
\newtheorem{example}[theorem]{Example}
\newtheorem{definition}[theorem]{Definition}
\newtheorem{remark}[theorem]{Remark}
\newcommand{\generate}[2]{\langle #1 \rangle_{\mathsf{#2}}}
\newcommand{\iu}{\mathrm{i}\mkern1mu}
\renewcommand{\C}{\mathbb{C}}
\newcommand{\R}{\mathbb{R}}
\newcommand{\Z}{\mathbb{Z}}
\newcommand{\mf}[1]{\mathfrak{#1}}
\newcommand{\mc}[1]{\mathcal{#1}}
\newcommand{\sign}{\operatorname{sign}}
\newcommand{\diag}{\operatorname{diag}}
\newcommand{\id}{\mathds1}
\newcommand{\SU}{\operatorname{SU}}
\newcommand{\SO}{\operatorname{SO}}
\newcommand{\argmax}{\operatorname{argmax}}
\newcommand{\argmin}{\operatorname{argmin}}
\renewcommand{\Im}{\operatorname{Im}}
\renewcommand{\Re}{\operatorname{Re}}
\newcommand{\ad}{\operatorname{ad}}
\newcommand{\tr}{\operatorname{tr}}
\newcommand{\derv}{\mathsf{derv}}
\newcommand{\reachR}{\mathsf{reach_R}}
\newcommand{\reachB}{\mathsf{reach_B}}
\renewcommand{\epsilon}{\varepsilon}
\newcommand{\e}{\mathbf e}
\newcommand{\wkl}{{\mf w_{\sf{GKSL}}}}
\newcommand{\ketbra}[1]{\ket{#1}\!\bra{#1}}
\begin{document}

\title[Optimal Control of a Markovian Qubit with Unitary Control]{Optimal Control of a Markovian Qubit with Unitary Control}

\author{Emanuel Malvetti}
\address{School of Natural Sciences, Technische Universit\"at M\"unchen, 85737 Garching, Germany, and Munich Center for Quantum Science and Technology (MCQST) \& Munich Quantum Valley (MQV)}

\begin{abstract}
We study a single Markovian qubit governed by a Lindblad master equation and subject to fast unitary control.
Using reduced control systems and optimal control theory we determine (i) controls for cooling and heating such systems in a time-optimal way as well as (ii) the set of stabilizable states in the Bloch ball.
No restrictions on the Lindblad equation are assumed, and several known results, for instance for the Bloch equations, are recovered.
Furthermore we introduce integral systems, for which the solutions take a particularly nice form.
These integral systems include all systems with real Lindblad terms as well as all coolable systems.
The method allows for intuitive visualizations and is mostly analytical, making use of only basic numerical methods.

\bigskip

\noindent\textbf{Keywords.} Markovian quantum systems, quantum control, optimal cooling, bilinear control theory, reduced control system \medskip

\noindent\textbf{MSC Codes.} 
81Q93, % Quantum control (Primary)
15A18, % Eigenvalues, singular values, and eigenvectors
37N20, % Dynamical systems in other branches of physics (quantum mechanics, general relativity, laser physics)
15A51, % Stochastic matrices
93B03 % Attainable sets
\end{abstract}

\maketitle

\section{Introduction}

Controlling individual qubits is a fundamental task in quantum information technologies, especially in the presence of noise and decoherence.
For instance cooling qubits to their ground state is essential for quantum computation~\cite{VincCriteria}.
In this paper we will use the method of reduced control systems~\cite{Reduced23,LindbladReduced23} and quantum control theory~\cite{DiHeGAMM08,dAless21} to derive time-optimal controls for a single qubit following Markovian time evolution and subject to fast unitary control.

Reduced control systems, under simplifying assumptions, have first been considered in~\cite[Ch.~22]{Agrachev04}, and these assumptions were recently shown to be unnecessary~\cite{Reduced23}.
In the context of Lindbladian systems with fast unitary control the reduced control system has been used in~\cite{Sklarz04,Yuan10,rooney2018} and~\cite{LindbladReduced23}.
The special case of the Bloch equations (corresponding exactly to the Lindblad equation with rotational invariance around the $z$-axis) was studied in~\cite{Lapert10,Lapert11,Lapert13}, introducing the so-called magic plane and steady state ellipsoid.
A special case of the Lindblad equations with bounded controls was studied using geometric methods in~\cite{Bonnard09,Bonnard09b}.
Another special case of the Bloch equations with incomplete control was addressed in~\cite{Lokutsievskiy24}.
The relaxation of certain unital channels and a special case of the Bloch equations was studied in~\cite{Mukherjee13} with bounded and unbounded controls.
The general Lindblad case was treated in~\cite{Rooney12,Rooney16,Rooney20} using mostly numerical methods.

Our approach uses a different method based on analyzing the generators of the reduced control system and yields a more comprehensive solution to the problem of finding time-optimal controls. 
The solution is almost completely analytical, except that as a final step one generally needs to use numerical integration, hinting at the fact that in the general case the solution does not admit an analytical expression.
Moreover our approach is very visual and geometric, giving an intuitive understanding of certain features of the obtained solutions.
A drawback of our approach is, however, that it does not easily extend to the case of bounded controls. 

\subsection*{Outline}

Section~\ref{sec:systems} introduces the setting and defines the relevant control systems.
Section~\ref{sec:generators} highlights the main tools and methods used in the paper and contains some preliminary results.
The main tasks addressed in the paper, namely optimal control and stabilization, are introduced and treated in a general fashion in Section~\ref{sec:tasks}.
The remainder of the main part focusses on deriving concrete solutions, starting with the completely general case in Section~\ref{sec:general-systems}.
The treatment of the general case leads to the definition of so-called integral systems for which the solution simplifies considerably.
These systems are addressed in Section~\ref{sec:integral}, and they include important special cases such as real and coolable systems.
The latter are considered in Section~\ref{sec:coolable}.
The study of some special systems (namely unital systems and the Bloch equations) is relegated to Appendix~\ref{sec:special-systems}.
Finally, Appendix~\ref{app:calc} contains some technical computations.

\section{Control systems} \label{sec:systems}

In this paper we consider an open two-level quantum system, henceforth called a qubit.
The uncontrolled Markovian evolution is described by the (time-independent) Lindblad equation~\cite{GKS76,Lindblad76}. 
Introducing fast unitary control to the system we obtain a bilinear control system~\cite{Jurdjevic97,Elliott09,DiHeGAMM08,dAless21}.
The assumption on the controls lead to a natural reduced control system as defined in~\cite{Reduced23,LindbladReduced23}.
This reduced control system describes the evolution of the eigenvalues of the density matrix representing the quantum state and will be studied in detail in the following sections.

\subsection{Bloch ball and Lindblad equation}

We start by recalling the basic formalism in more detail. 
The set of all possible mixed states of a qubit is given by the set of density matrices of size $2\times2$, which are exactly the positive semi-definite matrices of trace one, denoted
$\mf{pos}_1(2) = \{\rho\in \C^{2,2} : \rho\geq 0, \,\tr(\rho)=1\}$.
Using the \emph{Pauli matrices}
\begin{align*}
\sigma_x=\begin{pmatrix}
0&1\\1&0
\end{pmatrix},\quad
\sigma_y=\begin{pmatrix}
0&-\iu\\\iu&0
\end{pmatrix},\quad
\sigma_z=\begin{pmatrix}
1&0\\0&-1
\end{pmatrix},
\end{align*}
which form an orthonormal basis of the set $\iu\mf{su}(2)$ of traceless Hermitian matrices of size $2\times 2$ with respect to the (rescaled) Hilbert-Schmidt inner product $\braket{A,B}:=\tr(A^*B)/2$, we obtain the affine-linear isometry
$B(1/2)\to \mf{pos}_1(2)$ defined by $(x,y,z)\mapsto \tfrac{1}{2}\mathds1 + x\sigma_x + y\sigma_y + z\sigma_z$,
where $B(1/2)\subseteq\R^3$ denotes the ball of radius $1/2$, called the \emph{Bloch ball}\footnote{%
Sometimes the Bloch ball is defined such that is has radius $1$.}, cf.~\cite[Section~5.2]{Bengtsson17}.

The special unitary transformations $U\in\SU(2)$ act on the density matrices by conjugation $\rho\mapsto U\!\rho U^*$. 
Note that the kernel of this action is $\{\mathds1,-\mathds1\}$. 
In the Bloch ball picture, these transformations are rotations, belonging to the special orthogonal group $\SO(3)$.
The $\SU(2)$ orbits in $\mf{pos}_1(2)$ are exactly the sets of density matrices sharing the same eigenvalues, and they correspond to the concentric spheres of the Bloch ball. 
This illustrates the fact that the radius of a point in the Bloch ball only depends on the eigenvalues $\lambda,1-\lambda\in[0,1]$ of the corresponding density matrix, and in fact determines the eigenvalues up to their order. 
Indeed, if $\lambda\geq 1/2$ is the larger eigenvalue, the radius is given by $r=\lambda-1/2\in[0,1/2]$. 
Conversely, given radius $r$, the eigenvalues are $1/2\pm r$. 
We see that a state is pure, meaning $\rho=\ketbra{\psi}$, if and only if it corresponds to a point on the surface of the Bloch ball, and the interior of the Bloch ball consists of all mixed states. 
The center corresponds to the maximally mixed state $\mathds1/2$. 

To make this correspondence more precise we consider the isometric embedding
$\iota : [0,1]\to\mf{pos}_1(2)$ 
given by 
$\lambda \mapsto \tfrac{\mathds1}{2}+\left(\lambda-\tfrac{1}{2}\right)\sigma_z$
which maps $[0,1]$ to the subset of $\mf{pos}_1(2)$ consisting of diagonal density matrices corresponding exactly to the $z$-axis of the Bloch ball.
Note that if we endow $[0,1]$ with the metric induced by the absolute value, then $\iota$ is isometric since $\sqrt{\tr((\iota(\lambda)-\iota(\lambda'))^2)/2}=|\lambda-\lambda'|$.
A nice property of the set $\iota([0,1])$ is that it intersects all orbits orthogonally and the unitaries which leave $\iota([0,1])$ invariant act on it either trivially or by reflection about the origin. 
In $[0,1]$ this corresponds to the reflection $\lambda\mapsto1-\lambda$. 
To get rid of this final ambiguity one may work on the halved interval $[1/2,1]$.
Note that there is nothing special about the $z$-axis, except that it corresponds to the diagonal matrices in the standard basis. 
Any other axis would work for our purposes, since in fact all such axes are related by rotations of the Bloch ball.
When defining the reduced control system in the next section we will use $[0,1]$ as the reduced state space.

\smallskip
The \emph{\textsc{gks}--Lindblad equation}~\cite{GKS76,Lindblad76} for a qubit is given by
\begin{align*}
\dot\rho=-L(\rho)=-\iu[H_0,\rho] + \sum_{k=1}^r \big( V_k\rho V_k^* -\tfrac{1}{2}(V^*_k V_k\rho+\rho V^*_k V_k) \big),
\end{align*}
where the Hamiltonian $H_0\in\iu\mf{u}(2)$ is a Hermitian matrix and the \emph{Lindblad terms} $\{V_k\}_{k=1}^r\subset\C^{2,2}$ are arbitrary matrices. 
We call $-L$ a \emph{Lindblad generator}\footnote{The signs are chosen such that the real parts of the eigenvalues of $-L$ are non-positive.}, and we denote the set of all Lindblad generators in $2$ dimensions by $\wkl(2)$, called the \emph{Kossakowski--Lindblad Lie wedge}, cf.~\cite{DHKS08}.

\subsection{Full and reduced control systems}

In this paper we study a Markovian qubit subject to fast unitary control.
Let $I$ be an interval of the form $[0,T]$ or $[0,\infty)$.
We say that $\rho:I\to\mf{pos}_1(2)$ is a solution to the bilinear control system~\cite{Jurdjevic97,Elliott09,DiHeGAMM08}
\begin{align} \label{eq:bilin} \tag{$\mathsf F$}
\dot\rho(t) = -\Big(\iu\sum_{j=1}^m u_j(t)[H_j,\rho(t)] + L(\rho(t))\Big), \quad\rho(0)=\rho_0\in\mf{pos}_1(2), 
\end{align}
if it is absolutely continuous and satisfies the equation almost everywhere.
Here $-L\in\wkl(2)$ denotes the \emph{drift Lindblad generator} (with \emph{Lindblad terms} $V_k$ and including a possible Hamiltonian part $H_0$) describing the uncontrolled evolution of the system, the $H_j\in\iu\mf{u}(2)$ for $j=1,\ldots,m$ denote the \emph{control Hamiltonians}, and the functions $u_j:I\to\R$ are the \emph{control functions}.
Throughout we make the following two crucial assumptions: 
First we only require the control functions to be locally integrable, meaning that we do not assume any bounds. 
Second we assume that the control Hamiltonians generate the full special unitary Lie algebra $\generate{\iu H_j:j=1,\ldots,m}{\mathsf{Lie}}\supseteq\mf{su}(2)$.
Taken together this means that we have fast unitary control over the system.

\smallskip
Under these assumptions an equivalent reduced control system can be defined, cf.~\cite[Sec.~2.2]{LindbladReduced23}, by focussing on the evolution of the eigenvalues of $\rho$.
More precisely the reduced state will be $\lambda\in[0,1]$.
First we define the matrices $J_{ij}(U) = \sum_{k=1}^r|(U^* V_k U)_{ij}|^2$.
For each $U\in\SU(2)$, we obtain an \emph{induced vector field} on $[0,1]$ defined by
\begin{align} \label{eq:line}
\lambda \mapsto -Q_U(\lambda) = J_{12}(U)-\lambda(J_{12}(U)+J_{21}(U)).
\end{align}
Concretely each $-Q_U$ is an affine linear function on $[0,1]$.
This allows us to define the set-valued function
\begin{align*}
\derv:[0,1]\to\mc P(\R),\qquad \derv(\lambda) = \{-Q_U(\lambda) : U\in\SU(2)\},
\end{align*}
of \emph{achievable derivatives} (where $\mc P(\cdot)$ denotes the power set).
For an example of $\derv$ see Figure~\ref{fig:example-space-of-lines}. Then the \emph{reduced control system} on $[0,1]$ is defined by the differential inclusion~\cite{Smirnov02,Aubin84}%
\footnote{There are some slightly different ways to define the reduced control system, see~\cite[Sec.~2.2]{LindbladReduced23}, but the distinction is not relevant for us.}
\begin{align} \label{eq:reduced} \tag{$\mathsf R$}
\dot\lambda(t)\in\derv(\lambda(t)),\quad \lambda(0)=\lambda_0\in[0,1]. 
\end{align}

The Equivalence Theorem, see~\cite[Thm.~2.6]{LindbladReduced23}, shows that under the present assumption of fast unitary control, the bilinear control system~\eqref{eq:bilin} is equivalent to the reduced control system~\eqref{eq:reduced} in a precise sense.
Importantly, no loss of information is incurred by switching to the reduced control system.
Essentially we reduced the state space of the control system to the $z$-axis of the Bloch ball (representing the radius of the state), and we only consider the movement of the state along this axis.
As a consequence we obtain the following equivalence of reachable sets, cf.~\cite[Prop.~4.1]{LindbladReduced23}.
First recall that the \emph{reachable set} of $\lambda_0$ at time $T\geq0$ of the reduced control system~\eqref{eq:reduced}, 
denoted $\reachR(\lambda_0,T)$, is the set of all $\lambda(T)$ where $\lambda:[0,T]\to[0,1]$ is a solution to~\eqref{eq:reduced} with $\lambda(0)=\lambda_0$. 
The (all-time) reachable set is $\reachR(\lambda_0)=\bigcup_{T\geq0}\reachR(\lambda_0,T)$.
The definitions for other control systems are entirely analogous.

\begin{proposition}
Let $\rho_0\in\mf{pos}_1(2)$ have eigenvalues $\{\lambda_0,1-\lambda_0\}$, then for all $T>0$ it holds that
$$
\overline{\reachB(\rho_0,T)} = \overline{\{U\iota(\lambda)U^* : \lambda\in\reachR(\lambda_0,T), U\in\SU(2)\}},
$$
and similarly for the (all-time) reachable sets.
\end{proposition}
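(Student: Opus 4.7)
The plan is to prove the two inclusions separately, using the Equivalence Theorem~\cite[Thm.~2.6]{LindbladReduced23} as the bridge between the bilinear system~\eqref{eq:bilin} and the reduced differential inclusion~\eqref{eq:reduced}, together with the fact that the $\SU(2)$-orbits in $\mf{pos}_1(2)$ are precisely the level sets of the spectrum, with orbit representatives $\iota(\lambda)$ for $\lambda \in [1/2,1]$.

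For the inclusion $\supseteq$, I would fix $\lambda\in\reachR(\lambda_0,T)$ and $U\in\SU(2)$. The Equivalence Theorem yields, for any $\epsilon>0$, an admissible control of~\eqref{eq:bilin} whose trajectory $\rho_\epsilon(\cdot)$ starting at $\rho_0$ satisfies $\|\rho_\epsilon(T-\delta) - W\iota(\lambda)W^*\| < \epsilon$ for some $W\in\SU(2)$ and arbitrarily small $\delta>0$. Because the control Hamiltonians generate $\mf{su}(2)$ and the controls are unbounded, on the remaining interval of length $\delta$ we can realise any unitary conjugation (up to the kernel $\{\pm\id\}$ of the action on $\mf{pos}_1(2)$) to arbitrary precision while the dissipative part contributes $O(\delta)$. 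Choosing the unitary so as to map $W\iota(\lambda)W^*$ to $U\iota(\lambda)U^*$ and then letting $\epsilon,\delta\to 0$ places $U\iota(\lambda)U^*$ in $\overline{\reachB(\rho_0,T)}$.

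For the inclusion $\subseteq$, take any $\rho(T)\in\reachB(\rho_0,T)$, with larger eigenvalue $\mu$, so that $\rho(T)=V\iota(\mu)V^*$ for some $V\in\SU(2)$. The other direction of the Equivalence Theorem shows that the map $t\mapsto\lambda(t)$ extracting the eigenvalue of $\rho(t)$ is (in the limiting sense of~\cite[Thm.~2.6]{LindbladReduced23}) a solution of~\eqref{eq:reduced} with $\lambda(0)\in\{\lambda_0,1-\lambda_0\}$; by the symmetry $\lambda\mapsto 1-\lambda$ (which merely swaps $\iota(\lambda)$ with its unitarily equivalent $\iota(1-\lambda)$), either choice gives the same reachable set of spectra. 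Hence $\mu\in\overline{\reachR(\lambda_0,T)}$, and $\rho(T)$ lies in the set on the right. The all-time statement follows by taking unions over $T\geq 0$.

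The main obstacle is the careful handling of the closures: the fast-unitary assumption only yields arbitrary $\SU(2)$ conjugations as limits of finite-time evolutions, and likewise the reduced inclusion only captures the eigenvalue trajectories up to approximation. Both issues are precisely what the Equivalence Theorem is designed to control, so beyond unpacking its statement and invoking the eigenvalue/orbit correspondence, no new estimates are required; the result is essentially a restatement of~\cite[Prop.~4.1]{LindbladReduced23} in the qubit setting.
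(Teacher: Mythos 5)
The paper gives no proof of this proposition at all: it is imported verbatim from \cite[Prop.~4.1]{LindbladReduced23} (note the ``cf.''), and your closing sentence correctly identifies the result as a qubit restatement of that proposition, so your two-inclusion unpacking via the Equivalence Theorem is precisely the intended derivation and is sound. Two minor refinements: in the $\subseteq$ direction the eigenvalue path of a full solution is an \emph{exact} solution of~\eqref{eq:reduced} (not merely one in a limiting sense), which even gives $\mu\in\reachR(\lambda_0,T)$ up to the $\lambda\mapsto1-\lambda$ ambiguity you already handle; and in the $\supseteq$ direction you should track the reduced solution to $\lambda(T-\delta)$ at time $T-\delta$ rather than to $\lambda$ itself, which closes the bookkeeping since $\derv$ is uniformly bounded and hence the eigenvalue drifts only by $O(\delta)$ during the final fast rotation.
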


\begin{remark}
In~\cite{LindbladReduced23} the reduced control system is defined on the standard simplex $\Delta^{n-1}$ in $\R^n$.
Using the embedding $[0,1]\to\Delta^1\subset\R^2$, given by $\lambda\mapsto(\lambda,1-\lambda)^\top$ we pulled back the control system to the interval $[0,1]$.
This turns the (stochastic) linear dynamics on $\R^2$ into affine linear dynamics on $[0,1]$.
\end{remark}

\section{Space of generators and optimal derivatives} \label{sec:generators}

The induced vector fields $-Q_U$ generate the dynamics of the reduced control system~\eqref{eq:reduced}.
Due to the present low-dimensional setting, it turns out that these generators sit in a two-dimensional vector space and hence they can easily be visualized.
Understanding the exact shape of the set of generators is non-trivial, but can be done analytically, and this ultimately leads to solutions for the optimal control problem of~\eqref{eq:bilin}.
Furthermore it yields a parametrization of the stabilizable states in the Bloch ball.

Since each $\derv(\lambda)$ is the image of a continuous function on the compact connected set $\SU(2)$, the set $\derv(\lambda)$ must be a closed bounded interval in $\R$. 
Hence we can define the \emph{optimal derivative function}
\begin{align*}
\mu:[0,1]\to\R,\quad \lambda\mapsto\max\derv(\lambda).
\end{align*}
To fully understand the graph associated to $\derv$ (when seen as a set-valued function) it suffices to study the function $\mu$, since $\lambda\mapsto-\mu(1-\lambda)$ is the corresponding lower boundary due to the reflection symmetry on $[0,1]$ shown in Lemma~\ref{lemma:sol-basic} below.
The optimal derivative function $\mu$ enjoys some nice properties:

\begin{lemma} \label{lemma:mu-basic}
The function $\mu:[0,1]\to\R$ is continuous, convex and non-increasing.
Furthermore $\mu(1/2)$ is equal to the larger (non-negative) eigenvalue of $\sum_{k=1}^r [V_k,V_k^*]/2$. 
In particular, $\mu$ is non-negative on $[0,1/2]$ and non-positive at $1$.
\end{lemma}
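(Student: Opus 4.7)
The plan is to derive all three properties (continuity, convexity, monotonicity) uniformly from the observation that $\mu$ is the pointwise maximum over the compact set $\SU(2)$ of the continuous family of affine functions $\lambda\mapsto-Q_U(\lambda)=J_{12}(U)-\lambda(J_{12}(U)+J_{21}(U))$. Convexity is then immediate: a pointwise supremum of affine functions is convex on $[0,1]$. Continuity follows either from convexity on the interior plus a compactness argument at the endpoints, or more directly from the fact that $(U,\lambda)\mapsto-Q_U(\lambda)$ is jointly continuous and $\SU(2)$ is compact. Monotonicity is where the nonnegativity of the quantities $J_{ij}(U)=\sum_k|(U^*V_kU)_{ij}|^2$ enters: each slope $-(J_{12}(U)+J_{21}(U))$ is nonpositive, so every $-Q_U$ is non-increasing, and a pointwise maximum of non-increasing functions is non-increasing.

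The main obstacle, and really the only substantive part of the proof, is identifying $\mu(1/2)$ with the spectral quantity. Substituting $\lambda=1/2$ gives
\begin{equation*}
\mu(1/2) \;=\; \tfrac12\max_{U\in\SU(2)}\bigl(J_{12}(U)-J_{21}(U)\bigr).
\end{equation*}
I would then compute the $(1,1)$-entry of $[W,W^*]$ for a generic $2\times2$ matrix $W$: since $(WW^*)_{11}=|W_{11}|^2+|W_{12}|^2$ and $(W^*W)_{11}=|W_{11}|^2+|W_{21}|^2$, one obtains $[W,W^*]_{11}=|W_{12}|^2-|W_{21}|^2$. Applying this with $W=U^*V_kU$ and summing over $k$ yields the key identity
\begin{equation*}
\tfrac12\bigl(J_{12}(U)-J_{21}(U)\bigr) \;=\; \Bigl\langle \e_1,\, U^*\!\Bigl(\tfrac12\!\sum_{k=1}^r [V_k,V_k^*]\Bigr)\!U\e_1\Bigr\rangle.
\end{equation*}
The matrix $M:=\tfrac12\sum_k[V_k,V_k^*]$ is traceless Hermitian, hence its eigenvalues are $\pm\alpha$ for some $\alpha\ge 0$, and its largest eigenvalue is nonnegative. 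By the Rayleigh--Ritz variational characterization, maximizing $\langle\e_1,U^*MU\e_1\rangle$ over unitaries $U$ (here $\SU(2)$ suffices since the global phase is irrelevant) yields exactly this largest eigenvalue. This establishes the claimed identity and, en passant, shows $\mu(1/2)\ge 0$.

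The two final assertions are then cheap corollaries. Non-negativity on $[0,1/2]$ follows from monotonicity together with $\mu(1/2)\ge0$. For the sign at $\lambda=1$, one simply evaluates $-Q_U(1)=-J_{21}(U)\le 0$ and takes the maximum over $U$, so $\mu(1)\le 0$. I expect no hidden difficulties: the only nontrivial ingredient is recognizing the bilinear quantity $J_{12}(U)-J_{21}(U)$ as a diagonal entry of a conjugated commutator sum, after which Rayleigh--Ritz does the rest.
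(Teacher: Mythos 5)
Your proof is correct and takes essentially the same route as the paper: convexity, continuity, and monotonicity from viewing $\mu$ as a pointwise maximum of non-increasing affine functions, and the value at $1/2$ via the identity $\tfrac12\bigl(J_{12}(U)-J_{21}(U)\bigr)=\bigl(U^*\,\tfrac12\sum_{k=1}^r[V_k,V_k^*]\,U\bigr)_{11}$ followed by maximization over $\SU(2)$. The only difference is that where the paper cites an external lemma for this identity and leaves the variational (Rayleigh--Ritz) step implicit, you verify both directly, which is perfectly sound.
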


\begin{proof}
By definition, the function $\mu$ can be seen as the pointwise maximum of the decreasing affine linear functions $-Q_U$ parametrized by $U\in\SU(2)$.
From this it follows that $\mu$ is continuous, convex, and non-increasing. 
The last fact follows from
$\mu(1/2) 
=
\max_U\tfrac{1}{2}(J_{12}(U)-J_{21}(U))
=
\max_U\tfrac{1}{2}((J(U)-J(U)^T)\e)_1
=
\max_U\tfrac{1}{2} \sum_{k=1}^r(U^*[V_k,V_k^*]U)_{11}$,
where the maximization is over $\SU(2)$ and where $\e=(1,1)/2$ and we used~\cite[Lem.~B.2]{LindbladReduced23}.
\end{proof}

Each induced vector field $-Q_U$ is defined by the values taken at the endpoints, namely $J_{12}(U)\geq0$ at $\lambda=0$ and $-J_{21}(U)\leq0$ at $\lambda=1$. 
This motivates the definition of the \emph{space of generators} as
\begin{align*}
\mf Q = \{(J_{12}(U)-J_{21}(U),J_{12}(U)+J_{21}(U)) : U\in\SU(2)\} \subset\R^2.
\end{align*}
This set is clearly linearly isomorphic to the set $\{-Q_U: U\in\SU(2)\}$ but has the advantage of being easy to visualize.
For an example see Figure~\ref{fig:example-space-of-lines}. 
Understanding the space of generators $\mf Q$, and more specifically its boundary, allows us to describe the function $\mu$, which in turn allows us to find solutions to the optimal control problem.

\begin{lemma} \label{lemma:sol-basic}
The set $\mf Q$ is compact, path-connected, and satisfies $y\geq|x|$ for all $(x,y)\in\mf Q$. 
Moreover $\mf Q$ is symmetric with respect to the reflection $(x,y)\mapsto(-x,y)$.
\end{lemma}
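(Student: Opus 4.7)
The plan is to view $\mf Q$ as the image of the continuous map
\[ F:\SU(2)\to\R^2,\qquad U\mapsto \bigl(J_{12}(U)-J_{21}(U),\,J_{12}(U)+J_{21}(U)\bigr), \]
and read off all four claims from structural features of this map. Continuity of $F$ is immediate from the definition of $J_{ij}(U)=\sum_k|(U^*V_kU)_{ij}|^2$ as a sum of squared moduli of entries of $U^*V_kU$, which depend polynomially on $U$ and $\overline U$. Since $\SU(2)$ is compact and path-connected (being diffeomorphic to $S^3$), compactness and path-connectedness of $\mf Q$ then follow at once from the fact that continuous images preserve both properties.

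The inequality $y\geq|x|$ is nearly tautological: each $J_{ij}(U)$ is a sum of squared moduli and hence nonnegative, so the triangle inequality $|a-b|\leq a+b$ for $a,b\geq 0$ gives $|x|=|J_{12}(U)-J_{21}(U)|\leq J_{12}(U)+J_{21}(U)=y$, which in particular forces $y\geq 0$.

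The only property worth an actual calculation is the reflection symmetry $(x,y)\mapsto(-x,y)$. My plan here is to exhibit a single element $W\in\SU(2)$ whose right action $U\mapsto UW$ swaps the values of $J_{12}$ and $J_{21}$; given such a $W$, one immediately has $F(UW)=(-x(U),y(U))$ for every $U$, and the symmetry follows. The natural candidate is $W=\iu\sigma_y$, because conjugation by $\iu\sigma_y$ interchanges the two off-diagonal entries of any $2\times 2$ matrix up to a sign, which does not affect their squared moduli. Applying this to $U^*V_kU$ and summing over $k$ yields $J_{12}(UW)=J_{21}(U)$ and $J_{21}(UW)=J_{12}(U)$, as required.

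The one pitfall I would flag is that the most tempting "swap" choice, the permutation matrix $\begin{pmatrix}0&1\\1&0\end{pmatrix}$, has determinant $-1$ and lies in $\U(2)\setminus\SU(2)$, so it is not a priori available here; the replacement by $\iu\sigma_y\in\SU(2)$ repairs this at no cost. Beyond this, the proof is routine, and I do not expect any deeper obstacle.
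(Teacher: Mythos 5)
Your proof is correct and follows essentially the same route as the paper: $\mf Q$ as the continuous image of the compact, path-connected group $\SU(2)$, nonnegativity of $J_{12},J_{21}$ for $y\geq|x|$, and right multiplication by a fixed unitary swapping $J_{12}$ and $J_{21}$ for the symmetry. Amusingly, the pitfall you flag occurs verbatim in the paper, which uses $U\sigma_x$ despite $\det\sigma_x=-1$; this is harmless since a global phase ($\iu\sigma_x\in\SU(2)$) does not change the $J_{ij}$, and your choice $\iu\sigma_y$ works equally well.
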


\begin{proof}
The set $\mf Q$ is the image of a continuous function on $\SU(2)$, and hence compact and path-connected. 
By definition $J_{12}(U),J_{21}(U)\geq0$, and so $J_{12}(U)+J_{21}(U)\geq|J_{12}(U)-J_{21}(U)|$. 
For any $U$ it holds that $J_{12}(U\sigma_x)=J_{21}(U)$ and $J_{21}(U\sigma_x)=J_{12}(U)$, proving the symmetry.
\end{proof}

\begin{figure}[thb]
\centering
\includegraphics[width=0.45\textwidth]{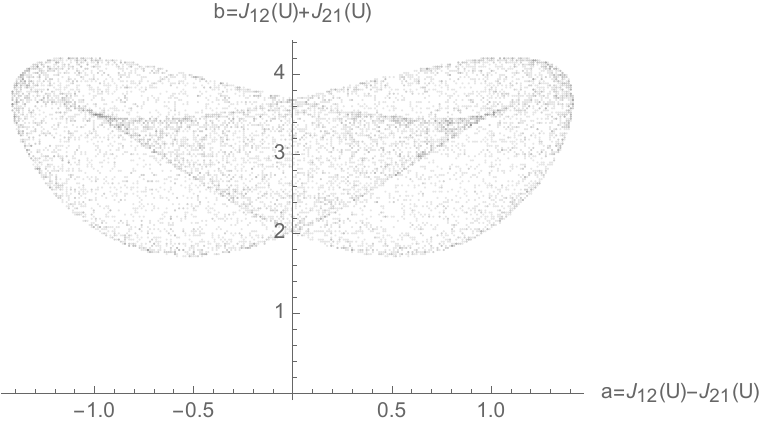}
\includegraphics[width=0.45\textwidth]{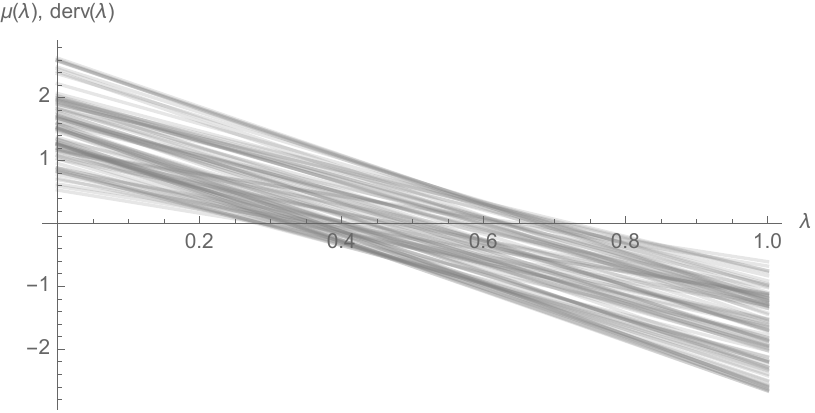}
\caption{We consider a generic Lindblad generator $-L\in\wkl$.
Left: The space of generators $\mf Q$ plotted using randomly sampled points.
Clearly $\mf Q$ has some intriguing structure, and hence it is the first object we wish to understand.
Right: Graph of the set-valued function $\derv$ generated by plotting the affine linear functions $-Q_U$ associated to randomly selected unitaries $U$. The functions $\mu(\lambda)$ and $-\mu(1-\lambda)$ are the upper and lower boundaries of the set of all such lines. 
The Lindblad terms $V_k$ are
$\big(\begin{smallmatrix} 0.0-0.9\iu & -0.6+0.6\iu \\ 0.+0.8\iu & 0.9-\iu \end{smallmatrix}\big),
\big(\begin{smallmatrix} -0.1+0.8\iu & 0.3-0.3\iu \\ -0.8+0.6\iu & 0.3+0.1\iu \end{smallmatrix}\big)$, and
$\big(\begin{smallmatrix} -0.2+0.4\iu & 0.6-0.2\iu \\ 0.6-0.7\iu & 0.2+0.8\iu \end{smallmatrix}\big)$.
}
\label{fig:example-space-of-lines}
\end{figure}

When $(0,0)\in\mf Q$ the system is of a special type (called unital stabilizable) which we will explore in Appendix~\ref{app:unital}.
In the main part we mostly focus on the case $(0,0)\notin\mf Q$.
A first relation between the objects $\mf Q$ and $\mu$ is given by the following result:

\begin{lemma} \label{lemma:mu-strict}
If $(0,0)\notin\mf Q$ then $\mu$ is strictly decreasing.
\end{lemma}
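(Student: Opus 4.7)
The plan is to prove the contrapositive: if $\mu$ fails to be strictly decreasing, then $(0,0)\in\mf Q$.

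First I would use Lemma~\ref{lemma:mu-basic} to reduce to a convenient situation. Since $\mu$ is non-increasing, failure of strict monotonicity gives two points $0\leq\lambda_0<\lambda_1\leq 1$ with $\mu(\lambda_0)=\mu(\lambda_1)=:c$. Convexity together with monotonicity then forces $\mu\equiv c$ on the entire interval $[\lambda_0,\lambda_1]$: any value strictly below $c$ in between would violate convexity, and a value strictly above $c$ would violate monotonicity from $\lambda_0$.

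Next I would pick an attaining unitary. Because $\SU(2)$ is compact and the map $U\mapsto-Q_U(\lambda_1)$ is continuous, there exists $U\in\SU(2)$ with $-Q_U(\lambda_1)=\mu(\lambda_1)=c$. The affine function $-Q_U$ has slope $-(J_{12}(U)+J_{21}(U))\leq 0$ and satisfies $-Q_U\leq\mu$ pointwise. Evaluating at $\lambda_0$: monotonicity of $-Q_U$ gives $-Q_U(\lambda_0)\geq -Q_U(\lambda_1)=c$, while the envelope bound gives $-Q_U(\lambda_0)\leq\mu(\lambda_0)=c$. Hence $-Q_U(\lambda_0)=-Q_U(\lambda_1)=c$.

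An affine function that takes the same value at two distinct arguments is constant, so its slope vanishes: $J_{12}(U)+J_{21}(U)=0$. Since both summands are non-negative, $J_{12}(U)=J_{21}(U)=0$, so
\begin{equation*}
(J_{12}(U)-J_{21}(U),\,J_{12}(U)+J_{21}(U))=(0,0)\in\mf Q,
\end{equation*}
contradicting the hypothesis. There is no serious obstacle here; the only point needing mild care is the convexity-plus-monotonicity step used to upgrade equality at two points into equality on the entire interval, but this is immediate from the definition of convexity.
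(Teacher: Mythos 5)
Your proof is correct and follows essentially the same route as the paper: argue by contraposition, extract from the failure of strict monotonicity an affine generator $-Q_U$ that is forced to be horizontal, and conclude $J_{12}(U)=J_{21}(U)=0$, i.e.\ $(0,0)\in\mf Q$. Your version is if anything slightly more careful than the paper's (you justify horizontality via the envelope bound at an attaining unitary rather than asserting it); note only that the constancy step needs no convexity, since $\mu$ non-increasing with $\mu(\lambda_0)=\mu(\lambda_1)$ already pins $\mu\equiv c$ on $[\lambda_0,\lambda_1]$ --- and in fact your endpoint argument never uses that constancy at all.
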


\begin{proof}
By contraposition, if $\mu$ is not strictly decreasing, there are two points, $\lambda_1<\lambda_2$ in $[0,1]$ such that $\mu(\lambda_1)=\mu(\lambda_2)$ and $\mu$ will be constant on $[\lambda_1,\lambda_2]$. Then there must be some horizontal line of the form~\eqref{eq:line} passing through $(\lambda,\mu(\lambda))$ for $\lambda\in(\lambda_1,\lambda_2)$. Let $U\in\SU(2)$ be a corresponding unitary. Then the slope of the line is $-(J_{12}(U)+J_{21}(U))$ and equals $0$, and so $J_{12}(U)=J_{21}(U)=0$, as desired.
\end{proof}
\noindent The space of generators $\mf Q$ and the optimal derivative function $\mu$ are linked by the following key result:

\begin{proposition} \label{prop:polarity}
Let $\lambda\in[0,1]$. 
Then there is some line of the form~\eqref{eq:line} passing through $(\lambda,\mu(\lambda))$, and a point in $\mf Q$ corresponds to such a line if and only if it solves the linear optimization problem
$$
\mu(\lambda)=\max_{(a,b)\in\mf Q}\tfrac{1}{2} (a+(1-2\lambda)b).
$$
\end{proposition}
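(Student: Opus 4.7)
The plan is to unpack both the definition of the linear functional appearing in the optimization problem and the definition of the generators $-Q_U$, and recognize that these two expressions coincide. Given $U \in \SU(2)$, set $a = J_{12}(U) - J_{21}(U)$ and $b = J_{12}(U) + J_{21}(U)$, so that $(a,b)$ is the point in $\mf Q$ associated to $U$. Solving back yields $J_{12}(U) = (a+b)/2$ and $J_{21}(U) = (b-a)/2$, and substituting into \eqref{eq:line} gives the key identity
\begin{align*}
-Q_U(\lambda) \;=\; \tfrac{a+b}{2} - \lambda b \;=\; \tfrac{1}{2}\bigl(a + (1-2\lambda)b\bigr).
\end{align*}
Hence the value at $\lambda$ of the affine function $-Q_U$ is exactly the linear functional evaluated at the corresponding point $(a,b) \in \mf Q$.

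From this identity the proposition essentially writes itself. Maximizing both sides over $U \in \SU(2)$, the left-hand side is by definition $\mu(\lambda) = \max \derv(\lambda)$, while the right-hand side is the stated linear program over $\mf Q$. Since $\mf Q$ is compact by Lemma~\ref{lemma:sol-basic}, the linear functional $(a,b) \mapsto \tfrac{1}{2}(a + (1-2\lambda)b)$ attains its maximum on $\mf Q$, which proves the existence claim: some point $(a^\star, b^\star) \in \mf Q$ realizes $\mu(\lambda)$, and the corresponding line $-Q_U$ passes through $(\lambda, \mu(\lambda))$.

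For the biconditional, the identity above shows that a line $-Q_U$ passes through $(\lambda, \mu(\lambda))$ precisely when $\tfrac{1}{2}(a + (1-2\lambda)b) = \mu(\lambda)$ for its associated $(a,b) \in \mf Q$. By the equality of the two maxima, this is exactly the condition that $(a,b)$ is an optimizer of the linear program. Note that this characterization is well-defined in the sense that distinct unitaries $U$ may yield the same pair $(a,b)$, but they then yield the same affine function $-Q_U$, so the notion of ``a point in $\mf Q$ corresponding to a line through $(\lambda, \mu(\lambda))$'' is unambiguous.

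There is no real obstacle here: the proposition is a direct translation between the graph of $\derv$ and the two-dimensional parametrization $\mf Q$, combined with an appeal to compactness. The geometric content, that optimizing the derivative at $\lambda$ is a linear program over $\mf Q$ with weight vector $\tfrac{1}{2}(1, 1-2\lambda)$, is what makes this statement useful for subsequent analysis of $\mu$ via the boundary structure of $\mf Q$.
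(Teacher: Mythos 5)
Your proof is correct and follows the same route as the paper, which simply compresses your key identity $-Q_U(\lambda)=\tfrac12\bigl(a+(1-2\lambda)b\bigr)$ into the phrase ``follows immediately from the definition of $\mu$'' and likewise invokes compactness of $\mf Q$ for existence. You have merely made explicit the change of coordinates between $(J_{12},J_{21})$ and $(a,b)$ that the paper leaves to the reader.
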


\begin{proof}
The optimization problem follows immediately from the definition of $\mu$. 
The existence of a solution follows from the fact that $\mf Q$ is compact.
\end{proof}

\begin{remark}
Note that the relation between the function $\mu:[0,1]\to\R$ and the set $\mf Q\subseteq\R^2$ is similar to the Legendre--Fenchel transform of a function and the polar dual of a polytope.
In particular corners of $\mf Q$ yield affine linear parts of $\mu$ as can be seen for instance in Appendix~\ref{subsec:bloch-eq}.
Note also that $\mf Q$ need not be convex, as can be seen in Figure~\ref{fig:example-space-of-lines}, and $\mu$ only depends on the convex hull of $\mf Q$.
\end{remark}

The space of generators $\mf Q$ fully describes the reduced control system~\eqref{eq:reduced}, and thus allows us to compute for instance reachable and stabilizable spectra. 
In order to compute the fastest path in the Bloch ball, along with the optimal controls, as well as the stabilizable states in the Bloch ball, we need to parametrize $\mf Q$ in terms of the corresponding unitaries. 
More precisely, it suffices to consider a subset of $\SU(2)$ which can map the north pole of the Bloch sphere to any other point on the Bloch sphere. 
We will give this parametrization in full generality in Section~\ref{sec:general-systems}, and show how it simplifies in special cases of interest. Using the unitaries $U_{x,z}=\exp(\iu\pi z \sigma_z)\exp(\iu\pi x \sigma_x)$, the parametrization will be of the form
\begin{align} \label{eq:sol-param}
(x,z)\mapsto (J_{12}(U_{x,z})-J_{21}(U_{x,z}),J_{12}(U_{x,z})+J_{21}(U_{x,z}))
\end{align}
for $x\in [0,1/2]$ and $z\in[0,1)$. Hence $2\pi x$ corresponds to the polar angle and $2\pi z$ to the azimuthal angle on the Bloch sphere.

\section{Control tasks} \label{sec:tasks}

The optimal derivative function $\mu:[0,1]\to\R$ discussed in the previous section gives the fastest increase (or slowest decrease if it is negative) of $\lambda$ which can be achieved in the reduced control system.
It allows us to directly read off relevant information such as the reachable and stabilizable sets from its graph.
Knowledge of the function $\mu$ also allows us to determine optimal controls for the full control system, as we will show in this section.

\subsection{Stabilizable set}

A relevant task in applications is that of stabilizing the system in a certain desired state.
We will explicitly determine which states are stabilizable in the full and in the reduced control system.
Moreover we will show concretely how such states can be stabilized.

We begin with the reduced control system. 
For $\lambda\in[0,1]$ we say that $\lambda$ is \emph{stabilizable}\footnote{In~\cite{Reduced23,LindbladReduced23} we distinguish between stabilizable and strongly stabilizable states, but in the present setting the two notions coincide.} if $0\in\derv(\lambda)$.
Concretely this means that in the reduced control system, the constant path at $\lambda$ is a solution.

\begin{lemma}
Assume that $(0,0)\notin\mf Q$. 
Then the set of stabilizable states is the non-empty closed interval $[1-\lambda^\star,\,\lambda^\star]$ where $\lambda^\star\geq1/2$ is the unique root of $\mu$.
If $(0,0)\in\mf Q$, then all states $\lambda\in[0,1]$ are stabilizable.
\end{lemma}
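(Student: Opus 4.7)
The plan is to show that stabilizability at $\lambda$ is equivalent to the two inequalities $\mu(\lambda)\geq 0$ and $\mu(1-\lambda)\geq 0$, and then read off the claimed interval from the monotonicity of $\mu$.

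First I would observe that $\derv(\lambda)$ is a closed interval in $\R$ (being the continuous image of the connected compact set $\SU(2)$), so $0\in\derv(\lambda)$ is equivalent to $\min\derv(\lambda)\leq 0\leq \max\derv(\lambda)$. The upper bound is $\mu(\lambda)$ by definition, so the task is to identify the lower bound. Reparametrizing each line $-Q_U$ by $(a,b)=(J_{12}(U)-J_{21}(U),\,J_{12}(U)+J_{21}(U))\in\mf Q$, one has $-Q_U(\lambda) = \tfrac{1}{2}(a+(1-2\lambda)b)$, hence $\min \derv(\lambda)=\min_{(a,b)\in\mf Q}\tfrac{1}{2}(a+(1-2\lambda)b)$. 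Using the reflection symmetry $(a,b)\mapsto(-a,b)$ of $\mf Q$ provided by Lemma~\ref{lemma:sol-basic}, this minimum equals $-\max_{(a,b)\in\mf Q}\tfrac{1}{2}(a+(2\lambda-1)b)=-\mu(1-\lambda)$ by Proposition~\ref{prop:polarity}. Consequently, $\lambda$ is stabilizable if and only if $\mu(\lambda)\geq 0$ and $\mu(1-\lambda)\geq 0$.

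Now assume $(0,0)\notin\mf Q$. By Lemma~\ref{lemma:mu-basic}, $\mu$ is continuous with $\mu(1/2)\geq 0$ and $\mu(1)\leq 0$, and by Lemma~\ref{lemma:mu-strict} it is strictly decreasing. Hence $\mu$ admits a unique root $\lambda^\star\in[1/2,1]$, and $\mu(\lambda)\geq 0 \Leftrightarrow \lambda\leq\lambda^\star$. The second condition $\mu(1-\lambda)\geq 0$ then translates to $1-\lambda\leq\lambda^\star$, i.e.\ $\lambda\geq 1-\lambda^\star$. Combining gives that the stabilizable set is exactly $[1-\lambda^\star,\lambda^\star]$; this interval is non-empty since $\lambda^\star\geq 1/2\geq 1-\lambda^\star$.

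Finally, if $(0,0)\in\mf Q$ then there exists $U\in\SU(2)$ with $J_{12}(U)=J_{21}(U)=0$, so the corresponding line $-Q_U$ is identically zero and $0\in\derv(\lambda)$ for every $\lambda\in[0,1]$. I do not foresee a genuine obstacle here; the only slightly delicate point is the identification $\min\derv(\lambda)=-\mu(1-\lambda)$ via the $(a,b)\mapsto(-a,b)$ symmetry, but this is a direct change of variables in the linear program of Proposition~\ref{prop:polarity}.
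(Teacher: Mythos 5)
Your proof is correct and follows essentially the same route as the paper: identify the lower boundary of $\derv$ as $\lambda\mapsto-\mu(1-\lambda)$, use Lemma~\ref{lemma:mu-basic} and Lemma~\ref{lemma:mu-strict} to get existence and uniqueness of the root $\lambda^\star\in[1/2,1]$, and intersect the two resulting half-interval conditions. The only difference is that you spell out the identification $\min\derv(\lambda)=-\mu(1-\lambda)$ via the $(a,b)\mapsto(-a,b)$ change of variables in the linear program of Proposition~\ref{prop:polarity}, a step the paper asserts directly from the symmetry in Lemma~\ref{lemma:sol-basic}; your computation is a correct and welcome elaboration, not a different argument.
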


\begin{proof}
The case of $(0,0)\in\mf Q$ is clear, so assume that $(0,0)\notin\mf Q$.
By Lemma~\ref{lemma:mu-strict} it holds that $\mu$ has at most one root. 
By Lemma~\ref{lemma:mu-basic} there must be at least one root in the interval $[1/2,1]$. Denote this root by $\lambda^\star$. The lower bound of $\derv$ is given by $\lambda\mapsto-\mu(1-\lambda)$ which has a unique root at $1-\lambda^\star$. Since both $\mu$ and the corresponding lower boundary are strictly decreasing, the stabilizable region is $[1-\lambda^\star,\,\lambda^\star]$.
\end{proof}
\noindent We call $\lambda^\star$ the \emph{purest stabilizable state}. 
The stabilizable set can be obtained graphically from the space of generators as follows.

\begin{lemma} \label{lemma:stab}
Assume that $(0,0)\notin\mf Q$.
It holds that $\lambda\in[0,1]$ is stabilizable if and only if $\lambda=\frac12(1+\frac{a}{b})$ for some $(a,b)\in\mf Q$, and hence $\lambda^\star=\max_{(a,b)\in\mf Q}\frac12(1+\frac{a}{b})$.
\end{lemma}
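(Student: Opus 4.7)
The plan is to unpack both sides directly from the definitions. A state $\lambda\in[0,1]$ is stabilizable iff $0\in\derv(\lambda)$, which by the definition of $\derv$ means there exists some $U\in\SU(2)$ with $-Q_U(\lambda)=0$, i.e.\ by~\eqref{eq:line}
\begin{equation*}
J_{12}(U) \;=\; \lambda\bigl(J_{12}(U)+J_{21}(U)\bigr).
\end{equation*}
The corresponding element of $\mf Q$ is $(a,b)=(J_{12}(U)-J_{21}(U),J_{12}(U)+J_{21}(U))$, so $J_{12}(U)=(a+b)/2$ and $J_{21}(U)=(b-a)/2$. The above relation therefore becomes $(a+b)/2=\lambda b$.

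To solve for $\lambda$ I need $b\neq 0$. Here the hypothesis $(0,0)\notin\mf Q$ enters: by Lemma~\ref{lemma:sol-basic} every $(a,b)\in\mf Q$ satisfies $b\geq|a|\geq 0$, so $b=0$ would force $a=0$ and hence $(0,0)\in\mf Q$, a contradiction. Thus $b>0$ for all points of $\mf Q$, and dividing yields $\lambda=\tfrac12(1+\tfrac{a}{b})$. Conversely, given any $(a,b)\in\mf Q$ coming from some $U$, the same algebra reversed shows that $\lambda=\tfrac12(1+\tfrac ab)$ satisfies $-Q_U(\lambda)=0$, establishing the stated equivalence between stabilizable states and the image of $(a,b)\mapsto\tfrac12(1+\tfrac ab)$ on $\mf Q$.

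The second assertion then follows from the previous lemma: the set of stabilizable states is the interval $[1-\lambda^\star,\lambda^\star]$, so its maximum $\lambda^\star$ coincides with the maximum of $\tfrac12(1+\tfrac ab)$ over $(a,b)\in\mf Q$. (Since $\mf Q$ is compact and lies in $\{b>0\}$, this maximum is actually attained.) There is no real obstacle here; the only point requiring care is the strict positivity of $b$, which is exactly what the assumption $(0,0)\notin\mf Q$ is designed to ensure.
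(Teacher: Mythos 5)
Your proof is correct and follows essentially the same route as the paper: solve $-Q_U(\lambda)=0$ for $\lambda$, using $(0,0)\notin\mf Q$ together with $b\geq|a|$ from Lemma~\ref{lemma:sol-basic} to guarantee $b>0$, and then obtain the maximum by compactness. The paper phrases the final step via the set of achievable ratios $\tfrac{a}{b}$ being a closed interval rather than citing the preceding lemma, but this is an immaterial difference.
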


\begin{proof}
Since $(0,0)\notin\mf Q$ the value $\frac{a}{b}$ is well defined for every point $(a,b)\in\mf Q$, and every line of the form~\eqref{eq:line} intersects the abscissa in a unique point, namely $\lambda=\frac12(1+\frac{a}{b})$.
By Lemma~\ref{lemma:sol-basic} the value $\frac{a}{b}$ is always contained in $[-1,1]$, and the set of all possible $\frac{a}{b}$ is a non-empty closed interval symmetric about $0$. In particular is has a greatest element.
\end{proof}
\noindent In practice for general systems, after parametrizing the set $\mf Q$ as discussed in the previous section, it is necessary to use a root-finding algorithm (such as the bisection method) to find $\lambda^\star$, cf.~\cite{Epperson13}.

The following result presents a class of systems for which the purest stabilizable state $\lambda^\star$ can be found analytically.

\begin{lemma} \label{lemma:heat-baths}
Let $V$ be an arbitrary Lindblad term and consider the system defined by the Lindblad terms $V$ and $\sqrt{\gamma} V^*$ with $\gamma\in[0,1]$. 
Then the purest stabilizable state is $\lambda^\star=\frac{1}{1+\gamma}$, unless $V$ is normal, in which case it is $\lambda^\star=1$.
\end{lemma}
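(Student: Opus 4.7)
The plan is to exploit the special structure of the pair $\{V, \sqrt{\gamma}V^*\}$ to evaluate $J_{12}(U)$ and $J_{21}(U)$ in a useful way, and then to apply Lemma~\ref{lemma:stab} directly. Writing $\tilde V = U^*VU$, the identity $(U^*V^*U)_{ij} = \overline{\tilde V_{ji}}$ yields
\begin{align*}
J_{12}(U) = |\tilde V_{12}|^2 + \gamma |\tilde V_{21}|^2,\qquad
J_{21}(U) = |\tilde V_{21}|^2 + \gamma |\tilde V_{12}|^2.
\end{align*}
Setting $a = |\tilde V_{12}|^2$ and $b = |\tilde V_{21}|^2$, I would then compute
\begin{align*}
\frac{J_{12}(U)-J_{21}(U)}{J_{12}(U)+J_{21}(U)}
=\frac{(a-b)(1-\gamma)}{(a+b)(1+\gamma)},
\end{align*}
valid whenever $a+b>0$.

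Next I would split into two cases according to whether $V$ is normal. If $V$ is normal, then it can be diagonalized by some unitary, which (after multiplying by a global phase) lies in $\SU(2)$; for this choice $a=b=0$, so $J_{12}(U)=J_{21}(U)=0$ and hence $(0,0)\in\mf Q$. By the lemma preceding Lemma~\ref{lemma:stab}, every $\lambda\in[0,1]$ is then stabilizable, so $\lambda^\star=1$.

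If $V$ is not normal, Schur's theorem provides a unitary (again adjustable to lie in $\SU(2)$) for which $\tilde V$ is upper triangular; non-normality forces $\tilde V_{12}\neq 0$, so for this specific $U$ we get $b=0$ and $a>0$, hence
\begin{align*}
\frac{J_{12}(U)-J_{21}(U)}{J_{12}(U)+J_{21}(U)} = \frac{1-\gamma}{1+\gamma}.
\end{align*}
The elementary bound $(a-b)/(a+b)\leq 1$ shows this is the maximum value of $A/B$ over $\mf Q$, and Lemma~\ref{lemma:stab} then gives
\begin{align*}
\lambda^\star = \tfrac{1}{2}\Big(1+\tfrac{1-\gamma}{1+\gamma}\Big) = \tfrac{1}{1+\gamma}.
\end{align*}

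The only subtlety worth checking is that Schur's theorem provides a unitary in $\U(2)$, not necessarily $\SU(2)$, but since $J_{ij}(U)$ depends on $U$ only through conjugation this distinction is harmless. No step looks genuinely difficult; the main insight is simply that pairing $V$ with $\sqrt{\gamma}V^*$ symmetrizes the expressions for $J_{12}$ and $J_{21}$ in $(a,b)$, so the optimum of $\lambda^\star$ is determined by driving $\tilde V$ as close to triangular as possible.
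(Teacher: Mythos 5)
Your proof is correct and follows essentially the same route as the paper: both hinge on the symmetrization $J_{12}(U)=a+\gamma b$, $J_{21}(U)=b+\gamma a$ with $a=|\tilde V_{12}|^2$, $b=|\tilde V_{21}|^2$, bound the relevant ratio by an elementary inequality (the paper phrases this as the mediant inequality for $t(U)=J_{21}(U)/J_{12}(U)$, which is equivalent to your bound $(a-b)/(a+b)\leq 1$), achieve the optimum via Schur triangularization, and conclude with Lemma~\ref{lemma:stab}. Two minor differences: your computation handles $\gamma=0$ uniformly, whereas the paper treats that case separately by citing coolability, and before invoking Lemma~\ref{lemma:stab} you should state explicitly that $(0,0)\notin\mf Q$ when $V$ is not normal --- which is immediate from your own formulas, since $J_{12}(U)+J_{21}(U)=(a+b)(1+\gamma)=0$ would force $a=b=0$, i.e.\ that $U$ diagonalizes $V$.
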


\begin{proof}
By Lemma~\ref{lemma:unital-stab}~\ref{it:unit-stab-normal}, if $V$ is normal, then the system is unital stabilizable, and so the stabilizable region is $[0,1]$. 
Now assume that $V$ is not normal, then, by the same lemma, $(0,0)\notin\mf Q$. 
First we consider the case $\gamma=0$. 
Then, as shown in~\cite[Sec.~V]{OptimalCooling24}, the system is coolable, and again the stabilizable region is $[0,1]$. 
Now let $\gamma\in(0,1]$.
Note that for any $U\in\SU(2)$, if we denote $t(U)=\frac{J_{21}(U)}{J_{12}(U)}$, the intersection of the corresponding line with the abscissa is $\frac{J_{12}(U)}{J_{12}(U)+J_{21}(U)} = \frac{1}{1+t(U)}$ by Lemma~\ref{lemma:stab}.  
Hence it suffices to find the minimal value of the ratio $t(U)$. 
If we denote by $J'(U)$ the matrix corresponding to $V$ then
$t(U)=\frac{J_{21}(U)}{J_{12}(U)}=\frac{J_{21}'(U)+\gamma J_{12}'(U)}{J_{12}'(U) + \gamma J_{21}'(U)}$.
Since all quantities are non-negative, and $\gamma\leq1$, it follows from the mediant inequality that $\gamma\leq t(U)\leq\tfrac1\gamma$.
Moreover $t(U)=\gamma$ is achieved when $V$ is in upper triangular form, which is always possible. 
\end{proof}
\noindent Note that when $\gamma=1$ the system is unital (cf.~Appendix~\ref{app:unital}), and if $\gamma=0$ we obtain a rank one system, see~\cite[Sec.~V]{OptimalCooling24}.

\smallskip
In Lemma~\ref{lemma:stab} we have described the stabilizable spectra in the reduced control system, that is, in $[0,1]$. 
However we can also explicitly describe the set of stabilizable density matrices. 
In~\cite{Lapert13} this was done for the special case of the Bloch equations obtaining an ellipsoid, which we will also recover in Section~\ref{subsec:bloch-eq}. 

A state $\rho\in\mf{pos}_1(2)$ is called \emph{stabilizable} if there exists a control Hamiltonian $H_c$ turning $\rho$ into a fixed point, that is, if $-L(\rho)-\iu[H_c,\rho]=0$. 
The following result connects the two notions of stability and is a restatement of~\cite[Prop.~3.2]{LindbladReduced23}.

\begin{proposition} \label{prop:stab}
Assume that $-L$ is a non-unital Lindblad generator and let $\lambda\neq\tfrac12$.
Then the following are equivalent:
\begin{enumerate}[(i)]
\item There is some $U$ such that $-Q_U(\lambda)=0$, that is, $\lambda$ is stabilizable for the reduced control system~\eqref{eq:reduced}.
\item There is some $U$ and $H_c$ such that for $\rho=U\iota(\lambda)U^*$ we have $-L(\rho)-\iu[H_c,\rho]=0$, that is, $\rho$ is stabilizable for the full control system~\eqref{eq:bilin}.
\end{enumerate}
Moreover any unitary satisfying one of the above also satisfies the other, and given $\lambda$ and $U$ one can compute a corresponding \emph{compensating Hamiltonian} $\iu H_c=\ad_{\rho}^+(L(\rho))$, where $\ad_\rho(\cdot)=[\rho,\cdot]$ and $(\cdot)^+$ denotes the Moore--Penrose pseudoinverse.
\end{proposition}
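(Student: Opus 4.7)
The plan is to analyze the stabilization equation $-L(\rho) - \iu[H_c, \rho] = 0$ in the eigenbasis of $\rho = U\iota(\lambda)U^*$ by decomposing it into diagonal and off-diagonal parts. Conjugating the equation by $U^*$, it suffices to treat the case $U = \mathds1$ after replacing $L$ by the generator with Hamiltonian $U^*H_0U$ and Lindblad terms $U^*V_kU$, which sends $Q_U$ to $Q_\mathds1$; equivalently, all subsequent computations are carried out in the eigenbasis of $\rho$.

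In this basis, $\ad_{\iota(\lambda)}$ satisfies $[\iota(\lambda), E_{ij}] = (\lambda_i - \lambda_j)E_{ij}$ with $\lambda_1 - \lambda_2 = 2\lambda - 1 \neq 0$, so its kernel consists exactly of the diagonal matrices and its image of the off-diagonal matrices. In particular the right-hand side $\iu[H_c, \iota(\lambda)]$ is always off-diagonal, so the equation is solvable for some Hermitian $H_c$ if and only if $L(\iota(\lambda))$ itself is off-diagonal, i.e.\ its diagonal vanishes.

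The key step is to identify this diagonal with $-Q_\mathds1(\lambda)$. The Hamiltonian contribution $\iu[H_0, \iota(\lambda)]$ is manifestly off-diagonal and contributes nothing. A direct computation of the dissipative contribution applied to the diagonal state $\iota(\lambda)$ (essentially Lemma~B.2 of~\cite{LindbladReduced23}) gives its $(1,1)$-entry as $J_{12}(\mathds1) - \lambda(J_{12}(\mathds1) + J_{21}(\mathds1)) = -Q_\mathds1(\lambda)$, and its $(2,2)$-entry is the negative by tracelessness. Hence the diagonal of $-L(\iota(\lambda))$ vanishes precisely when $-Q_\mathds1(\lambda) = 0$, which proves the equivalence and shows that the same unitary $U$ witnesses both conditions.

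Finally, when $-Q_U(\lambda) = 0$ we have $L(\rho) \in \img \ad_\rho$; since $\ad_\rho$ exchanges Hermitian and anti-Hermitian matrices, the Moore--Penrose pseudoinverse $\ad_\rho^+(L(\rho))$ is anti-Hermitian, so writing it as $\iu H_c$ yields a genuine Hermitian compensating Hamiltonian, and by construction it is the canonical solution orthogonal to $\ker \ad_\rho$. The main technical obstacle is the diagonal-entry computation in the previous paragraph, where the specific definitions of $J_{ij}(U)$ and $Q_U$ have to be matched to the dissipative part of $L$; the rest is routine $2 \times 2$ linear algebra.
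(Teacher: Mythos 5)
Your proof is correct. Note that the paper itself gives no argument for this proposition --- it is stated as a restatement of Prop.~3.2 of~\cite{LindbladReduced23} --- so there is no in-text proof to compare against; your write-up supplies exactly the argument that underlies the cited result. All the steps check out: conjugating to the eigenbasis is legitimate because the GKSL form transforms covariantly ($\tilde V_k = U^*V_kU$, $\tilde H_0 = U^*H_0U$) and the paper's $J_{ij}(U)$ is by definition $\tilde J_{ij}(\id)$ for the transformed terms; for $\lambda\neq\tfrac12$ the image of $\ad_{\iota(\lambda)}$ is precisely the off-diagonal matrices, so solvability in Hermitian $H_c$ reduces to vanishing of the diagonal of $L(\iota(\lambda))$; and the key entry computation is right, since $\bigl(V\iota(\lambda)V^* - \tfrac12\{V^*V,\iota(\lambda)\}\bigr)_{11} = (1-\lambda)|V_{12}|^2 - \lambda|V_{21}|^2$, which summed over $k$ gives $J_{12}(\id)-\lambda(J_{12}(\id)+J_{21}(\id)) = -Q_{\id}(\lambda)$, with the $(2,2)$-entry its negative by trace preservation and the Hamiltonian commutator contributing nothing on the diagonal. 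Your final paragraph is also sound: the stabilization equation is $\ad_\rho(\iu H_c) = L(\rho)$, and since $\ad_\rho$ is self-adjoint with respect to the Hilbert--Schmidt inner product and swaps the Hermitian and anti-Hermitian subspaces, $\ad_\rho^+$ does too, so $\ad_\rho^+(L(\rho))$ is anti-Hermitian and $\ad_\rho\ad_\rho^+(L(\rho)) = L(\rho)$ exactly when $L(\rho)\in\img\ad_\rho$, i.e.\ when the diagonal obstruction $-Q_U(\lambda)$ vanishes. This simultaneously yields both directions of the equivalence for each fixed $U$, hence the ``same unitary'' claim as well.
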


\begin{remark}
Whenever $\lambda=1/2$, it is clear that $\lambda$ is stabilizable in the reduced system, but in general it is not (exactly) stabilizable in the full system.
Indeed, in this case the formula for the compensating Hamiltonian might diverge.
The only exception is the case of unital $-L$, where the maximally mixed state is always a fixed point independent of the applied controls.
\end{remark}

The set of stabilizable states in the Bloch ball can be nicely parametrized using a parametrization of $\mf Q$.
Assume that $(0,0)\notin\mf Q$ and let $(x,z)\mapsto F(x,z)$ be the parametrization of $\mf Q$ as in~\eqref{eq:sol-param}. 
By construction, the line represented by $F(x,z)$ corresponds to the affine linear vector field obtained from $-L$ when restricting to and projecting onto the axis passing through the point with polar angle $\theta=2\pi x$ and azimuthal angle $\phi=2\pi z$. 
A stabilizable point on the axis is exactly a zero of this vector field.
Together with Lemma~\ref{lemma:stab} this shows the following result:

\begin{lemma}  \label{lemma:stab-ball}
The stabilizable set can be parametrized as 
$r(\theta,\phi)=\frac12 \frac{J_{12}(U_{x,z})-J_{21}(U_{x,z})}{J_{12}(U_{x,z})+J_{21}(U_{x,z})}$
in spherical coordinates $(r,\theta,\phi)$, where $2\pi x=\theta$ and $2\pi z=\phi$.
\end{lemma}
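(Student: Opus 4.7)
The plan is to chain the pieces already established: Proposition~\ref{prop:stab} reduces stabilizability of a density matrix $\rho$ to stabilizability of its spectrum $\lambda$ with respect to a specific unitary $U$, and then the explicit form of the induced vector field in~\eqref{eq:line} pins down $\lambda$ uniquely in terms of $J_{12}(U)$ and $J_{21}(U)$. What remains is purely geometric: identify the action of $U_{x,z}$ on the Bloch ball as a rotation carrying the north pole to the direction with spherical angles $(\theta,\phi)=(2\pi x,2\pi z)$.

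First I would use Proposition~\ref{prop:stab} to write any stabilizable state $\rho\neq\mathds1/2$ as $\rho = U\iota(\lambda)U^*$ for some $U\in\SU(2)$ and some $\lambda\in[0,1]$ satisfying $-Q_U(\lambda)=0$. Solving
$$
-Q_U(\lambda)=J_{12}(U)-\lambda(J_{12}(U)+J_{21}(U))=0
$$
yields $\lambda = J_{12}(U)/(J_{12}(U)+J_{21}(U))$ whenever the denominator is nonzero, and hence the signed distance of $\rho$ from the center of the Bloch ball along the axis determined by $U$ is
$$
\lambda-\tfrac12=\tfrac12\,\frac{J_{12}(U)-J_{21}(U)}{J_{12}(U)+J_{21}(U)}.
$$
The degenerate case $J_{12}(U)=J_{21}(U)=0$ corresponds exactly to the origin in $\mf{Q}$, which is treated separately in the unital stabilizable setting of Appendix~\ref{app:unital}, so it can be excluded here.

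Next I would verify the angular interpretation of $U_{x,z}=\exp(\iu\pi z\,\sigma_z)\exp(\iu\pi x\,\sigma_x)$. Via the covering map $\SU(2)\to\SO(3)$, conjugation by $\exp(\iu\alpha\,\sigma_k/2)$ implements a rotation by angle $\alpha$ around the $k$-th Bloch axis. Applied to the north-pole direction $\sigma_z$, the factor $\exp(\iu\pi x\,\sigma_x)$ tilts it by polar angle $2\pi x$ in the $yz$-plane, and then $\exp(\iu\pi z\,\sigma_z)$ rotates the result by $2\pi z$ around the $z$-axis. Hence $U_{x,z}\sigma_z U_{x,z}^*$ points in the direction with polar angle $\theta=2\pi x$ and azimuthal angle $\phi=2\pi z$, as claimed in~\eqref{eq:sol-param}.

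Finally, writing $\rho = \tfrac12\mathds1+(\lambda-\tfrac12)\,U_{x,z}\sigma_z U_{x,z}^*$ shows that the stabilizable state sits on the axis through the origin in direction $(\theta,\phi)$ at signed radius $\lambda-\tfrac12$, yielding the stated formula for $r(\theta,\phi)$. The only real subtlety will be the bookkeeping of sign conventions and parameter ranges: for $x\in[0,1/2]$ and $z\in[0,1)$ the parametrization covers each axis of the Bloch ball (up to the antipodal symmetry already reflected in Lemma~\ref{lemma:sol-basic}), and allowing $r$ to be signed ensures that both ends of each diameter are captured, so that as $(x,z)$ ranges over the parameter domain one sweeps out the entire stabilizable set. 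Once this matching of conventions is in place, the formula is an immediate consequence of Lemma~\ref{lemma:stab} combined with Proposition~\ref{prop:stab}.
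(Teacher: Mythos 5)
Your proposal is correct and takes essentially the same route as the paper, which likewise combines the zero of the affine vector field $-Q_U$ (via Lemma~\ref{lemma:stab}) with the interpretation of $U_{x,z}$ as carrying the north-pole axis to the axis with polar angle $2\pi x$ and azimuthal angle $2\pi z$, the full-system lift resting on Proposition~\ref{prop:stab}. Your version merely spells out the details the paper leaves implicit (solving $-Q_U(\lambda)=0$ explicitly, the signed-radius bookkeeping, and the covering-map computation, where the azimuth is really $2\pi z$ only up to a fixed offset absorbed by the choice of reference meridian, as in the paper's own convention at the end of Section~\ref{sec:generators}).
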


In the Bloch ball, the shape of the set of stabilizable states is some kind of ovoid, and in some special cases it is an ellipsoid.
Moreover, the intersection of this set with any plane containing the $z$-axis (after an appropriate change of basis) is in fact an ellipse, cf.~Proposition~\ref{prop:potato}.

\subsection{Optimal controls} \label{sec:optimal-controls}

The main task of interest is to determine which states are reachable from a given initial state, and to compute the fastest path to reach such a state, together with the corresponding controls.
The Equivalence Theorem, see~\cite[Thm.~2.6]{LindbladReduced23}, allows us to work on the level of the reduced control system~\eqref{eq:reduced} and then to lift the obtained result to the full control system~\eqref{eq:bilin}.

Indeed, since the reduced state space $[0,1]$ is one-dimensional, the shortest path is always uniquely defined up the speed at which the path is traversed.
Clearly the maximal (positive) velocity a solution can achieve at $\lambda$ is the optimal derivative $\mu(\lambda)$, and hence the optimal solution $\lambda(t)$ is obtained by integrating along $\mu$. 
Since there are only two directions in the one-dimensional reduced state space, there are only two tasks to consider, namely optimal heating (mixing) and optimal cooling (purifying).
These problems can be solved together by determining the fastest path from $0$ to the purest stabilizable state $\lambda^\star$ in the reduced state space $[0,1]$.

In this section we outline the general approach we will take to solve this problem, and the concrete results will be presented in the subsequent sections.
As stated at the end of Section~\ref{sec:generators}, we will parametrize the space of generators $\mf Q$ using the unitaries $U_{x,z}=\exp(\iu\pi z \sigma_z)\exp(\iu\pi x \sigma_x)$ where the parameter ranges are $x\in[0,1]$ and $z\in[0,1)$.
Then, due to Proposition~\ref{prop:polarity} we are mainly interested in the boundary of $\mf Q$.
In general we will parametrize the boundary using the curves
\begin{align*}
J^+\to\R^2, \,\alpha\mapsto(a^+(\alpha), b^+(\alpha)) \qquad
J^-\to\R^2, \,\alpha\mapsto(a^-(\alpha), b^-(\alpha))
\end{align*}
defined on some intervals $J^+$ and $J^-$ for the upper (corresponding to heating) and lower (corresponding to cooling) part of the boundary respectively.
It is important to also determine the unitaries, or more precisely the values of the parameters $x,z$, which achieve the boundary points of $\mf Q$. 
For this we will determine parametrizations $x^\pm:J^\pm\to\R^2$ and $z^\pm:J^\pm\to\R^2$.

\begin{lemma} \label{lemma:analytic-params}
Given parametrizations $a^\pm, b^\pm, x^\pm, z^\pm$ as above, and assuming that $(a^\pm,b^\pm)$ is differentiable with non-zero derivative, one can find parametrizations
$\lambda^\pm(\alpha)$, $\mu^\pm(\alpha)$ and $\rho^\pm(\alpha)$ such that $(\lambda^\pm,\mu^\pm)$ parametrizes the graph of $\mu:[0,1]\to\R$ and such that $\rho^\pm(\alpha)$ parametrizes the optimal path through the Bloch ball.
Indeed we have the following expressions:
\begin{gather*}
\lambda^\pm(\alpha)=\tfrac12\big(1+\tfrac{a'(\alpha)}{b'(\alpha)}\big), \quad
\mu^\pm(\alpha)=\tfrac12(a^\pm(\alpha)+b^\pm(\alpha))-b^\pm(\alpha)\lambda^\pm(\alpha), \\
\rho^\pm(\alpha)=
U_{x^\pm(\alpha),z^\pm(\alpha)}\iota(\lambda^\pm(\alpha))U^*_{x^\pm(\alpha),z^\pm(\alpha)}.
\end{gather*}
\end{lemma}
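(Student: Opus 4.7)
The plan is to use Proposition~\ref{prop:polarity}, which identifies $\mu(\lambda)$ as the maximum of the linear functional $(a,b)\mapsto\tfrac12(a+(1-2\lambda)b)$ over $\mf Q$, and to extract the claimed formulas as first-order optimality conditions along the given boundary parametrization.

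First I would fix $\alpha\in J^+$ and observe that the boundary point $(a^+(\alpha),b^+(\alpha))$ realizes $\mu(\lambda)$ for some particular value $\lambda=\lambda^+(\alpha)\in[0,1]$, thanks to Proposition~\ref{prop:polarity}. Since the map $\alpha\mapsto\tfrac12(a^+(\alpha)+(1-2\lambda^+(\alpha))b^+(\alpha))$ is differentiable and attains its maximum at the given $\alpha$, the first-order necessary condition reads $a^{+\prime}(\alpha)+(1-2\lambda^+(\alpha))b^{+\prime}(\alpha)=0$. The non-degeneracy hypothesis ensures $b^{+\prime}(\alpha)\neq 0$ on the relevant boundary piece (otherwise the condition would force $a^{+\prime}(\alpha)=0$ as well, contradicting the assumption), so one can solve to obtain $\lambda^+(\alpha)=\tfrac12\bigl(1+a^{+\prime}(\alpha)/b^{+\prime}(\alpha)\bigr)$. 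Substituting this back into the linear objective yields $\mu^+(\alpha)=\tfrac12(a^++b^+)-b^+\lambda^+$, which by Proposition~\ref{prop:polarity} equals $\mu(\lambda^+(\alpha))$, so $(\lambda^+,\mu^+)$ parametrizes the relevant portion of the graph of $\mu$. The lower branch is handled analogously, using the reflection symmetry of $\mf Q$ established in Lemma~\ref{lemma:sol-basic} to relate the lower boundary of the graph of $\derv$ to $\mu$.

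For the Bloch ball trajectory $\rho^+(\alpha)$, the argument is essentially book-keeping. By hypothesis the unitary $U_{x^+(\alpha),z^+(\alpha)}$ realizes the boundary point $(a^+(\alpha),b^+(\alpha))\in\mf Q$, and hence also realizes the optimal derivative $\mu(\lambda^+(\alpha))$ in the reduced system at the state $\lambda^+(\alpha)$. Rotating the diagonal state $\iota(\lambda^+(\alpha))$ by this unitary places the density matrix on the axis along which the fastest increase (respectively slowest decrease) occurs at that eigenvalue, so $\rho^+(\alpha)$ traces out the time-optimal trajectory in $\mf{pos}_1(2)$; the Equivalence Theorem invoked after~\eqref{eq:reduced} then transfers this to an optimal trajectory of the full control system~\eqref{eq:bilin}.

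The main subtlety will not be the computation itself, which is a one-line application of polarity, but rather ensuring that the parametrization is globally coherent: as $\alpha$ varies over $J^\pm$, the induced value $\lambda^\pm(\alpha)$ should sweep out the appropriate subinterval of $[0,1]$ bijectively so that every relevant $\lambda$ is assigned a correct optimizer. This requires the input parametrizations $(a^\pm,b^\pm)$ to trace out exactly the portion of $\partial\mf Q$ supporting $\mu$ (and not, for example, a piece lying in the interior of the convex hull, which by the remark following Proposition~\ref{prop:polarity} is irrelevant to $\mu$). In the general setting this will have to be verified on a case-by-case basis in the concrete parametrizations derived later in the paper.
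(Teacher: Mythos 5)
Your proposal is correct and follows essentially the same route as the paper's proof: both invoke Proposition~\ref{prop:polarity} and extract $\lambda^\pm(\alpha)$ from the first-order optimality condition (the paper phrases it as orthogonality of the parametrization's derivative to the direction of maximization, which is the same identity $a'(\alpha)+(1-2\lambda)b'(\alpha)=0$), then substitute to get $\mu^\pm$ and observe that the realizing unitary yields $\rho^\pm$. Your closing remark on global coherence of the sweep over $J^\pm$ is a sound observation that the paper's proof leaves implicit and defers to the later restriction $J^\pm\subseteq I^\pm$.
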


\begin{proof}
By Proposition~\ref{prop:polarity} the value $\mu(\lambda)$ is obtained by maximizing a linear functional over $\mf Q$.
This maximum must be achieved on the boundary and hence there is some $\alpha$ such that $(a^\pm(\alpha),b^\pm(\alpha))$ achieves the maximum.
In this case it holds that the derivative of the parametrization is orthogonal to the direction of maximization, which immediately yields the expression for $\lambda^\pm(\alpha)$ and the expression for $\mu^\pm(\alpha)$ follows at once.
Then, by construction $\rho^\pm(\alpha)$ achieves the maximal eigenvalue derivative.
\end{proof}

The parametrizations of the previous lemma are completely analytic and allow to solve a significant part of the general problem.
The final goal is to determine the corresponding control functions for the full control system~\eqref{eq:bilin}.
Since these are functions of time all of the above quantities must also be expressed as a function of time.
To find the time parametrization $\alpha(t)$ one has to solve the following ordinary differential equation, where we omit the $\pm$ superscript for readability.

\begin{align} \label{eq:real-time-ode}
\alpha' = \frac{\mu(\alpha)}{\lambda'(\alpha)} =
\frac{a(\alpha)b'(\alpha)^2-a'(\alpha)b'(\alpha)b(\alpha)}
{b'(\alpha)a''(\alpha)-a'(\alpha)b''(\alpha)}.
\end{align}

Unfortunately it seems that in general the real time parametrization $\alpha(t)$ cannot be found analytically since the differential equation~\eqref{eq:real-time-ode} is too complicated.
Notable exceptions to this are however the Bloch equations (cf.~Section~\ref{subsec:bloch-eq}) and rank one systems~\cite[Sec.~V]{OptimalCooling24}.
For this reason numerical methods are indispensible.
Indeed for computational efficiency it can be beneficial to switch to a numerical representation of the functions right away instead of working with the analytical expressions which tend to become extremely convoluted.

Finally the corresponding control Hamiltonian can be found using the following result which is a simplified special case of~\cite[Prop.~3.10]{Reduced23}, but it can also easily be verified via direct computation. See also~\cite[Lem.~3.1]{Rooney12}.

\begin{proposition} \label{prop:lift}
Let $\rho:[0,T]\to\mf{pos}_1(2)\setminus\{\id/2\}$ be an achievable path of density matrices not passing through the maximally mixed state and let $\rho(t)=U(t)\iota(\lambda(t))U^*(t)$ be an eigenvalue decomposition of $\rho$. 
Then $\iu H(t) = -U'(t)U^*(t) + \ad_{\rho(t)}^+(L(\rho(t))$
is a path of skew-Hermitian matrices satisfying
$\dot\rho=-(\iu\ad_H+L)(\rho)$.
\end{proposition}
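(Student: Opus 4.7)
The plan is to verify the proposition by direct computation, checking two things: that $\iu H(t)$ is indeed skew-Hermitian, and that the Lindblad equation $\dot\rho = -\iu[H,\rho] - L(\rho)$ is satisfied by $\rho(t)$ with this choice of $H$.

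First I would verify skew-Hermitianness. Differentiating $UU^* = \id$ gives $\dot U U^* + U\dot U^* = 0$, so $(\dot U U^*)^* = -\dot U U^*$ and hence $-\dot U U^*$ is skew-Hermitian. For the second summand, one checks directly that $\ad_\rho$ interchanges Hermitian and skew-Hermitian matrices whenever $\rho$ is Hermitian, and this interchange property is inherited by the Moore--Penrose pseudoinverse $\ad_\rho^+$. Since $L(\rho)$ is Hermitian, $\ad_\rho^+(L(\rho))$ is skew-Hermitian, so $\iu H$ is a sum of two skew-Hermitian matrices and hence itself skew-Hermitian.

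Next I would verify the differential equation. Differentiating $\rho = U\iota(\lambda)U^*$ and using $U\dot U^* = -\dot U U^*$ gives
\[
\dot\rho = [\dot U U^*, \rho] + U\iota'(\lambda)\dot\lambda U^*.
\]
On the other side, substituting $\iu H = -\dot U U^* + \ad_\rho^+(L(\rho))$ into $-\iu[H,\rho] - L(\rho) = -[\iu H,\rho] - L(\rho)$ produces $[\dot U U^*, \rho] + \ad_\rho(\ad_\rho^+(L(\rho))) - L(\rho)$. By standard properties of the pseudoinverse, $\ad_\rho\ad_\rho^+$ is the orthogonal projection onto the range of $\ad_\rho$, and since $\ad_\rho$ is self-adjoint in the Hilbert--Schmidt inner product (by cyclicity of the trace together with $\rho^* = \rho$), this range is exactly the orthogonal complement of $\ker(\ad_\rho)$. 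So the rotational commutator matches on both sides, and the required identity reduces to
\[
U\iota'(\lambda)\dot\lambda U^* = -P_{\ker(\ad_\rho)}(L(\rho)).
\]
Since the path avoids $\id/2$, one has $\lambda(t)\neq 1/2$ and $\rho(t)$ has simple spectrum, so $\ker(\ad_\rho)$ consists precisely of the matrices diagonal in the eigenbasis given by $U$; the left-hand side $U\sigma_z\dot\lambda U^*$ manifestly lies in this kernel.

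The last identity is the one place where the achievability hypothesis genuinely enters. By assumption there exists some auxiliary Hamiltonian $\tilde H(t)$ with $\dot\rho = -\iu[\tilde H, \rho] - L(\rho)$. Projecting this equation onto $\ker(\ad_\rho)$ kills the commutator $-\iu[\tilde H, \rho]$ (which lies entirely in the range of $\ad_\rho$) and leaves precisely the required identity, because the expression for $\dot\rho$ above shows that its component in $\ker(\ad_\rho)$ is exactly $U\iota'(\lambda)\dot\lambda U^*$. The main obstacle is really just careful bookkeeping of the orthogonal decomposition via the pseudoinverse; the argument relies crucially on $\rho(t)\neq\id/2$, since otherwise $\ad_\rho$ would vanish, the pseudoinverse would be trivial, and the splitting of the dynamics into an eigenvalue-change part and an eigenbasis-rotation part would break down.
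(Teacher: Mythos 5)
Your proof is correct and is precisely the ``direct computation'' the paper alludes to (the paper itself gives no inline proof, deferring instead to \cite[Prop.~3.10]{Reduced23}): the reduction of the identity $\dot\rho=-(\iu\ad_H+L)(\rho)$ to the kernel-component equation $U\iota'(\lambda)\dot\lambda\, U^* = -P_{\ker\ad_\rho}(L(\rho))$, via self-adjointness of $\ad_\rho$ in the Hilbert--Schmidt inner product and $\ad_\rho\ad_\rho^+ = P_{\operatorname{ran}\ad_\rho}$, is sound, and you correctly isolate that the achievability hypothesis is exactly what forces this last identity (by projecting the true equation of motion onto $\ker\ad_\rho$), with the assumption $\rho(t)\neq\id/2$ guaranteeing the simple spectrum needed to identify the kernel. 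No gaps.
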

\noindent Note that the second term in the definition is analogous to the definition of the compensating Hamiltonian $H_c$ in Proposition~\ref{prop:stab}.

For our optimal control task the solution will cross the maximally mixed state $\{\id/2\}$ at one point in time.
The problem that occurs in trying to apply Proposition~\ref{prop:lift} is that the compensation term might blow up.
Luckily, as we will see in the following sections, this does not happen and the control functions tend to be very well behaved.
Nonetheless diverging controls can and do occur in certain special cases, such as the Bloch equations and rank one systems mentioned above.
In these cases the optimal solution is not differentiable at one point, where it takes a sharp turn, and the direct term diverges.
Still one can cut off this divergence at the price of an arbitrarily small error.

\section{General systems} \label{sec:general-systems}

The goal of this section is to implement the program set out in the previous sections.
We consider a qubit system described by an arbitrary finite set of Lindblad terms $V_k$ for $k=1,\ldots,r$. 
The main result is an analytical parametrization of the space of generators $\mf Q$, see Figure~\ref{fig:general-lines-points} for an example.
As a consequence we can determine the stabilizable states in the Bloch ball and the optimal controls in the original control system, cf.\ Figure~\ref{fig:general-opt-path}.
We will assume that $-L$ is not unital, since this special case is considered in Appendix~\ref{app:unital}.

\subsection{Parametrization} \label{sec:gen-param}

Let $\{V_k\}_{k=1}^r$ be a finite set of Lindblad terms. 
Without loss of generality we assume that all $V_k$ are traceless and that $\sum_{k=1}^r [V_k,V_k^*]$ is diagonal. 
We define the characteristic values
\begin{align*}
\Delta = J_{12}(\mathds1)-J_{21}(\mathds1), \quad
\Sigma = J_{12}(\mathds1)+J_{21}(\mathds1), \quad
\delta = 2J_{11}(\mathds1)-\Sigma/2,
\end{align*}
Furthermore we set
\begin{align*}
r_1 e^{i\phi_1} = \iu \sum_{k=1}^r \overline{(V_k)_{12}} (V_k)_{21},
\quad r_2 e^{i\phi_2} 
=4\sum_{k=1}^r \overline{(V_k)_{11}}(V_k)_{21},
\quad\phi&=\phi_1-2\phi_2+\pi/2.
\end{align*}
By choosing the basis appropriately we can always make sure that additionally $\Delta\geq0$, and in the following we will always assume that this is the case.\footnote{Since we assume that the system is non-unital it even holds that $\Delta>0$.}
The first crucial property is that these values are actually well-defined.

\begin{lemma} \label{lemma:well-def}
Assume that $-L\in\wkl$ is non-unital.
Then the values $|\Delta|$, $\Sigma$, $\delta$, $r_1$, and $r_2$ are well-defined, and if $r_1$ and $r_2$ are non-zero, then $\phi\mod2\pi$ is also well-defined.
More precisely this means that these values only depend on the generator $-L$, and not on the choice of $V_k$ or on the basis which diagonalizes $\sum_{k=1}^r [V_k,V_k^*]$.
\end{lemma}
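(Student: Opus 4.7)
The plan is to classify all pairs (basis of $\C^2$, family of Lindblad terms) representing the same generator $-L$ and satisfying the standing assumptions (tracelessness of $V_k$, $\sum_k[V_k,V_k^*]$ diagonal, $\Delta\geq 0$), and then to verify that each quantity in the lemma is invariant under this freedom.

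First I would identify the freedom. With the identity-shift $V_k\to V_k+c_k\id$ killed by tracelessness, the standard Lindblad gauge freedom leaves only a unitary mixing $V_k'=\sum_l u_{kl}V_l$ with $u$ unitary (after padding with zero operators as needed), which preserves both $\sum_k V_k(\cdot)V_k^*$ and $\sum_k V_k^*V_k$, hence $-L$. Independently, there is a choice of orthonormal basis of $\C^2$, i.e., a conjugation $V_k\to U V_k U^*$ with $U\in\SU(2)$. The operator $\sum_k[V_k,V_k^*]$ is traceless Hermitian and, by non-unitality, non-zero, so its eigenvalues are $\pm|\Delta|$ with $|\Delta|>0$; requiring this sum to remain diagonal and $\Delta\geq 0$ uniquely fixes the ordering of the eigenvectors, leaving only the torus $U=\diag(e^{\iu\alpha_1},e^{\iu\alpha_2})$ of phase choices on the two basis vectors.

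The second step is the invariance check, and each side of the freedom is handled by one calculation. Under $V_k'=\sum_l u_{kl}V_l$, every bilinear sum $\sum_k\overline{(V_k)_{ij}}(V_k)_{i'j'}$ is manifestly invariant because $\sum_k u_{kl}\overline{u_{km}}=\delta_{lm}$; this immediately handles all $J_{ij}(\id)$ (and hence $|\Delta|,\Sigma,\delta$) as well as the complex sums defining $r_1e^{\iu\phi_1}$ and $r_2e^{\iu\phi_2}$. Under the phase change $U=\diag(e^{\iu\alpha_1},e^{\iu\alpha_2})$ the entries transform as $(V_k)_{ij}\mapsto e^{\iu(\alpha_i-\alpha_j)}(V_k)_{ij}$; all moduli $|(V_k)_{ij}|^2$ are preserved, so $J_{ij}(\id),|\Delta|,\Sigma,\delta,r_1,r_2$ are again invariant, while a short computation gives $\phi_1\mapsto\phi_1+2(\alpha_2-\alpha_1)$ and $\phi_2\mapsto\phi_2+(\alpha_2-\alpha_1)$. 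Substituting these into $\phi=\phi_1-2\phi_2+\pi/2$ the $\alpha_i$-dependence cancels, yielding well-definedness of $\phi\bmod 2\pi$ whenever $r_1,r_2\neq 0$.

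The main obstacle is not deep but requires careful bookkeeping: one must track how $r_je^{\iu\phi_j}$ responds to the phase torus, and observe that the specific linear combination $\phi_1-2\phi_2$ in the definition of $\phi$ is chosen precisely so that the shifts $2(\alpha_2-\alpha_1)$ and $(\alpha_2-\alpha_1)$ cancel against each other. The coefficient $2$ in front of $\phi_2$ in the definition of $\phi$ is exactly what this cancellation demands, which is what makes the specific combination the intrinsic one.
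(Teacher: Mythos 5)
Your proposal is correct and takes essentially the same route as the paper's own proof: fix tracelessness via the Hamiltonian freedom, reduce the remaining representation freedom to unitary reshuffling of the $V_k$ (which leaves all bilinear sums $\sum_k\overline{(V_k)_{ij}}(V_k)_{i'j'}$ invariant) plus basis changes preserving diagonality of $\sum_k[V_k,V_k^*]$, which by non-unitality (distinct eigenvalues $\pm|\Delta|$) and the $\Delta\geq0$ convention are exactly the diagonal unitaries, whose phase shifts move $\phi_1$ and $\phi_2$ but cancel in $\phi=\phi_1-2\phi_2+\pi/2$. The paper's proof is merely terser, citing an external reshuffling lemma and leaving the phase bookkeeping implicit, whereas you spell out the torus action and the transformation laws $\phi_1\mapsto\phi_1+2(\alpha_2-\alpha_1)$, $\phi_2\mapsto\phi_2+(\alpha_2-\alpha_1)$ explicitly.
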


\begin{proof}
We already assumed that the $V_k$ are traceless. This can always be done at the expense of a constant compensating Hamiltonian, hence we only have to show that the constants are invariant under unitary reshuffling, cf.~\cite[Lem.~C.3]{LindbladReduced23} or a change of basis keeping $\sum_{k=1}^r [V_k,V_k^*]$ diagonal.
It is easy to see that all quantities are invariant under unitary reshuffling, and the only allowed unitary basis transformations are those induced by diagonal unitaries, which can change $\phi_1$ and $\phi_2$ but not $\phi$.
\end{proof}
\noindent Using these characteristic values we can now give a parametrization of the space of generators $\mf Q$:

\begin{proposition}
\label{prop:general-parametrization}
The unitary $U_{x,z}=\exp(\iu\pi z\sigma_z)\exp(\iu\pi x\sigma_x)$ for $x\in[0,1/2]$, and $z\in[0,1)$ yields the point
$(J_{12}(U_{x,z})-J_{21}(U_{x,z}),J_{12}(U_{x,z})+J_{21}(U_{x,z})) 
= 
(\Delta\cos(2\pi x), F(x,z)),$
where
\begin{align*}
F(x,z) 
= 
\Sigma + (\delta + r_1\sin(4\pi z+\phi_1))\sin(2\pi x)^2
-r_2 \cos(2\pi x)\sin(2\pi x)\sin(2\pi z+\phi_2)
\,, 
\end{align*}
and hence we obtain the following parametrization of the space of generators
\begin{align*}
\mf Q=\{(\Delta\cos(2\pi x), \,F(x,z)):x\in[0,1/2], \,z\in[0,1)\}\,.
\end{align*}
\end{proposition}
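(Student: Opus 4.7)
The plan is to compute $J_{12}(U_{x,z}) \pm J_{21}(U_{x,z})$ directly, after first justifying why the displayed parameter range for $(x,z)$ exhausts the image. For the parameter range, I would observe that right-multiplication by a diagonal unitary $D = \diag(e^{\iu\alpha},e^{-\iu\alpha})$ only multiplies off-diagonal entries of $D^* U^* V_k U D$ by unimodular phases and leaves diagonal entries unchanged, so $J_{ij}(UD) = J_{ij}(U)$ for every $U$. Hence both functionals descend to the quotient $\SU(2)/T \cong S^2$, and it suffices to pick one representative per coset. Applying $U_{x,z}$ to $|e_1\rangle$ produces a vector whose associated Bloch-sphere point has polar angle $2\pi x$ and azimuth $2\pi z$ (up to a shift), so $(x,z) \in [0,1/2]\times[0,1)$ exhausts the quotient.

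For the first (``difference'') coordinate, the idea is to use the elementary identity $|W_{12}|^2 - |W_{21}|^2 = ([W,W^*])_{11}$ applied to $W = U^*V_k U$, together with the naturality relation $[U^*V_k U,(U^*V_k U)^*] = U^*[V_k,V_k^*]U$. Summing over $k$ yields
\[
J_{12}(U) - J_{21}(U) = \Big(U^*\sum_k [V_k,V_k^*]\,U\Big)_{11}.
\]
The standing hypothesis that $\sum_k [V_k,V_k^*]$ is diagonal, combined with tracelessness, forces it to equal $\diag(\Delta,-\Delta)$, so $J_{12}(U)-J_{21}(U) = \Delta(|U_{11}|^2-|U_{21}|^2)$, which for $U_{x,z}$ evaluates to $\Delta\cos(2\pi x)$.

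For the second (``sum'') coordinate, I would combine tracelessness of each $V_k$ with conjugation-invariance of the Frobenius norm to rewrite
\[
J_{12}(U) + J_{21}(U) = \sum_k\|V_k\|^2 \,-\, 2\sum_k|\langle\psi|V_k|\psi\rangle|^2,\qquad |\psi\rangle = U|e_1\rangle.
\]
Setting $U = U_{x,z}$ and expanding $\langle\psi|V_k|\psi\rangle$ in the entries of $V_k$ gives a linear combination of $V_{k,11}$, $V_{k,12}$, $V_{k,21}$ with coefficients $\cos(2\pi x)$ and $e^{\pm 2\iu\pi z}\sin(2\pi x)$. Squaring, summing over $k$, and organizing produces three kinds of quadratic invariants in the entries of the $V_k$: the diagonal norm $\sum_k|V_{k,11}|^2 = J_{11}(\id)$, the off-diagonal data $\Sigma$ together with the cross term $\sum_k V_{k,12}\overline{V_{k,21}}$ (which by definition of $r_1,\phi_1$ equals $\iu r_1 e^{-\iu\phi_1}$), and the mixed cross terms between $V_{k,11}$ and $V_{k,12}$ or $V_{k,21}$. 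Substituting $\delta = 2J_{11}(\id)-\Sigma/2$ then gathers the contributions into the claimed formula.

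The main obstacle I anticipate is verifying that the two mixed cross sums $\sum_k V_{k,11}\overline{V_{k,12}}$ and $\sum_k V_{k,11}\overline{V_{k,21}}$ collapse into a \emph{single} invariant pair $(r_2,\phi_2)$ rather than two independent quantities. This is precisely where the diagonality hypothesis on $\sum_k[V_k,V_k^*]$ enters nontrivially: reading off its $(1,2)$-entry, which must vanish, yields the identity $\sum_k V_{k,11}\overline{V_{k,21}} = \sum_k \overline{V_{k,11}}V_{k,12}$, showing that the two mixed sums are related by complex conjugation and hence both expressible through $(r_2,\phi_2)$, producing the single term $-r_2\cos(2\pi x)\sin(2\pi x)\sin(2\pi z+\phi_2)$. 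This also explains why the third invariant $\phi=\phi_1-2\phi_2+\pi/2$ of Lemma~\ref{lemma:well-def} does not appear in $F(x,z)$: it would only emerge if the two mixed sums were independent.
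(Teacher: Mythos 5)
Your proposal is correct, and it takes a genuinely different route from the paper. The paper's proof (Appendix~\ref{app:parametrization}) is brute force: it writes out $U_{x,z}$ entrywise, computes $(U_{x,z}^*V_kU_{x,z})_{12}$ and $(U_{x,z}^*V_kU_{x,z})_{21}$ for traceless $V_k=\big(\begin{smallmatrix}u_k&v_k\\w_k&-u_k\end{smallmatrix}\big)$, and expands the squared moduli using trigonometric identities. Your two structural identities shortcut most of this. For the difference coordinate, $|W_{12}|^2-|W_{21}|^2=([W,W^*])_{11}$ together with $[U^*V_kU,U^*V_k^*U]=U^*[V_k,V_k^*]U$ gives $J_{12}(U)-J_{21}(U)=\big(U^*\textstyle\sum_k[V_k,V_k^*]U\big)_{11}=\Delta(|U_{11}|^2-|U_{21}|^2)=\Delta\cos(2\pi x)$ in one stroke; in the paper this coordinate first carries the extra term $2\sin(2\pi x)\Im(e^{-2\pi\iu z}\sum_k(u_kv_k^*-w_ku_k^*))$, which is only killed at the end by invoking diagonality, so your version makes the $z$-independence conceptually automatic. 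For the sum coordinate, your traceless-norm identity $J_{12}(U)+J_{21}(U)=\sum_k\tr(V_k^*V_k)-2\sum_k|\braket{\psi|V_k|\psi}|^2$ with $\ket\psi=U\ket{e_1}$ is valid, and expanding the single quadratic form does reproduce $F(x,z)$ with the correct signs: the $r_1$ term comes from $\sum_k v_k\overline{w_k}=\iu r_1 e^{-\iu\phi_1}$, the $\delta$ term from $\delta=2\sum_k|u_k|^2-\Sigma/2$, and your key observation is exactly right that the vanishing $(1,2)$-entry of $\sum_k[V_k,V_k^*]$ yields $\sum_k u_k\overline{w_k}=\sum_k\overline{u_k}v_k$, collapsing the two mixed cross sums into the single pair $(r_2,\phi_2)$ — this is the same relation the paper uses, and it reconciles the appendix's expression $r_2e^{\iu\phi_2}=2\sum_k(u_kv_k^*+u_k^*w_k)$ with the main text's $4\sum_k\overline{u_k}w_k$. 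You additionally justify the parameter range via $J_{ij}(UD)=J_{ij}(U)$ for diagonal $D$ and the quotient $\SU(2)/T\cong S^2$, which the paper only asserts informally in Section~\ref{sec:generators}; this is a worthwhile addition. Two small caveats: the coefficients in $\braket{\psi|V_k|\psi}$ are $\cos(2\pi x)$ on $u_k$ and $\pm\tfrac\iu2 e^{\mp2\iu\pi z}\sin(2\pi x)$ on $v_k,w_k$ (your outline drops the factors $\pm\iu/2$, which matter for the signs of the $r_1,r_2$ terms), and your closing remark about $\phi$ is loose — $F$ does depend on $\phi_1$ and $\phi_2$ separately (these are basis-dependent quantities), and $\phi$ only becomes the operative invariant in the later reduction to $G_\phi$ in Lemma~\ref{lemma:to-G-phi}.
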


\begin{proof}
The details of this elementary but lengthy computation are given in Appendix~\ref{app:parametrization}. 
\end{proof}
\noindent Note that for fixed $z$, the parametrization can be seen as the graph of a function. 
See Figure~\ref{fig:general-lines-points} for an illustration.
The parametrization of $\mf Q$ leads directly to a parametrization of the stabilizable states in the Bloch ball as shown in Lemma~\ref{lemma:stab-ball}, see Figure~\ref{fig:general-opt-path} for an example.

\begin{proposition} \label{prop:potato}
The parametrization of the set of stabilizable states in the Bloch ball is given in spherical coordinates by
$$
r(\theta,\phi) 
= \frac{\Delta\cos(\theta)}{2F(\tfrac{\theta}{2\pi},\tfrac{\phi}{2\pi})}
$$
For fixed angle $\phi$ this is an ellipse in the upper halfplane going through the origin.
\end{proposition}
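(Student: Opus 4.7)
The first assertion is immediate by direct substitution: Lemma~\ref{lemma:stab-ball} expresses the stabilizable radius as $r = \tfrac{1}{2}(J_{12} - J_{21})/(J_{12} + J_{21})$ with $2\pi x = \theta$ and $2\pi z = \phi$, and Proposition~\ref{prop:general-parametrization} identifies the numerator and denominator with $\Delta\cos\theta$ and $F(\theta/(2\pi), \phi/(2\pi))$ respectively.

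For the ellipse assertion, I fix $\phi$ and pass to Cartesian coordinates $(X, Y) = (r\sin\theta, r\cos\theta)$ in the planar slice, with $Y$ pointing along the $z$-axis. Abbreviate $A = \Sigma$, $B = \delta + r_1\sin(2\phi + \phi_1)$, $C = r_2\sin(\phi + \phi_2)$, so that $F(\tfrac{\theta}{2\pi}, \tfrac{\phi}{2\pi}) = A + B\sin^2\theta - C\sin\theta\cos\theta$. The two parametric equations $X = \Delta\cos\theta\sin\theta/(2F)$ and $Y = \Delta\cos^2\theta/(2F)$ immediately yield
\[
\cos^2\theta = 2FY/\Delta, \quad \sin\theta\cos\theta = 2FX/\Delta, \quad X^2 + Y^2 = \Delta Y/(2F),
\]
the third identity obtained by summing the squares of the first two. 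Substituting $\cos^2\theta$ and $\sin\theta\cos\theta$ into the expression for $F$ and then eliminating the remaining factor of $F$ via the third identity should produce, after routine simplification, the implicit quadratic
\[
2(A+B)X^2 - 2CXY + 2AY^2 - \Delta Y = 0.
\]

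To identify this conic as a genuine ellipse, I argue by boundedness rather than computing the discriminant directly. The parametrized curve lies in the closed upper halfplane $\{Y \geq 0\}$, because $Y = \Delta\cos^2\theta/(2F) \geq 0$, using that $F > 0$ in the non-unital case: indeed $(0,0) \in \mf Q$ would force all Lindblad terms $V_k$ to be simultaneously diagonalizable and hence $-L$ to be unital, contrary to assumption. As a continuous image of the compact interval $[0, \pi]$, the curve is bounded, and a bounded component of a conic section is necessarily an ellipse (parabolas and hyperbolas being unbounded). Setting $\theta = \pi/2$ recovers the origin, and the sign change of $X$ across $\theta = \pi/2$ together with the fact that $(X, Y)$ coincides at $\theta = 0$ and $\theta = \pi$ shows that the parametrization traces the whole ellipse, not a proper arc. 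The main obstacle is the algebraic elimination producing the quadratic, which is bookkeeping-intensive; verifying positive definiteness of the quadratic part (equivalently $C^2 < 4A(A+B)$) directly in terms of the characteristic values would be delicate, and the boundedness argument nicely sidesteps it.
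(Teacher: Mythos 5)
Your setup and algebra are sound: with $A=\Sigma$, $B=\delta+r_1\sin(2\phi+\phi_1)$, $C=r_2\sin(\phi+\phi_2)$, the elimination does produce the quadratic $2(A+B)X^2-2CXY+2AY^2-\Delta Y=0$ (I verified: $F[1+\tfrac{2}{\Delta}(BY+CX)]=A+B$ combined with $2F(X^2+Y^2)=\Delta Y$ gives exactly this), and your positivity claim $F>0$ in the non-unital case is correct via Lemma~\ref{lemma:unital-stab}, since $J_{12}(U)+J_{21}(U)=0$ would force $(0,0)\in\mf Q$. The genuine gap is the final inference. Your parametrized curve is only a compact \emph{subset} of the conic, not shown to be a connected component of it, and a bounded arc of a hyperbola or parabola is perfectly possible; so ``a bounded component of a conic is an ellipse'' does not apply as stated. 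Your attempted patch --- the sign change of $X$ and the coincidence of endpoints showing the curve ``traces the whole ellipse'' --- presupposes the conclusion: a closed parametrization can retrace a bounded arc of an unbounded conic. Two repairs are at hand. First, surjectivity: substituting the ray $t(\sin\theta,\cos\theta)$ into the quadratic gives $2F(\theta)\,t^2-\Delta\cos(\theta)\,t=0$, whose unique nonzero root is $t=r(\theta)$; hence every point of the conic other than the origin is the second intersection of its chord through the origin and therefore lies in your image, so the conic equals the compact curve, is bounded, and (containing more than one point) must be a nondegenerate ellipse. Second, and contrary to your assessment that it is delicate, the discriminant check follows in two lines from the positivity you already established: $2F(\theta)=(2A+B)-B\cos(2\theta)-C\sin(2\theta)>0$ for all $\theta$ forces $2A+B>\sqrt{B^2+C^2}$, whence $4A(A+B)=(2A+B)^2-B^2>C^2$, i.e.\ the quadratic part is positive definite; together with the nonzero linear term $-\Delta Y$ (recall $\Delta>0$ for non-unital systems, which you should also state, since $\Delta=0$ collapses the curve to a point) this yields a genuine ellipse through the origin in $\{Y\geq0\}$.

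For context, the paper's own proof takes a different and shorter route: it recognizes $r(\theta)=\Delta\cos(\theta)/(2F)$ as the polar normal form of Lemma~\ref{lemma:ellipse-eq} with $\sin$ and $\cos$ swapped, and then invokes Lemma~\ref{lemma:ellipse-params}, whose hypothesis $2\alpha+\beta>1$ is precisely the positivity of $F$ above; this certifies the ellipse and additionally delivers its semiaxes and orientation explicitly, which your qualitative boundedness argument, even once repaired, does not provide.
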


\begin{proof}
The parametrization follows immediately from Lemma~\ref{lemma:stab-ball} and it is easy to see that for fixed angle $\phi$ the formula takes the form given in Lemma~\ref{lemma:ellipse-eq} (due to the definition of the polar angle $\theta$ all occurences of $\sin$ and $\cos$ in these formulas are swapped).
Note also that the shape of the ellipse can then be computed using Lemma~\ref{lemma:ellipse-params}.
\end{proof}

Even though the stabilizable set looks somewhat like an ellipsoid, its shape is more complicated.
Nevertheless, in some special cases, such as the Bloch equations treated in~\cite{Lapert13}, it indeed reduces to an ellipsoid, cf.\ Appendix~\ref{subsec:bloch-eq}.

\begin{figure}[thb]
\centering
\includegraphics[width=0.45\textwidth]{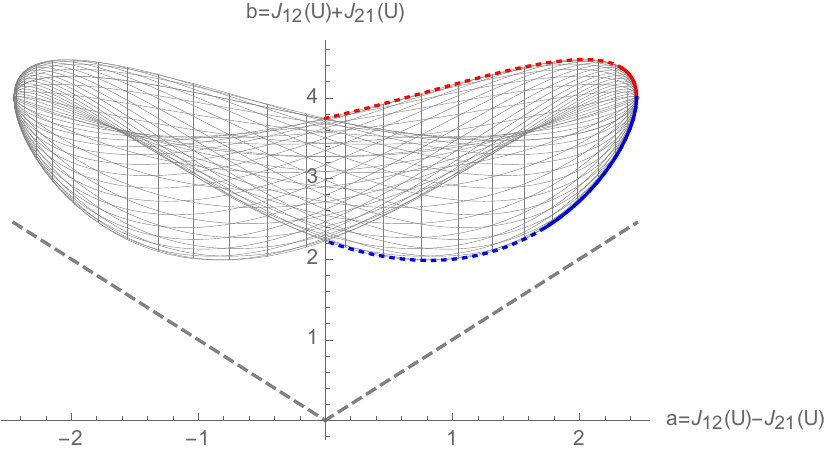}
\includegraphics[width=0.45\textwidth]{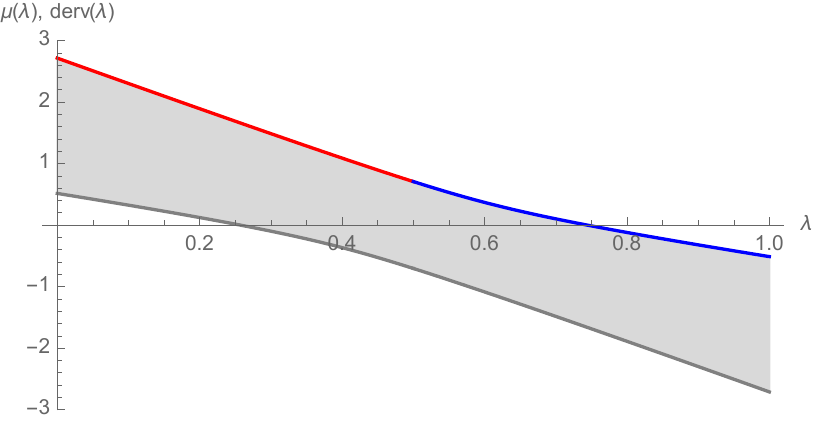}
\caption{Left: Space of generators $\mf Q$ parametrized as in Proposition~\ref{prop:general-parametrization}.
The boundary, as determined in Corollary~\ref{coro:boundary-gens}, is highlighted in red (upper part) and blue (lower part), with the relevant part solid and the rest dotted.
Right: Differential inclusion $\lambda\mapsto\derv(\lambda)$ with the optimal derivatives $\mu$ highlighted in red and blue. They are determined using Lemma~\ref{lemma:analytic-params}.
The Lindblad terms used are the same as in Figure~\ref{fig:example-space-of-lines}.
}
\label{fig:general-lines-points}
\end{figure}

\subsection{Optimal solution}

In order to find the optimal derivative function $\mu$ we first find the boundary of $\mf Q$ by determining the maximal and minimal values of $z\mapsto F(x,z)$ for all $x\in[0,1/2]$, cf.\ Proposition~\ref{prop:polarity}. 
Consider the simplified function
\begin{align} \label{eq:G-phi}
G_\phi(\xi,\zeta)=(1-\xi)\sin(4\pi \zeta+\phi-\pi/2)-\xi\sin(2\pi\zeta), \quad \xi,\zeta\in[0,1].
\end{align}
See Appendix~\ref{app:g-phi} for the relevant properties of this function. 
In particular see Figure~\ref{fig:contour-g} for some plots for different $\phi$. 

\begin{lemma} \label{lemma:to-G-phi}
Assume that $r_1,r_2\neq0$ and let $x\in(0,1/4)$ be given.\footnote{By symmetry it suffices to consider the right side of $\mf Q$.} Let $\xi = \frac1{1+\frac{r_1}{r_2}\tan(2\pi x)}\in(0,1)$. Then it holds for all $z\in\R$ that
\begin{align*}
z\in\underset{\tilde z}{\argmax}\ F(x,\tilde z)
\iff 
\zeta \in \underset{\tilde \zeta}{\argmax}\ G_\phi(\xi,\tilde \zeta)
\end{align*}
where $\zeta=z+\phi_2/(2\pi)$. The same statement holds after replacing $\argmax$ by $\argmin$.
\end{lemma}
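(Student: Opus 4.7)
My plan is to show that $F(x, z)$ and $G_\phi(\xi, \zeta)$ differ, up to an additive constant (in $z$) and a strictly positive multiplicative constant, by the substitution $\zeta = z + \phi_2/(2\pi)$. Once this is established, the argmax (and argmin) over $z$ and over $\zeta$ correspond term-by-term.

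Concretely, I would first isolate the $z$-dependent part of $F(x, z)$, namely
\begin{align*}
F_z(x, z) = r_1 \sin(4\pi z + \phi_1)\sin(2\pi x)^2 - r_2 \cos(2\pi x)\sin(2\pi x)\sin(2\pi z + \phi_2),
\end{align*}
since the remaining terms $\Sigma + \delta \sin(2\pi x)^2$ are constant in $z$ and hence irrelevant for argmax and argmin. Then I factor out $\sin(2\pi x)$ (which is strictly positive for $x \in (0, 1/4)$):
\begin{align*}
F_z(x, z) = \sin(2\pi x)\bigl[r_1 \sin(2\pi x)\sin(4\pi z + \phi_1) - r_2 \cos(2\pi x)\sin(2\pi z + \phi_2)\bigr].
\end{align*}

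Next, I perform the change of variables $\zeta = z + \phi_2/(2\pi)$. Then $2\pi z + \phi_2 = 2\pi \zeta$, and using the definition $\phi = \phi_1 - 2\phi_2 + \pi/2$, we have $4\pi z + \phi_1 = 4\pi\zeta + \phi - \pi/2$. Hence the bracketed expression becomes
\begin{align*}
r_1 \sin(2\pi x)\sin(4\pi \zeta + \phi - \pi/2) - r_2 \cos(2\pi x)\sin(2\pi \zeta).
\end{align*}
Matching this with $G_\phi(\xi, \zeta) = (1 - \xi)\sin(4\pi\zeta + \phi - \pi/2) - \xi \sin(2\pi \zeta)$, the key observation is that with $\xi = (1 + \tfrac{r_1}{r_2}\tan(2\pi x))^{-1}$ we obtain the identity $\tfrac{1-\xi}{\xi} = \tfrac{r_1 \sin(2\pi x)}{r_2 \cos(2\pi x)}$, so that the pair $(r_1 \sin(2\pi x), r_2 \cos(2\pi x))$ is a positive scalar multiple of $(1 - \xi, \xi)$, namely with common factor $k = r_2 \cos(2\pi x)/\xi$. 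Therefore
\begin{align*}
F_z(x, z) = k \sin(2\pi x)\, G_\phi(\xi, \zeta),
\end{align*}
where the substitution $\zeta = z + \phi_2/(2\pi)$ is understood.

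Finally, I verify positivity of the scaling factor: for $x \in (0, 1/4)$ both $\sin(2\pi x)$ and $\cos(2\pi x)$ are positive; since by assumption $r_1, r_2 > 0$, also $\tan(2\pi x) > 0$ so $\xi \in (0, 1)$, and hence $k > 0$. Thus $F(x, \cdot)$ and $G_\phi(\xi, \cdot \, - \phi_2/(2\pi))$ differ by an additive constant and a strictly positive multiplicative constant, so their argmax and argmin sets coincide after the shift $\zeta = z + \phi_2/(2\pi)$. There is no real obstacle here beyond careful bookkeeping of the phase substitution; the one place to be attentive is confirming $\xi \in (0, 1)$ and $k > 0$ so that the positive-scaling argument is valid and no sign flip occurs that would swap max with min.
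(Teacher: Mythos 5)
Your proof is correct and follows essentially the same route as the paper: both isolate the $z$-dependent part of $F(x,z)$, use the identity $\tfrac{1-\xi}{\xi}=\tfrac{r_1}{r_2}\tan(2\pi x)$ to match coefficients with $(1-\xi,\xi)$ up to a positive factor, and apply the phase shift $\zeta=z+\phi_2/(2\pi)$ together with $\phi_1-2\phi_2=\phi-\pi/2$ to recover $G_\phi$. The paper phrases this as a chain of $\argmax$-preserving rescalings while you package it as the single identity $F_z(x,z)=k\sin(2\pi x)\,G_\phi(\xi,\zeta)$ with $k>0$, which is the same argument; your explicit check that $\xi\in(0,1)$ and $k>0$ (using that $r_1,r_2$ are moduli, hence positive when nonzero) is exactly the point the paper leaves implicit.
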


\begin{proof}
We compute
\begin{align*}
\underset{\tilde z\in[0,1]}{\argmax}\ F(x,\tilde z)
&=
\underset{\tilde z\in[0,1]}{\argmax} \
r_1\sin(2\pi x)\sin(4\pi \tilde z+\phi_1)
-r_2\cos(2\pi x)\sin(2\pi \tilde z+\phi_2)
\\&=
\underset{\tilde z\in[0,1]}{\argmax}\
\sin(4\pi \tilde z+\phi_1)
-\frac{r_2}{r_1}\cot(2\pi x) \sin(2\pi \tilde z+\phi_2)
\\&=
\underset{\tilde z\in[0,1]}{\argmax}\
\sin(4\pi \tilde z+\phi_1)
-\frac{\xi}{1-\xi} \sin(2\pi\tilde z+\phi_2)
\\&=
\underset{\tilde z\in[0,1]}{\argmax}\
(1-\xi)\sin(4\pi \tilde z+\phi_1)-\xi\sin(2\pi \tilde z+\phi_2)
\\&=
\underset{\tilde z\in[0,1]}{\argmax}\
G_\phi(\xi, \tilde z + \phi_2/(2\pi)),
\end{align*}
and the computation remains valid after replacing $\argmax$ by $\argmin$.
\end{proof}

\begin{remark} \label{rmk:degen}
If $r_1=0$ or $r_2=0$ or both, then the values of $z$ which maximize or minimize $F(x,z)$ can be chosen independently of $x$. We will call such systems \emph{degenerate}. 
Note that for the  Bloch equations (Appendix~\ref{subsec:bloch-eq}) it holds that $r_1=r_2=0$ and for rank one systems (cf.\ \cite[Sec.~V]{OptimalCooling24}) it holds that $r_2=0$. 
For the sake of brevity we will not treat degenerate systems in detail, and leave this as an exercise to the reader.
\end{remark}

\begin{corollary}
Assume that $\phi\in(-\pi,0)\cup(0,\pi)$ and define the function
\begin{align*}
x^\star(z) = \frac{1}{2\pi}\operatorname{arccot}\left({2\,\frac{r_1}{r_2}\,\frac{\cos(4\pi z+\phi_1)}{\cos(2\pi z+\phi_2)}}\right),
\end{align*}
and the intervals
\begin{align*}
I^+=\begin{cases}
[\tfrac34-\frac{\phi}{4\pi}-\tfrac{\phi_2}{2\pi},\, \tfrac34-\tfrac{\phi_2}{2\pi}] & \text{ if } \phi>0\\
[\tfrac34-\frac{\phi_2}{2\pi},\, \tfrac34-\tfrac{\phi}{4\pi}-\tfrac{\phi_2}{2\pi}] &\text{ if } \phi<0,
\end{cases} \quad
I^-=\begin{cases}
[\tfrac14-\tfrac{\phi_2}{2\pi},\tfrac14+\tfrac{\pi-\phi}{4\pi}-\tfrac{\phi_2}{2\pi}] &\text{ if } \phi>0\\
[-\tfrac{\phi}{4\pi}-\tfrac{\phi_2}{2\pi},\tfrac14-\tfrac{\phi_2}{2\pi}] &\text{ if } \phi<0\,.
\end{cases}
\end{align*}
Let $x^+$ and $x^-$ denote the restrictions of $x^\star$ to $I^+$ and $I^-$ respectively. These functions are bijective onto $[0,1/4]$, and it holds that
\begin{align*}
(x^+)^{-1}(x) = \underset{\tilde z}{\argmax}\ F(x,\tilde z), \quad
(x^-)^{-1}(x) = \underset{\tilde z}{\argmin}\ F(x,\tilde z)
\end{align*}
for all $x\in(0,1/4)$. 
Here the $\argmax$ and $\argmin$ are unique for all $x\in(0,1/4)$. 
For $x=1/4$ there is a spurious second solution  and for $x=0$ we get $F(0,z)=(\Delta,\Sigma)$. 
\end{corollary}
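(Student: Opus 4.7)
The plan is to derive the critical-point equation by direct differentiation of $F$ and then to analyze its solutions interval by interval. First, I would differentiate $F(x, z)$ in $z$ at fixed $x \in (0, 1/4)$, obtaining
\begin{align*}
\partial_z F(x, z) = 4\pi r_1 \sin(2\pi x)^2 \cos(4\pi z + \phi_1) - 2\pi r_2 \sin(2\pi x) \cos(2\pi x) \cos(2\pi z + \phi_2).
\end{align*}
Setting this to zero and dividing by $2\pi \sin(2\pi x) \cos(2\pi z + \phi_2)$ (both non-zero on the relevant domain) gives $\cot(2\pi x) = \frac{2 r_1}{r_2} \frac{\cos(4\pi z + \phi_1)}{\cos(2\pi z + \phi_2)}$, and inverting via the standard branch $\operatorname{arccot}: \R \to (0, \pi)$ yields exactly $x = x^\star(z)$. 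The same relation follows from Lemma \ref{lemma:to-G-phi} by differentiating $G_\phi$ in $\zeta$.

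Second, I would locate the special values of $z$ at which $x^\star(z) \in \{0, 1/4\}$. The denominator $\cos(2\pi z + \phi_2)$ vanishes at $z \equiv 1/4 - \phi_2/(2\pi), \ 3/4 - \phi_2/(2\pi) \pmod 1$, forcing $x^\star = 0$; the numerator $\cos(4\pi z + \phi_1)$ vanishes at four equispaced points, forcing $x^\star = 1/4$. Using the identity $\phi_1 = \phi + 2\phi_2 - \pi/2$, one checks that for $\phi \in (0, \pi)$ exactly one numerator-zero and one denominator-zero bracket the interval $I^+$ as defined, and an analogous pair brackets $I^-$; the case $\phi \in (-\pi, 0)$ is symmetric. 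The hypothesis $\phi \neq 0$ is precisely what keeps the zero sets of numerator and denominator disjoint, ensuring $I^\pm$ are non-degenerate.

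Third, and this is the heart of the argument, I would show that $x^\star$ restricted to $I^+$ (resp.\ $I^-$) is a strict monotone bijection onto $[0, 1/4]$ and that the corresponding critical point is the global maximum (resp.\ minimum) of $F(x, \cdot)$. Monotonicity follows by differentiating $\cot(2\pi x^\star(z))$ in $z$ and checking that its sign is constant on each $I^\pm$ under the hypothesis on $\phi$. For the max/min classification, I would compute $\partial_z^2 F$ at a critical point and use the critical-point equation to reduce its sign to a single simple trigonometric expression whose sign is constant on each $I^\pm$ and opposite between them. Global uniqueness in $[0, 1)$ then follows because, after substituting $u = \cos(2\pi z)$, $v = \sin(2\pi z)$, the critical-point equation is polynomial of degree two on the unit circle and so has at most four roots per period, of which the second-derivative test pins down at most one global max and one global min.

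The technically hardest part will be the combined monotonicity and sign analysis in the third step: the derivative of $x^\star$ is a ratio of trigonometric polynomials, and consistently bundling the sign information from $\phi, \phi_1, \phi_2$ across each interval requires splitting into the cases $\phi \in (0, \pi)$ and $\phi \in (-\pi, 0)$ and tracking the interplay with the reflection symmetry of $\mf Q$. The boundary cases stated in the corollary are by contrast easy: at $x = 0$ both $\sin(2\pi x)^2$ and $\sin(2\pi x)\cos(2\pi x)$ vanish, so $F(0, z) = \Sigma$ identically and the point $(\Delta, \Sigma)$ is reached for every $z$; at $x = 1/4$ the factor $\cos(2\pi x)$ kills the last term, leaving $F(1/4, z) = \Sigma + \delta + r_1 \sin(4\pi z + \phi_1)$, which has two maxima and two minima in $[0, 1)$ and thus produces the spurious second solution noted in the statement.
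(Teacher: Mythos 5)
Your first two steps track the substance of the paper's argument closely: the paper proves this corollary in one line by citing Lemma~\ref{lemma:to-G-phi}, which normalizes $F(x,\cdot)$ into $G_\phi(\xi,\cdot)$ via $\xi=1/(1+\tfrac{r_1}{r_2}\tan(2\pi x))$ and $\zeta=z+\phi_2/(2\pi)$, and then invoking Lemma~\ref{lemma:G-phi-opt}. Your critical-point equation $\cot(2\pi x)=\tfrac{2r_1}{r_2}\cos(4\pi z+\phi_1)/\cos(2\pi z+\phi_2)$ is exactly the equation defining $\xi^\star$ there, and your endpoint bookkeeping (numerator zeros forcing $x^\star=1/4$, denominator zeros forcing $x^\star\in\{0,1/2\}$, with $\phi\notin\pi\Z$ keeping the two zero sets disjoint) correctly reproduces the intervals $I^\pm$, so in effect you are reproving Lemma~\ref{lemma:G-phi-opt} inline on $F$ rather than on the normalized $G_\phi$ --- a legitimate, essentially equivalent route. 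One small caution there: with the standard branch $\operatorname{arccot}:\R\to(0,\pi)$ a denominator zero gives $x^\star=0$ only when the one-sided limit of the ratio is $+\infty$ (the limit $-\infty$ gives $x^\star=1/2$), so the sign of the ratio near each endpoint of $I^\pm$ must be checked explicitly, not just the location of the zeros.

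The genuine gap is in your third step, in the sentence claiming that ``the second-derivative test pins down at most one global max and one global min.'' That inference fails precisely in the regime that matters. For $x$ near $1/4$ (i.e.\ $\xi$ near $0$) the critical-point equation really does have four solutions per period: as you yourself observe, $F(1/4,z)=\Sigma+\delta+r_1\sin(4\pi z+\phi_1)$ has two maxima and two minima, and since these are non-degenerate critical points they persist for $x$ slightly below $1/4$. The second-derivative test then certifies \emph{two} local maxima and cannot decide which is global; uniqueness of the argmax requires comparing their \emph{values}. This comparison is also exactly where the hypothesis $\phi\neq0,\pi$ must enter: at $\phi=\pi$ one has $G_\pi(\xi,\zeta)=(1-\xi)\cos(4\pi\zeta)-\xi\sin(2\pi\zeta)$, which is invariant under $\zeta\mapsto\tfrac12-\zeta$, so the two local maxima have \emph{exactly equal} values and the global maximizer is genuinely non-unique for $\xi\in[0,4/5)$ --- this is the two-branch formula in Lemma~\ref{lemma:G-phi-opt-integral}. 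Your proposed sign analysis of $\partial_z^2F$ does not degenerate as $\phi\to\pi$ and is therefore blind to this failure mode, which shows the classification step as written cannot be correct. To close the gap you need a value comparison along the two local-maximum branches, e.g.\ a first-order perturbation of the tied values in $\xi$ near $0$ (the splitting is proportional to $\xi\sin((\pi-\phi)/2)$, nonzero iff $\phi\neq\pi$) combined with a continuity or no-crossing argument in $x$ to propagate strict inequality through all of $(0,1/4)$.
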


\begin{proof}
This follows directly from Lemma~\ref{lemma:to-G-phi} and Lemma~\ref{lemma:G-phi-opt}.
\end{proof}

Unfortunately it seems that the inverses of the functions $x^+$ and $x^-$ defined in the previous lemma cannot be computed analytically. However we can obtain analytical expressions parametrized by $z$.

\begin{corollary} \label{coro:boundary-gens}
Assume that $\phi\in(-\pi,0)\cup(0,\pi)$ and define the path
\begin{align*}
\gamma(z)=(\Delta\cos(2\pi x^\star(z)), F(x^\star(z),z)),
\end{align*}
and let $\gamma^+$ and $\gamma^-$ be the restrictions to $I^+$ and $I^-$ respectively. Then $\gamma^+$ is a parametrization of the upper boundary of the right half of the space of generators $\mf Q$, and analogously $\gamma^-$ parametrizes the lower boundary of the right half.
It follows that the boundary point $\gamma(z)$ for $z\in I^+\cup I^-$ can be obtained using the unitary $U_{x^\star(z),z}$.
\end{corollary}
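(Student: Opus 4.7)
The plan is as follows. By Proposition~\ref{prop:general-parametrization}, the right half of $\mf Q$ (those points with non-negative first coordinate) is the image of $[0,1/4]\times[0,1)$ under the map $(x,z)\mapsto(\Delta\cos(2\pi x),F(x,z))$, since $\cos(2\pi x)\ge0$ precisely for $x\in[0,1/4]\cup[3/4,1]$ and the second interval produces the same values by $\pi$-periodicity of $\cos(2\pi x)$ together with the symmetry of $F$. First I would observe that for each fixed $x\in[0,1/4]$, the ``vertical slice'' of the right half of $\mf Q$ over the horizontal coordinate $\Delta\cos(2\pi x)$ is exactly the closed interval $[m(x),M(x)]$ with $m(x)=\min_z F(x,z)$ and $M(x)=\max_z F(x,z)$: indeed, $F(x,\cdot)$ is continuous on the compact circle $[0,1)\cong S^1$, so its image is a closed interval. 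Consequently the upper boundary of the right half of $\mf Q$ is the graph of $x\mapsto M(x)$, and the lower boundary is the graph of $x\mapsto m(x)$.

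Second, I would invoke the preceding corollary, which provides bijections $x^+\colon I^+\to[0,1/4]$ and $x^-\colon I^-\to[0,1/4]$ such that for $x\in(0,1/4)$ the maximum $M(x)$ is uniquely attained at $z=(x^+)^{-1}(x)\in I^+$ and the minimum $m(x)$ uniquely at $z=(x^-)^{-1}(x)\in I^-$. Re-parametrizing by $z$ rather than $x$ via the substitution $x=x^\star(z)$, the graph of $M$ becomes the curve $z\mapsto(\Delta\cos(2\pi x^\star(z)),F(x^\star(z),z))$ with $z$ ranging over $I^+$, which is exactly $\gamma^+$. The identical argument with $\argmin$ in place of $\argmax$ identifies $\gamma^-$ as the lower boundary. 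The final claim about the unitary is then immediate: by construction of the parametrization in Proposition~\ref{prop:general-parametrization}, the point $\gamma(z)=(\Delta\cos(2\pi x^\star(z)),F(x^\star(z),z))$ is precisely the image of $U_{x^\star(z),z}$.

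The only real subtlety, and likely the main bookkeeping obstacle, is the behavior at the endpoints $x=0$ and $x=1/4$ where the preceding corollary warns that the $\argmax$/$\argmin$ need not be unique. At $x=0$ one has $F(0,z)=\Sigma$ identically, so the right edge of the half collapses to the single point $(\Delta,\Sigma)$, which both $\gamma^+$ and $\gamma^-$ reach at the appropriate endpoints of $I^\pm$; this is a matter of continuity of the parametrization. At $x=1/4$, where $\Delta\cos(2\pi x)=0$, the spurious second solution does not affect the identification of the extremal values $m(1/4),M(1/4)$, and the pieces $\gamma^+$ and $\gamma^-$ meet on the vertical line $a=0$, joining into a closed curve around the right half of $\mf Q$ once one uses the reflection symmetry from Lemma~\ref{lemma:sol-basic} to cover the left half. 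No hard computation is required beyond verifying that these endpoint matchings are consistent with the intervals $I^+$ and $I^-$ stipulated in the preceding corollary.
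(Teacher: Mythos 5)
Your proposal is correct and follows exactly the route the paper intends: the paper states Corollary~\ref{coro:boundary-gens} without proof, as an immediate consequence of the preceding corollary (bijectivity of $x^\pm\colon I^\pm\to[0,1/4]$ and the $\argmax$/$\argmin$ identification), and your argument---vertical slices of the right half are intervals $[m(x),M(x)]$, so the boundary curves are the graphs of $m$ and $M$, reparametrized by $z$ via $x=x^\star(z)$---is precisely the fleshed-out version of that reasoning, with the endpoint cases $x=0$ and $x=1/4$ handled correctly by continuity.
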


The intervals $I^+$ and $I^-$ are still too large, since on these intervals $\gamma$ parametrizes part of the boundary of the space of generators which are not relevant for optimal control. 
Indeed, since $\lambda\in[0,1]$, we are only interested in the values of $z$ where $\gamma_1'(z)/\gamma_2'(z)\in[-1,1]$. 
This parameter region can be computed numerically and we will denote the corresponding closed parameter intervals $J^+\subseteq I^+$ and $J^-\subseteq I^-$.

With this we can apply the results of Section~\ref{sec:optimal-controls} to numerically determine the optimal path through the Bloch ball and the corresponding control functions of the full control system~\eqref{eq:bilin}, cf.~Figure~\ref{fig:general-opt-path}.

\begin{figure}[thb]
\centering
\includegraphics[align=c,width=0.35\textwidth, trim=1cm 0.5cm 0.7cm 0cm,clip]{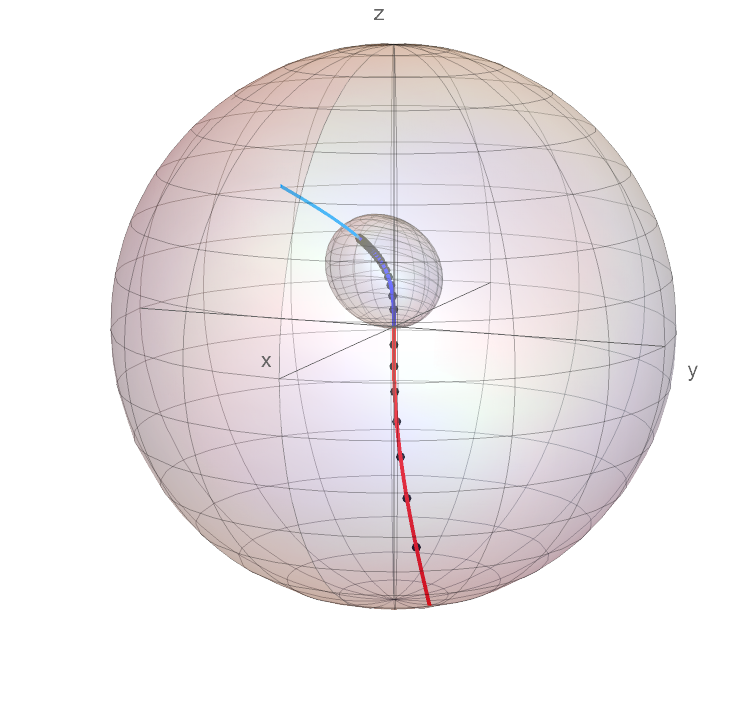}
\includegraphics[align=c,width=0.55\textwidth]{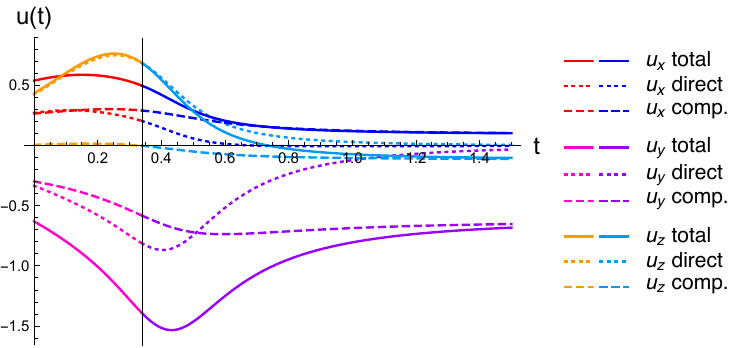}
\caption{
Left: Optimal path for heating (red) and cooling (blue) in the Bloch ball together with the set of stabilizable states. The light blue part indicates the part of the optimal path which is not reachable from the interior as it lies beyond the purest stabilizable state. The black dots on the path are equally spaced in time.
Right: Optimal control functions $u_x,u_y$ and $u_z$ for heating (left part) and cooling (right part) and their contributing direct and compensating terms.
The Lindblad terms used are the same as in Figure~\ref{fig:example-space-of-lines}.
}
\label{fig:general-opt-path}
\end{figure}

\section{Integral systems} \label{sec:integral}

In the previous section we had to exclude the cases where $\phi$, as defined in Section~\ref{sec:gen-param}, is an integer multiple of $\pi$.
It turns out that in these cases the general parametrization of $\mf Q$ obtained in the previous section simplifies considerably.
We call such systems integral and we explore their properties in this section.

\begin{definition}
A system is \emph{integral} if it is non-degenerate (cf.~Remark~\ref{rmk:degen}) and $\phi=k\pi$ for some $k\in\Z$. 
We call $p=(-1)^k$ the \emph{parity} of the system, and we say that the system is \emph{even} or \emph{odd} if $k$ is even or odd respectively. 
\end{definition}
\noindent Note that due to Lemma~\ref{lemma:well-def} integral systems and their parity are well-defined.

\begin{example}
We say that a Lindblad generator $-L\in\wkl$ is \emph{real} if there exists a choice of real Lindblad terms $V_k$.
If all Lindblad terms are real, and the system is non-degenerate, then the system is integral.
Indeed, since all $V_k$ are real, the sum $\sum_{k=1}^r[V_k,V_k^*]$ is real and symmetric and hence can be orthogonally diagonalized. 
Hence $r_1e^{\iu\phi_1}$ is imaginary and $r_2e^{\iu\phi_2}$ is real, so that $\phi_1+\pi/2$ and $2\phi_2$ are integer multiples of $\pi$.
Note that both even and odd systems can be obtained in this fashion, and they are separated in $\wkl$ by degenerate systems.
\end{example}
\noindent Since they are easy to generate, we will use real systems for the plots shown in this section.

\begin{figure}[thb]
\centering
\includegraphics[width=0.45\textwidth]{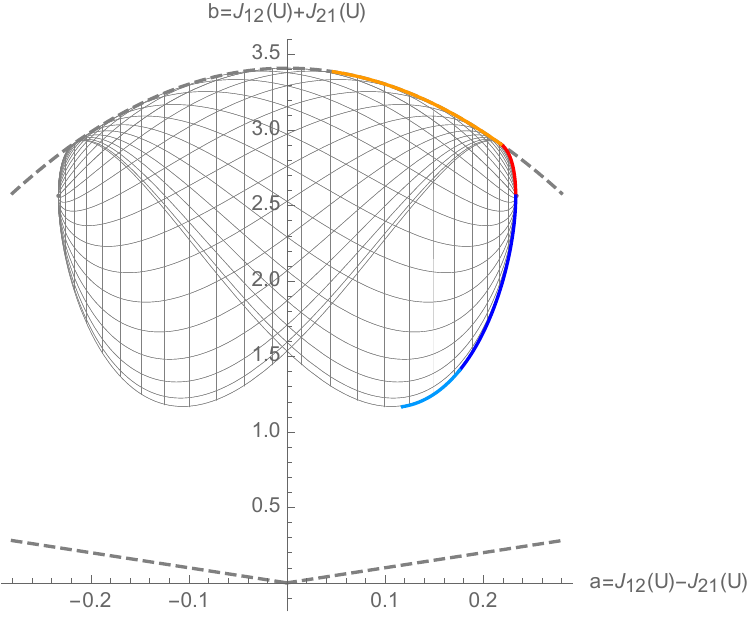}
\includegraphics[width=0.45\textwidth]{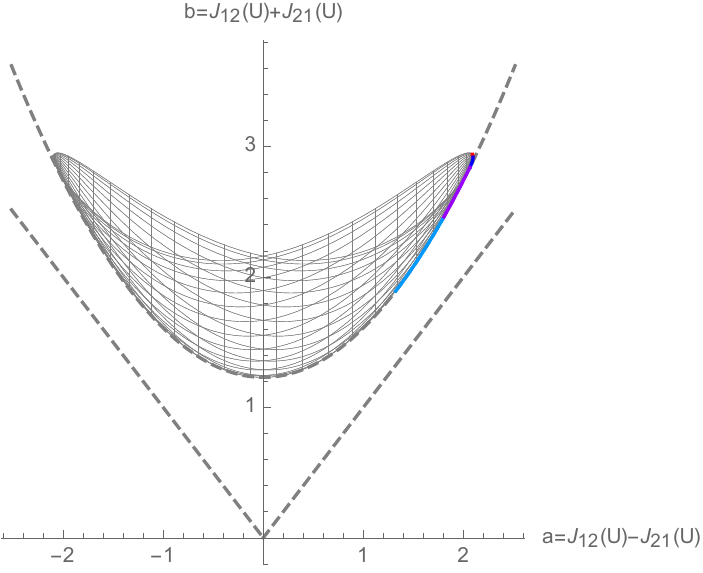}
\caption{
Space of generators $\mf Q$ for two randomly chosen real systems. 
The system on the left is odd, and the system on the right is even.
The boundary, computed using Lemma~\ref{lemma:integral-boundary}, features a parabolic part, located at the top in the odd case and at the bottom in the even case.
The Lindblad terms for the odd case are
$\big(\begin{smallmatrix} 0.3 & -1.0 \\ -0.6 & 0.9\end{smallmatrix}\big),
\big(\begin{smallmatrix} 0.8 & 0.2 \\ 0.8 & 0.8\end{smallmatrix}\big),
\big(\begin{smallmatrix} 0.1 & -0.4 \\ 0.0 & -0.8\end{smallmatrix}\big)$, and
$\big(\begin{smallmatrix} -0.3 & 0.6 \\ 0.6 & 0.4\end{smallmatrix}\big)$.
For the even case they are
$\big(\begin{smallmatrix} 0.5 & 0.1 \\ -0.4 & -0.1 \end{smallmatrix}\big),
\big(\begin{smallmatrix} 0.0 & 0.5 \\ -0.9 & 0.0 \end{smallmatrix}\big),
\big(\begin{smallmatrix} 0.9 & -0.3 \\ 0.7 & 0.6 \end{smallmatrix}\big)$, and
$\big(\begin{smallmatrix} 0.4 & -0.2 \\ -1.0 & 0.0 \end{smallmatrix}\big)$.}
\label{fig:real-systems}
\end{figure}

\subsection{Optimal solution}

Using Lemma~\ref{lemma:G-phi-opt-integral} we can give analytic expressions for the boundary of the space of generators $\mf Q$ for integral systems.

\begin{lemma} \label{lemma:integral-boundary}
Set $\tilde x = \frac{1}{2\pi}\arctan(\frac{r_2}{4r_1})$. For $\phi=0$, we obtain for every $x\in[0,1/4]$ that
\begin{align*}
\max_z F(x,z) &= \Sigma+\delta\sin(2\pi x)^2 + \sin(2\pi x)(r_1\sin(2\pi x) + r_2\cos(2\pi x)) \\
\min_z F(x,z) &= \begin{cases}
\Sigma+\delta\sin(2\pi x)^2 + \sin(2\pi x)(r_1\sin(2\pi x) - r_2\cos(2\pi x))
\quad &\text{if }x\in[0,\tilde x] \\
\Sigma+\sin(2\pi x)^2(\delta-r_1)-\cos(2\pi x)^2\frac{r_2^2}{8r_1}
\quad &\text{if }x\in[\tilde x,1/4],
\end{cases}
\end{align*}
and analogously for $\phi=\pi$ we obtain that
\begin{align*}
\max_z F(x,z) &= \begin{cases}
\Sigma+\delta\sin(2\pi x)^2 + \sin(2\pi x)(-r_1\sin(2\pi x) + r_2\cos(2\pi x))
\quad &\text{if }x\in[0,\tilde x] \\
\Sigma+\sin(2\pi x)^2(\delta+r_1)+\cos(2\pi x)^2\frac{r_2^2}{8r_1}
\quad &\text{if }x\in[\tilde x,1/4]
\end{cases} \\
\min_z F(x,z) &= \Sigma+\delta\sin(2\pi x)^2 + \sin(2\pi x)(-r_1\sin(2\pi x) - r_2\cos(2\pi x)).
\end{align*}
Furthermore these values can be obtained using the unitary $U_{x,z}=e^{\iu\pi z^\pm(x)\sigma_z} e^{\iu\pi x\sigma_x}$ where
\begin{align*}
z^+(x) = \tfrac34-\tfrac{\phi_2}{2\pi}, \quad
z^-(x) = \begin{cases}
\tfrac14-\tfrac{\phi_2}{2\pi} &\text{if }x\in[0,\tilde x] \\
\tfrac1{2\pi}\arcsin(\tfrac{r_2}{4r_1\tan(2\pi x)})-\tfrac{\phi_2}{2\pi} &\text{if }x\in[\tilde x,1/4],
\end{cases}
\end{align*}
in the even case and
\begin{align*}
z^-(x) = \tfrac14-\tfrac{\phi_2}{2\pi}, \quad
z^+(x) = \begin{cases}
\tfrac34-\tfrac{\phi_2}{2\pi} &\text{if }x\in[0,\tilde x] \\
\tfrac12+\tfrac1{2\pi}\arcsin(\tfrac{r_2}{4r_1\tan(2\pi x)})-\tfrac{\phi_2}{2\pi} &\text{if }x\in[\tilde x,1/4],
\end{cases}
\end{align*}
in the odd case.
\end{lemma}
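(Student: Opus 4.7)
My plan is to exploit the integrality condition $\phi=k\pi$ to reduce the extremization over $z$ at fixed $x$ to extremizing a single-variable quadratic on $[-1,1]$. Once that reduction is done, the piecewise structure of the two formulas is simply the interior-vs-boundary dichotomy for the extrema of a quadratic on an interval.

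The first and key step is a change of variables. Substitute $\zeta=z+\phi_2/(2\pi)$ in the formula for $F$ from Proposition~\ref{prop:general-parametrization}, so that $\sin(2\pi z+\phi_2)=\sin(2\pi\zeta)$. Using $\phi_1=\phi+2\phi_2-\pi/2$ together with $\phi=k\pi$, one computes
\[
\sin(4\pi z+\phi_1)=\sin(4\pi\zeta+k\pi-\pi/2)=-p\cos(4\pi\zeta)=-p\bigl(1-2\sin^2(2\pi\zeta)\bigr),
\]
where $p=(-1)^k$ is the parity. Setting $s:=\sin(2\pi\zeta)\in[-1,1]$ therefore recasts $F$, at each fixed $x$, as the quadratic
\[
F(x,z)=C(x)+A(x)s^2-B(x)s,
\]
with $A(x)=2p\,r_1\sin^2(2\pi x)$, $B(x)=r_2\sin(2\pi x)\cos(2\pi x)$, and $C(x)=\Sigma+(\delta-p\,r_1)\sin^2(2\pi x)$.

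The second step is to extremize this quadratic in $s$ over $[-1,1]$. By the reflection symmetry of $\mf Q$ from Lemma~\ref{lemma:sol-basic} it suffices to treat $x\in[0,1/4]$, where $B(x)\geq 0$; the vertex sits at $s^\star=B(x)/(2A(x))=p\,r_2/(4r_1\tan(2\pi x))$, which enters $[-1,1]$ precisely when $x\geq\tilde x:=\tfrac{1}{2\pi}\arctan(r_2/(4r_1))$. In the even case ($p=+1$, parabola opening upward) the maximum is attained at the boundary point $s=-1$ for all $x$, yielding the single-piece formula $C(x)+A(x)+B(x)$ claimed for $\max_z F(x,z)$, while the minimum sits at the boundary $s=+1$ when $x\leq\tilde x$ (producing the trigonometric branch) and at the interior vertex $s=s^\star$ when $x\geq\tilde x$ (producing the parabolic branch with the $-r_2^2\cos^2(2\pi x)/(8r_1)$ correction, obtained from $-B^2/(4A)$). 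The odd case $p=-1$ is completely symmetric, with a downward parabola and the roles of $\max$ and $\min$ interchanged.

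Finally, the optimizing parameters $z^\pm(x)$ are recovered by inverting $s=\sin(2\pi\zeta)$ and setting $z=\zeta-\phi_2/(2\pi)$: the boundary choices $s=\pm1$ give the constant values $\zeta\in\{1/4,3/4\}$, while the interior choice $s=s^\star$ gives $\zeta=\arcsin(s^\star)/(2\pi)$ in the even case, and the shifted branch $\zeta=\tfrac{1}{2}+\arcsin(|s^\star|)/(2\pi)$ in the odd case (because then $s^\star<0$ and the required $\zeta$ lies in $[1/2,3/4]$). The main obstacle is purely one of bookkeeping: keeping track of the parity sign $p$, the sign of $B(x)$ across $[0,1/4]$, and selecting the correct $\arcsin$-branch in the interior regime. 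Once the reduction to a quadratic in $s$ is in place, every remaining step is a routine case distinction matching directly with the piecewise formulas in the statement.
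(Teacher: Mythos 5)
Your proof is correct, and it reaches the result by a route that is structurally parallel to the paper's but genuinely more self-contained at the key step. The paper also begins with the change of variables $\zeta=z+\phi_2/(2\pi)$ (that is exactly Lemma~\ref{lemma:to-G-phi}), but it then \emph{cites} Lemma~\ref{lemma:G-phi-opt-integral} for the optimizing $\zeta$ as a function of $\xi$ --- a lemma which is stated in Appendix~\ref{app:g-phi} without proof --- and spends the rest of its argument back-substituting, locating the switching point via $\xi=4/5 \iff x=\tfrac{1}{2\pi}\arctan(\tfrac{r_2}{4r_1})$ and simplifying expressions like $\sin(4\pi z^-(x)+\phi_1)=2\bigl(\tfrac{r_2}{4r_1}\cot(2\pi x)\bigr)^2-1$ with trigonometric identities. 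You instead observe that integrality kills the $\sin(4\pi\zeta)$ component (the coefficient $\sin\phi$ vanishes when $\phi=k\pi$), so that $F$ at fixed $x$ depends on $\zeta$ only through $s=\sin(2\pi\zeta)$ and becomes the quadratic $C(x)+A(x)s^2-B(x)s$ on $[-1,1]$; the piecewise structure is then the endpoint-versus-vertex dichotomy, with $\tilde x$ characterized as the point where the vertex $s^\star=p\,r_2/(4r_1\tan(2\pi x))$ enters $[-1,1]$, and the parity $p$ flipping the parabola's orientation explains at a glance why $\max$ and $\min$ exchange roles between the even and odd cases (i.e.\ why the parabolic boundary piece sits at the bottom of $\mf Q$ for even systems and at the top for odd ones, consistent with Figure~\ref{fig:real-systems}). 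In effect you supply an inline proof of Lemma~\ref{lemma:G-phi-opt-integral} restricted to $\phi\in\{0,\pi\}$: your $\xi=4/5$ threshold and $\arcsin\bigl(\tfrac{1}{4(1/\xi-1)}\bigr)$ expressions agree with the paper's, since $B/(2A)$ rewritten in the normalized variable is $\tfrac{\xi}{4(1-\xi)}$. What the paper's route buys is reuse of the $G_\phi$ machinery shared with the non-integral case of Section~\ref{sec:general-systems}; what yours buys is a complete, elementary verification and a conceptual reason for the switching point. Your branch selection for the odd interior optimizer ($\zeta=\tfrac12+\arcsin(|s^\star|)/(2\pi)$ because $s^\star<0$ there) is also handled correctly; the only implicit assumptions are $r_1,r_2>0$, which hold by non-degeneracy since $r_1,r_2$ are moduli, and you could note that at $x=0$ the formula for $s^\star$ degenerates while all claimed formulas hold trivially there.
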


\begin{proof}
Using Lemma~\ref{lemma:to-G-phi} and Lemma~\ref{lemma:G-phi-opt-integral} we immediately find the optimal value of $z$ as a function of $x$. Then we find the switching point
$\xi = 4/5 \iff x = \tfrac{1}{2\pi}\arctan(\tfrac{r_2}{4r_1})$
and we denote this value by $\tilde x$.
For the even case $\phi=0$, or equivalently $\phi_1-2\phi_2=-\pi/2$, we compute
\begin{align*}
\sin(4\pi z^+(x)+\phi_1)=\sin(3\pi-2\phi_2+\phi_1)=1 \quad &\text{ for } x\in[0,1/4]\\
\sin(2\pi z^+(x)+\phi_2)=\sin(3\pi/2-\phi_2+\phi_2)=-1 \quad &\text{ for } x\in[0,1/4]\\
\sin(4\pi z^-(x)+\phi_1)=1
\quad &\text{ for }x\in[0,\tilde x]\\
\sin(2\pi z^-(x)+\phi_2)=1
\quad &\text{ for }x\in[0,\tilde x],
\end{align*}
and using that $1/(4(1/\xi-1))=r_2/(4r_1)\cot(2\pi x)$ we get
\begin{align*}
\sin(4\pi z^-(x)+\phi_1)
&=-\cos(2\arcsin(r_2/(4r_1)\cot(2\pi x)))\\
&=2(r_2/(4r_1)\cot(2\pi x))^2-1
\quad &\text{ for }x\in[\tilde x,1/4]\\
\sin(2\pi z^-(x)+\phi_2)
&= r_2/(4r_1)\cot(2\pi x)
\quad &\text{ for }x\in[\tilde x,1/4],
\end{align*}
and hence we find
\begin{align*}
F(x,z^-(x))=\Sigma+\sin(2\pi x)^2(\delta-r_1)-\cos(2\pi x)^2\frac{r_2^2}{8r_1}
\quad &\text{ for }x\in[\tilde x,1/4].
\end{align*}
The computations for the odd case $\phi=\pi$ are analogous and this yields the result.
\end{proof}
\noindent The boundary parametrization is shown in Figure~\ref{fig:real-systems}, and the resulting optimal derivative function $\mu$ as well as the optimal path and the set of stabilizable states in the Bloch sphere are presented in Figure~\ref{fig:real-derv} and Figure~\ref{fig:real-bloch} respectively. 

\begin{figure}[thb]
\centering
\includegraphics[width=0.45\textwidth]{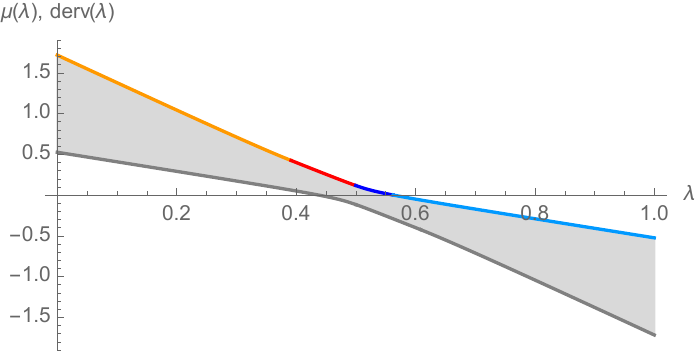}
\includegraphics[width=0.45\textwidth]{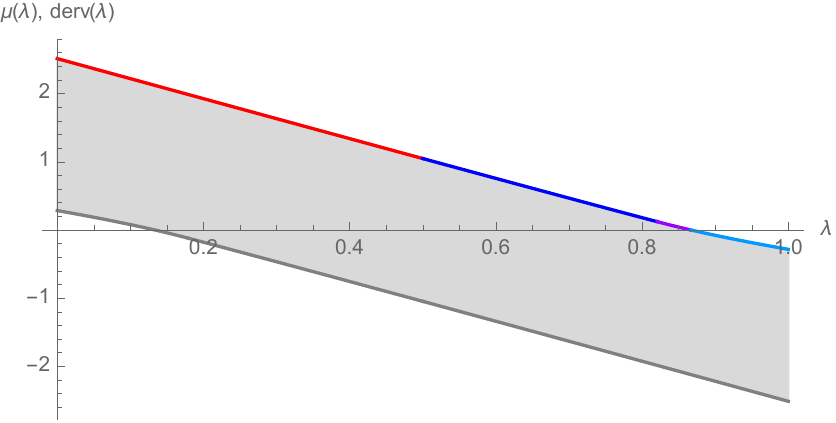}
\caption{For the same two real systems we plot the set-valued function $\derv$ of achievable derivatives and the optimal derivatives $\mu$, which can be computed using Lemma~\ref{lemma:analytic-params}. The colors used correspond to those of Figure~\ref{fig:real-systems}.}
\label{fig:real-derv}
\end{figure}

\begin{figure}[thb]
\centering
\includegraphics[width=0.45\textwidth]{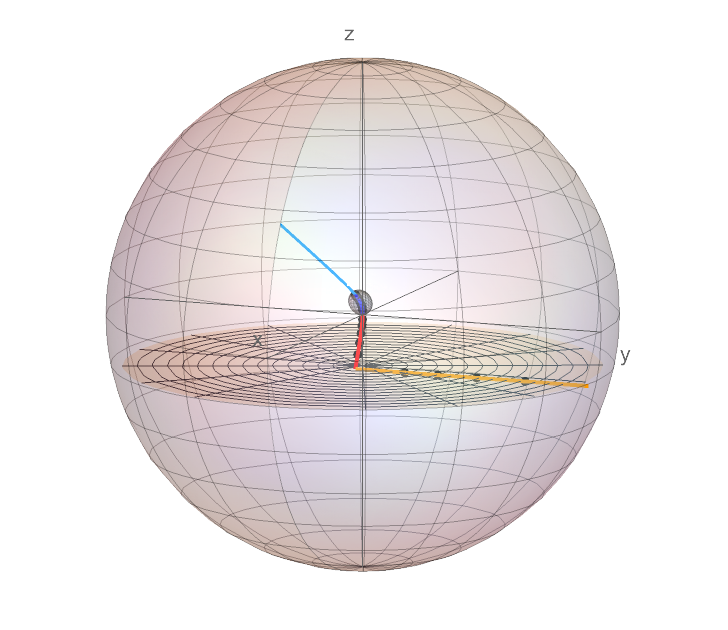}
\includegraphics[width=0.45\textwidth]{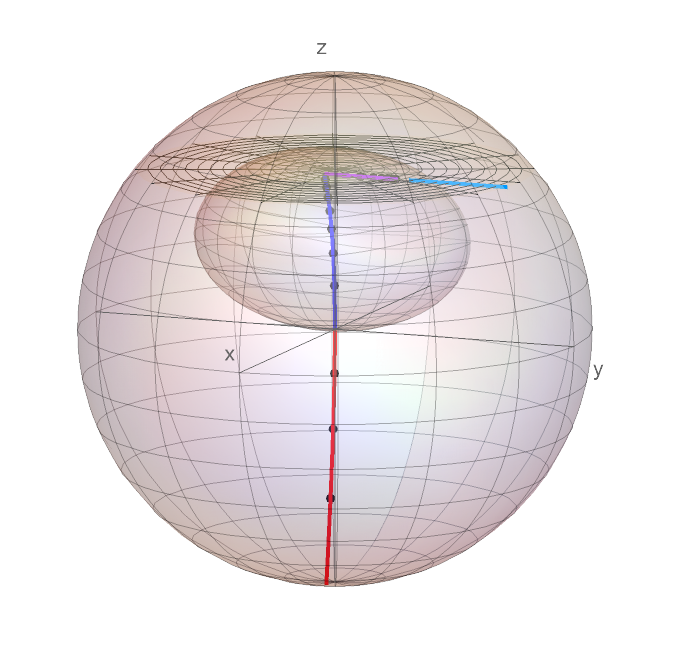}
\caption{For the same two real systems we plot the optimal path through the Bloch ball and the set of stabilizable states. Note that the parabolic parts of the boundary of $\mf Q$ correspond to horizontal parts in the Bloch ball, reminiscent of the magic plane in the Bloch equations case, cf.\ Section~\ref{subsec:bloch-eq}. The rest of the path lies in the $xz$-plane. The colors used correspond to those of Figure~\ref{fig:real-systems}.}
\label{fig:real-bloch}
\end{figure}

\subsection{Coolable systems} \label{sec:coolable}

In~\cite[Thm.~4.7]{LindbladReduced23} coolability of Markovian quantum systems with fast unitary control was characterized.
As a consequence, in the qubit case the system is asymptotically coolable if and only if the Lindblad terms $V_k$ can be simultaneously unitarily triangularized without being simultaneously diagonal, see also~\cite[Thm.~5.2]{Rooney12}.
For non-unital rank one systems this is always satisfied, and such systems were studied in detail in~\cite[Sec.~V]{OptimalCooling24}.

\begin{lemma}
Coolable systems are either degenerate or odd.
\end{lemma}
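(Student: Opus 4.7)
The plan is to exploit the coolability criterion stated just above the lemma: any coolable system admits a basis in which every $V_k$ is simultaneously upper triangular. After absorbing traces into a compensating Hamiltonian (cf.~Lemma~\ref{lemma:well-def}), I would write $V_k = \bigl(\begin{smallmatrix} a_k & b_k \\ 0 & -a_k \end{smallmatrix}\bigr)$ and introduce the invariants $A := \sum_k |a_k|^2$, $B := \sum_k |b_k|^2$, $D := \sum_k a_k \overline{b_k}$. A one-line computation gives $\sum_k [V_k, V_k^*] = \bigl(\begin{smallmatrix} B & -2D \\ -2\bar D & -B \end{smallmatrix}\bigr)$, and since the $V_k$ are not simultaneously diagonal we have $B > 0$.

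Next, I would split on the vanishing of $D$. If $D = 0$ the commutator sum is already diagonal, so this basis is admissible for reading off $r_1, r_2$; but $(V_k)_{21} = 0$ forces $r_1 = r_2 = 0$ and the system is degenerate. Otherwise, a diagonal-phase conjugation (preserving upper triangularity) reduces to the case $D > 0$, and I would diagonalize the now real symmetric $\sum_k[V_k, V_k^*]$ by a real orthogonal rotation $O = \bigl(\begin{smallmatrix} c & s \\ -s & c \end{smallmatrix}\bigr)$ with $t := s/c = (\sqrt{B^2+4D^2} - B)/(2D)$; crucially $t \in (0,1)$, with strict bounds because $B, D > 0$.

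The main obstacle is then the computation, in the rotated basis, of the quantities $X := \sum_k \overline{(V_k')_{12}}(V_k')_{21}$ and $Y := \sum_k (V_k')_{11} \overline{(V_k')_{21}}$, which are linked to the invariants by $r_1 e^{\iu\phi_1} = \iu X$ and $r_2 e^{\iu\phi_2} = 4 \overline{Y}$. The algebra is lengthy but mechanical, and after applying the identity $B = D(1 - t^2)/t$ I expect it to collapse to the manifestly real expressions
\[
X = \frac{t\bigl[4At + D(1-t^2)\bigr]}{(1+t^2)^2}, \qquad Y = \frac{2t\bigl[A(1-t^2) - Dt\bigr]}{(1+t^2)^2}.
\]

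The conclusion is then immediate. Since $t \in (0,1)$ and $A, D \geq 0$, $X$ is strictly positive, so $\phi_1 \equiv \pi/2 \pmod{2\pi}$; since $Y$ is real, $\phi_2 \in \{0,\pi\}$. In either case $\phi = \phi_1 - 2\phi_2 + \pi/2 \equiv \pi \pmod{2\pi}$, so $p = -1$ and the system is odd. The only remaining subcase is $Y = 0$, which amounts to $A(1 - t^2) = Dt$, equivalently $AB = D^2$ (via $1 - t^2 = Bt/D$). By Cauchy--Schwarz this forces the vectors $(a_k)$ and $(b_k)$ to be proportional, so all $V_k$ are scalar multiples of a single matrix (a rank-one system), which is degenerate by Remark~\ref{rmk:degen}. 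Every coolable system thus falls into one of the two cases predicted by the lemma.
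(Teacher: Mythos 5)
Your proof is correct, and it takes a genuinely different route from the paper's. The paper never leaves the basis in which $\sum_k[V_k,V_k^*]$ is diagonal: writing $V_k=\big(\begin{smallmatrix}u_k&v_k\\w_k&-u_k\end{smallmatrix}\big)$, it normalizes phases so that $(v,w)\geq0$ (whence $\phi_1=\pi/2$ unless $r_1=0$), and then uses coolability only through the \emph{existence} of a common eigenvector $(1,\beta)^\top$: comparing complex arguments in $\beta|v|^2+\tfrac{1}{\overline\beta}|w|^2=(\tfrac1\beta+\overline\beta)(v,w)$ forces $\beta\in\R$, hence $(v,u)$ real and $\phi_2\in\pi\Z$. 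You go in the opposite direction: you start in the triangular basis that coolability provides and explicitly construct the rotation diagonalizing the commutator sum, computing $r_1,r_2$ in closed form. I verified your algebra: with $s=tc$ and the identity $Bt=D(1-t^2)$ one indeed gets $X=t\bigl[4At+D(1-t^2)\bigr]/(1+t^2)^2>0$ and $Y=2t\bigl[A(1-t^2)-Dt\bigr]/(1+t^2)^2\in\R$, and (a point you leave implicit but which checks out) the rotated basis automatically respects the standing convention $\Delta\geq0$, since there $\Delta=\bigl(B(1-t^2)+4Dt\bigr)/(1+t^2)>0$. What your route buys is explicit formulas $r_1=X$ and $r_2=4|Y|$ for coolable systems in terms of the triangular invariants $A,B,D$, which the paper's more abstract argument does not yield and which dovetails with the parametrization of Lemma~\ref{lemma:cool-param}; the paper's argument is shorter and avoids the rotation algebra. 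Two small blemishes, neither fatal: with your convention $D=\sum_k a_k\overline{b_k}$ the displayed commutator sum should read $\big(\begin{smallmatrix}B&-2\overline{D}\\-2D&-B\end{smallmatrix}\big)$ (harmless, since the matrix is Hermitian and you normalize the phase of $D$ anyway), and the closing Cauchy--Schwarz detour is unnecessary: $Y=0$ gives $r_2=0$, which is already the definition of degeneracy in Remark~\ref{rmk:degen} --- though your detour does correctly identify this boundary case as a rank-one-type system.
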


\begin{proof}
As usual we choose the Lindblad terms $V_k$ traceless and the basis such that $\sum_{k=1}^r[V_k,V_k^*]$ is diagonal.
Then the $V_k$ have the form
$V_k =
\big(\begin{smallmatrix}
u_k&v_k\\w_k&-u_k
\end{smallmatrix}\big)$,
where $u_k,v_k,w_k\in\C$ for $k=1,\ldots,r$.
Let $u,v,w\in\C^r$ be the vectors with coefficients $u_k$, $v_k$ and $w_k$ respectively.
We will use the shorthand notation $(u,v)=\sum_{k=1}^r\overline{u}_kv_k$ and $|v|=\sqrt{(v,v)}$.
Diagonality of $\sum_{k=1}^r[V_k,V_k^*]$ is equivalent to $(v,u)=(u,w)$.
It is easy to show that one can always modify the basis (while keeping $\sum_{k=1}^r[V_k,V_k^*]$ diagonal) such that $(v,w)\geq0$ and we will make this assumption.
If $(v,w)=0$ then $r_1=0$ and the system is degenerate, hence assume that $(v,w)>0$ and thus $\phi_1=\tfrac\pi2$.
Now consider a common eigenvector of all $V_k$. If it is proportional to $(1,0)^\top$ or $(0,1)^\top$ it is easy to see that $r_2=0$.
Hence assume that the eigenvector is of the form $(1,\beta)$ with $\beta\neq0$.
We will show that indeed $\beta\in\R$.
From the eigenvalue equation we obtain that $2u_k=\tfrac{w_k}\beta-\beta v_k$ for all $k=1,\ldots,r$.
By taking the inner product with $v$ and $w$ and using $(v,u)=(u,w)$ we find that 
$\beta|v|^2+\tfrac1{\overline\beta}|w|^2 = (\tfrac1\beta+\overline\beta)(v,w)$.
By considering the complex argument of each side it is clear that $\beta$ must be real.
But then it follows from the above that $2(v,u)=\tfrac1\beta(v,w)-\beta|v|^2$ and so $(v,u)$ is real as well.
Thus $\phi_2=n\pi$ is an integer multiple of $\pi$.
Thus $\phi=\phi_1-2\phi_2+\tfrac\pi2=(1-2n)\pi$ and the system is odd as desired.
\end{proof}

Since non-degenerate coolable systems are integral, the results of the previous section still apply.
However, using the simultaneous triangular form of the Lindblad terms another parametrization can be obtained.
Choosing all $V_k$ traceless and an appropriate basis we may assume that they are of the form
$V_k =
\big(\begin{smallmatrix}
u_k&v_k\\0&-u_k
\end{smallmatrix}\big)$,
where $u_k, v_k\in\C$ are arbitrary for $k=1,\ldots,r$. Let $u,v\in\C^r$ be the vectors with coefficients $u_k$ and $v_k$, then we define
\begin{alignat*}{3}
c_1(x) &= |v|^2\cos(2\pi x),\quad     & c_3(x) &= 2|u|^2\sin^2(2\pi x) + |v|^2(\cos^2(2\pi x)+1)/2,\\
c_2(x) &= -2|(u,v)|\sin(2\pi x),\quad & c_4(x) &= -|(u,v)|\sin(4\pi x).
\end{alignat*}
where $(u,v)=\sum_{k=1}^r\overline{u}_kv_k$ and $|v|=\sqrt{(v,v)}$. 

\begin{lemma} \label{lemma:cool-param}
For a coolable system in the form described above, the space of generators can be parametrized as
$\{(c_1(x)+c_2(x)\sin(2\pi z),c_3(x)+c_4(x)\sin(2\pi z)):\,x\in[0,1/2],\, z\in[0,1)\}$,
where each point is obtained using the unitary $U_{x,z}$.
The non-parabolic part of the boundary is achieved by $z=\tfrac14$.
Moreover, the parabolic segment lies on a parabola which is tangent to the bisectors and has the form
$a\mapsto\Sigma+\delta+r_2/2 + \frac{1}{4(\Sigma+\delta+r_2/2)}\ a^2$.
\end{lemma}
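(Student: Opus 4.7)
The plan is to verify each of the three assertions (parametrization, location of the non-parabolic boundary, shape of the parabolic envelope) by direct calculation from the upper-triangular form.

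First, for the parametrization, I would compute $U_{x,z}^{*} V_k U_{x,z}$ explicitly. Writing $U_{x,z}=e^{\iu\pi z\sigma_z}e^{\iu\pi x\sigma_x}$ and using $V_k=\bigl(\begin{smallmatrix} u_k & v_k \\ 0 & -u_k\end{smallmatrix}\bigr)$, a short expansion yields
\[
(U_{x,z}^{*}V_kU_{x,z})_{12}=\iu u_k\sin(2\pi x)+v_k e^{-2\iu\pi z}\cos^2(\pi x),
\]
\[
(U_{x,z}^{*}V_kU_{x,z})_{21}=-\iu u_k\sin(2\pi x)+v_k e^{-2\iu\pi z}\sin^2(\pi x).
\]
Squaring moduli, summing over $k$, and forming the combinations $J_{12}\pm J_{21}$ collects the entire $z$-dependence into a single term proportional to $\Re\bigl(\iu\overline{(u,v)}e^{2\iu\pi z}\bigr)$. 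The residual gauge freedom (conjugation by diagonal unitaries multiplies every $v_k$ by a common phase but preserves both the triangular form and $|(u,v)|$) absorbs the phase of $(u,v)$, converting this term into $\pm|(u,v)|\sin(2\pi z)$ and recovering the claimed functions $c_1,\dots,c_4$.

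Second, for the location of the non-parabolic boundary, I would observe that at fixed $x$ both coordinates of the parametrization are affine in the single variable $\sin(2\pi z)\in[-1,1]$. The image is thus a line segment of slope $c_4(x)/c_2(x)=\cos(2\pi x)$, and its endpoints, attained at $\sin(2\pi z)=\pm 1$, are the only candidates on $\partial\mf Q$ coming from that segment. Sweeping $x$ through $[0,1/2]$ at $z=\tfrac14$ traces out one half of the non-parabolic part of $\partial\mf Q$, with the other half produced by the reflection symmetry $(a,b)\mapsto(-a,b)$ of Lemma~\ref{lemma:sol-basic}.

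Third, for the parabolic envelope, I would eliminate $s=\sin(2\pi z)$ to rewrite each chord as $b=c_3(x)+\cos(2\pi x)\bigl(a-c_1(x)\bigr)$ and apply the envelope condition $\partial_x b=0$. Using $c_1(x)=|v|^2\cos(2\pi x)$ and the derivative $c_3'(x)=\pi\sin(4\pi x)(4|u|^2-|v|^2)$, a short trigonometric simplification reduces the condition to $\cos(2\pi x)=a/(4|u|^2+|v|^2)$. Substituting this back yields $b=Y_0+a^2/(4Y_0)$ with $Y_0=2|u|^2+\tfrac12|v|^2$. Tangency to the bisectors follows from the factorisation
\[
Y_0+\frac{a^{2}}{4Y_0}-|a|=\frac{(|a|-2Y_0)^{2}}{4Y_0},
\]
whose unique double root is $|a|=2Y_0$.

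The main obstacle I anticipate is the final identification $Y_0=\Sigma+\delta+r_2/2$, because the characteristic values on the right are intrinsically defined in the basis diagonalising $\sum_k[V_k,V_k^{*}]$, whereas all the above is done in the triangular basis. I would resolve this by explicitly constructing the $\SU(2)$ rotation $W$ that diagonalises $\sum_k[V_k,V_k^{*}]=\bigl(\begin{smallmatrix}|v|^{2} & -2(u,v)\\ -2\overline{(u,v)} & -|v|^{2}\end{smallmatrix}\bigr)$, computing the transformed Lindblad terms $W^{*}V_kW$, and substituting into the definitions of $\Sigma$, $\delta$ and $r_2$ to confirm the identity.
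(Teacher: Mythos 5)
Your plan follows the paper's own route: the paper's entire proof is a one-line pointer to the computations of Appendix~\ref{app:parametrization}, and your first steps reproduce exactly those computations specialized to $w_k=0$. The conjugated entries you display, the gauge argument (a diagonal unitary rescales all $v_k$ by a common phase, so one may take $(u,v)\geq 0$), the chord structure with slope $c_4(x)/c_2(x)=\cos(2\pi x)$, and the envelope algebra are all correct: the chord-lines are $b=Y_0(1-t^2)+ta$ with $t=\cos(2\pi x)$ and $Y_0=2|u|^2+\tfrac12|v|^2$, whose envelope is $b=Y_0+a^2/(4Y_0)$, tangent to the bisectors at $|a|=2Y_0$. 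Two minor slips in step two: interior points of a chord \emph{can} lie on $\partial\mf Q$ --- that is precisely what happens at the tangency points along the parabolic arc --- so ``the endpoints are the only candidates'' is wrong as stated (though harmless for the non-parabolic part); and at $z=\tfrac14$ with $x\in[0,1/2]$ the endpoint curve misses, e.g., the rightmost extreme point of $\mf Q$, which is attained at $\sin(2\pi z)=-1$, so both $z=\tfrac14$ and $z=\tfrac34$ are needed; your reflection remark effectively covers this, and the paper's own phrasing is loose on the same point.

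The genuine gap is the step you yourself flagged as the main obstacle: the identification $Y_0=\Sigma+\delta+r_2/2$ cannot be verified, because with the Section~\ref{sec:general-systems} (diagonal-basis) characteristic values it is false, so your proposed change-of-basis computation will not close. Concretely, take $u=(1,0)$, $v=(1,1)$, so $Y_0=3$ and $\sum_k[V_k,V_k^*]=\bigl(\begin{smallmatrix}2 & -2\\ -2 & -2\end{smallmatrix}\bigr)$; rotating with $\cos2\theta=\tfrac1{\sqrt2}$, $\sin2\theta=-\tfrac1{\sqrt2}$ yields the diagonal-basis values $\Delta=2\sqrt2$, $\Sigma=\tfrac72$, $\delta=-\tfrac54$, $r_1=\tfrac34$, $r_2=1$ (and $\phi=\pi$, so the system is odd and non-degenerate). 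Then $\Sigma+\delta+r_2/2=\tfrac{11}4\neq 3$, whereas $\Sigma+\delta+r_1=3=Y_0$. What your computation will actually establish is the identity $Y_0=\Sigma+\delta+r_1$, which is also what consistency with Lemma~\ref{lemma:integral-boundary} demands: in the odd case the upper boundary at $x=\tfrac14$ equals $\Sigma+\delta+r_1$, i.e.\ the vertex, and tangency to the bisectors is equivalent to the coolable-specific identity $\Delta^2=4(\Sigma+\delta+r_1)\bigl(\tfrac{r_2^2}{8r_1}-\delta-r_1\bigr)$. Equivalently, evaluating the Section~\ref{sec:general-systems} formulas directly in the triangular basis (where $(V_k)_{21}=0$ forces $r_1=r_2=0$ there) gives vertex $\Sigma+\delta=2|u|^2+\tfrac12|v|^2$. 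So you should prove $Y_0=\Sigma+\delta+r_1$ and record the constant in the lemma as an erratum ($r_1$, not $r_2/2$), rather than spend effort trying to derive the stated identity.
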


\begin{proof}
This follows directly from the computations in Appendix~\ref{app:parametrization}.
\end{proof}

\begin{figure}[thb]
\centering
\includegraphics[width=0.45\textwidth]{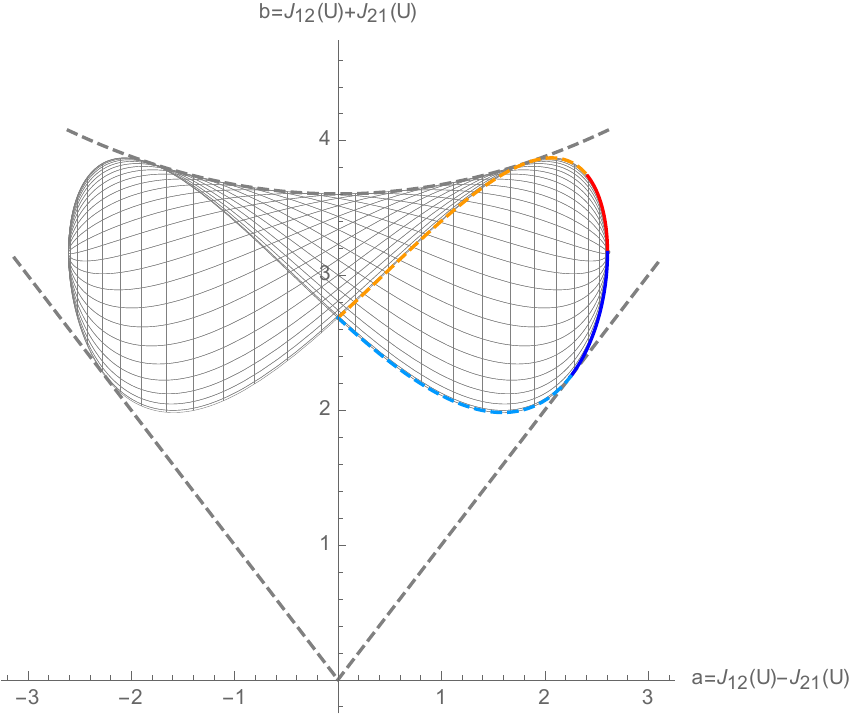}
\includegraphics[width=0.45\textwidth]{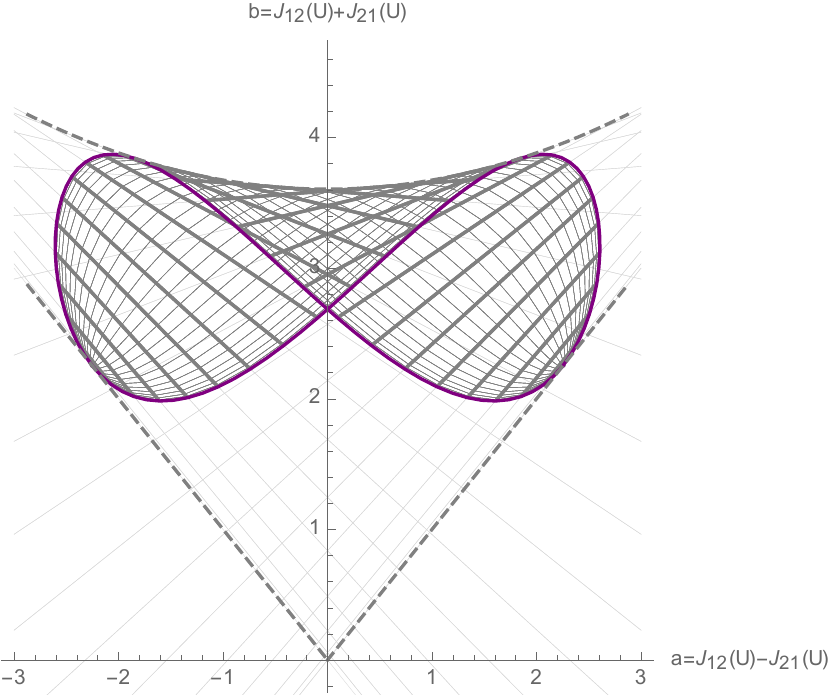}
\caption{Two different parametrizations of the space of generators for a coolable system. 
The usual parametrization is on the left, and the parametrization of Lemma~\ref{lemma:cool-param} is on the right.
The Lindblad terms are given by
$\big(\begin{smallmatrix} 0.7-0.5\iu & 0.6-0.6\iu \\ 0.0 & -0.8+0.9\iu \end{smallmatrix}\big),
\big(\begin{smallmatrix} 0.8-0.3\iu & -1.0 \\ 0.0 & 0.1 \end{smallmatrix}\big)$, and
$\big(\begin{smallmatrix} 0.3-0.2\iu & -0.2 + 0.7\iu \\ 0.0 & 0.2-0.6\iu \end{smallmatrix}\big)$.
}
\label{fig:real-systems2}
\end{figure}

\section{Acknowledgments}

I would like to thank Frederik vom Ende for his valuable feedback.
The project was funded i.a.\ by the Excellence Network of Bavaria under ExQM, by {\it Munich Quantum Valley} of the Bavarian State Government with funds from Hightech Agenda {\it Bayern Plus}.

\appendix

\section{Special systems} \label{sec:special-systems}

In this section we study some highly structured systems, namely unital systems and the Bloch equations.
The case of rank one systems, i.e.\ those defined by a single Lindblad term, has been treated in detail by the author in~\cite[Sec.~V]{OptimalCooling24}.

\subsection{Unital systems} \label{app:unital}

A simple but broad class of examples is given by \emph{unital} systems, which are defined by $-L(\mathds1)=\sum_{k=1}^r [V_k,V_k^*]=0$. 
This condition is independent of the Hamiltonian part of $-L$, and hence also of the control Hamiltonians. 
Unital systems in arbitrary (finite) dimension have been addressed in~\cite[Sec.~5]{LindbladReduced23}.
The qubit case presented here allows for even stronger results.
Unital channels were also studied in~\cite[Sec.~IV,V]{Mukherjee13}.

\begin{lemma} \label{lemma:unital}
The following are equivalent:
\begin{enumerate*}[(i)]
\item \label{it:unital} $-L$ is unital,
\item \label{it:unital-u} the optimal derivative function satisfies $\mu(1/2)=0$,
\item \label{it:unital-sol} the space of generators satisfies $\mf Q\subset\{(0,y):y\geq0\}$.
\end{enumerate*}
\end{lemma}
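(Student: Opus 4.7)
The plan is to establish a short cycle (or equivalently two pairs of implications) using only lemmas already stated. Concretely I would prove \eqref{it:unital}$\Leftrightarrow$\eqref{it:unital-u} via Lemma~\ref{lemma:mu-basic}, and \eqref{it:unital-u}$\Leftrightarrow$\eqref{it:unital-sol} via Proposition~\ref{prop:polarity} together with the reflection symmetry of $\mf Q$ from Lemma~\ref{lemma:sol-basic}.

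For \eqref{it:unital}$\Leftrightarrow$\eqref{it:unital-u}: The key input is the identity $\mu(1/2)=$ (larger eigenvalue of $M:=\tfrac12\sum_k[V_k,V_k^*]$) already established in Lemma~\ref{lemma:mu-basic}. I would then observe that $M$ is Hermitian (each $[V_k,V_k^*]$ is) and traceless (commutators have vanishing trace), so in dimension $2$ its eigenvalues are $\pm\nu$ for some $\nu\geq 0$. Hence the larger eigenvalue vanishes exactly when $M=0$, i.e.\ exactly when $-L$ is unital in the sense $-L(\mathds1)=\sum_k[V_k,V_k^*]=0$.

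For \eqref{it:unital-u}$\Leftrightarrow$\eqref{it:unital-sol}: By Proposition~\ref{prop:polarity} applied at $\lambda=1/2$, we have $\mu(1/2)=\max_{(a,b)\in\mf Q}\tfrac{1}{2}a$. If $\mf Q\subseteq\{(0,y):y\geq 0\}$, clearly this maximum is $0$. Conversely, if $\mu(1/2)=0$, then $a\leq 0$ for every $(a,b)\in\mf Q$; invoking the symmetry $(a,b)\mapsto(-a,b)$ from Lemma~\ref{lemma:sol-basic} forces $a\geq 0$ as well, so $a=0$ throughout $\mf Q$, and the sign constraint $b\geq|a|=0$ (also from Lemma~\ref{lemma:sol-basic}) completes the inclusion.

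There is no real obstacle here; the only substantive content is the two‑dimensional linear‑algebra fact that a Hermitian traceless $2\times 2$ matrix with non‑positive largest eigenvalue must vanish, and the remaining arguments are direct applications of previously established lemmas. I would present the proof as one short paragraph doing \eqref{it:unital}$\Leftrightarrow$\eqref{it:unital-u} and a second doing \eqref{it:unital-u}$\Leftrightarrow$\eqref{it:unital-sol}, with the chain of equivalences yielding the full three‑way equivalence.
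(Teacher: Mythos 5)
Your proposal is correct and takes essentially the same route as the paper: the equivalence \ref{it:unital}$\Leftrightarrow$\ref{it:unital-u} rests on Lemma~\ref{lemma:mu-basic} (you merely spell out the traceless-Hermitian $2\times2$ eigenvalue fact that the paper leaves implicit), and \ref{it:unital-u}$\Leftrightarrow$\ref{it:unital-sol} rests on the reflection symmetry of Lemma~\ref{lemma:sol-basic}, which the paper phrases geometrically as every line passing through $(1/2,0)$ while you phrase it algebraically via Proposition~\ref{prop:polarity} at $\lambda=1/2$. These are the same argument in different guises, so no further changes are needed.
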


\begin{proof}
By Lemma~\ref{lemma:mu-basic}, $\mu(1/2)=0$ if and only if $\sum_{k=1}^r [V_k,V_k^*]=0$. 
This shows the equivalence of~\ref{it:unital} and~\ref{it:unital-u}. 
Now assume~\ref{it:unital-u}, then every line in $\mf Q$ passes through $(1/2,0)$ due to the central symmetry.
This is equivalent to $J_{12}(U)=J_{21}(U)$ for all $U\in\SU(2)$ and hence to~\ref{it:unital-sol}.
\end{proof}
\noindent Condition~\ref{it:unital-u} implies that the graph of $\derv$ is a cone with origin $(1/2,0)$, as illustrated in Figure~\ref{fig:unital}.

\begin{figure}[thb]
\centering
\includegraphics[width=0.45\textwidth]{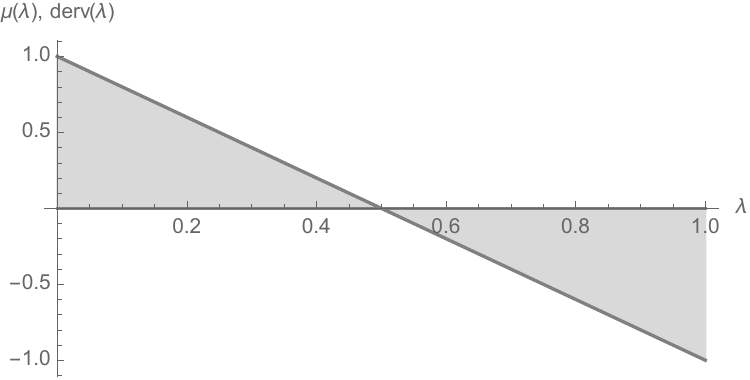}
\includegraphics[width=0.45\textwidth]{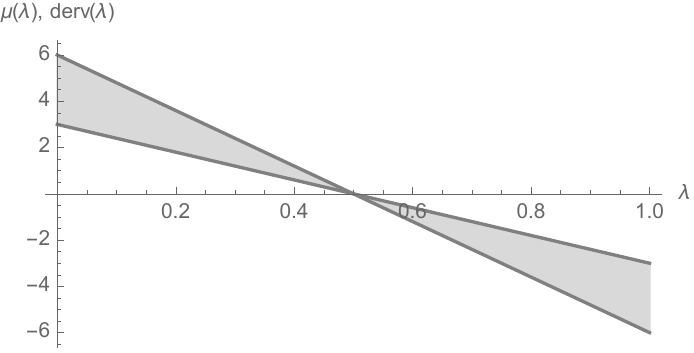}
\caption{The graph of $\derv$ for two unital systems. As shown in Lemma~\ref{lemma:unital}, all lines pass through $(1/2,0)$, and this is indeed characteristic of unital systems. The systems used are defined by $\sqrt{\gamma_k}\sigma_k$ for $k\in{x,y,z}$. Since the Pauli matrices are normal, such systems are always unital. 
Left: The system is defined by ${\gamma_x}=1$ and ${\gamma_y}={\gamma_z}=0$. So it is of rank one and by Lemma~\ref{lemma:unital-stab} it is unital stabilizable. By Lemma~\ref{lemma:unital-stab-u0} it holds that $\derv(0)=[0,1]$. 
Right: The system is defined by ${\gamma_x}=4$, ${\gamma_y}=2$, and ${\gamma_z}=1$. Hence by Proposition~\ref{proposition:unital-systems} it holds that $\derv(0)=[3,6]$.}
\label{fig:unital}
\end{figure}
\noindent It is clear that for unital systems the stabilizable region is either $\{1/2\}$ or $[0,1]$.
Moreover $\derv$, $\mu$, and $\mf Q$ are completely described by the minimal and maximal values of $\derv(0)$.

\begin{lemma} \label{lemma:unital-stab}
Let $-L$ be an arbitrary Lindblad generator and let $V_k$ be a corresponding family of Lindblad terms. Then the following are equivalent:
\begin{enumerate*}[(i)]
\item \label{it:unit-stab-normal} all $V_k$ are normal and commute with each other,
\item \label{it:unit-stab-origin} the space of generators satisfies $(0,0)\in\mf Q$,
\item \label{it:unit-stab-regions} the system is unital and the stabilizable region is all of $[0,1]$.
\end{enumerate*}
We call such systems \emph{unital stabilizable}.
\end{lemma}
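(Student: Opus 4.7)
The plan is to prove the equivalences in a cycle, using the concrete formula $J_{ij}(U)=\sum_k |(U^*V_kU)_{ij}|^2$ together with Lemma~\ref{lemma:unital} as the main tool.

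First I would dispose of the equivalence (i)$\Leftrightarrow$(ii) by a direct examination of what it means for the pair $(J_{12}(U)-J_{21}(U),\,J_{12}(U)+J_{21}(U))$ to equal $(0,0)$. Since $J_{12}(U),J_{21}(U)\geq 0$, this is equivalent to $J_{12}(U)=J_{21}(U)=0$, which, by the defining formula, is equivalent to the off-diagonal entries of $U^*V_kU$ vanishing for every $k$, i.e.\ all $V_k$ being simultaneously diagonalized in the basis determined by $U$. The standard linear-algebra fact that a family of matrices is simultaneously unitarily diagonalizable iff each is normal and they pairwise commute then gives (i)$\Leftrightarrow$(ii).

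Next I would prove (ii)$\Rightarrow$(iii). If $(0,0)\in\mf Q$, pick the corresponding $U$; then in the basis of $U$ all $V_k$ are diagonal, hence $[V_k,V_k^*]=0$ for every $k$, so $\sum_k[V_k,V_k^*]=0$ and $-L$ is unital. Moreover, for this $U$ the induced vector field $-Q_U$ from~\eqref{eq:line} is identically zero on $[0,1]$, so $0\in\derv(\lambda)$ for every $\lambda$, meaning every $\lambda\in[0,1]$ is stabilizable.

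Finally I would prove (iii)$\Rightarrow$(ii). Assume $-L$ is unital and every $\lambda\in[0,1]$ is stabilizable. By Lemma~\ref{lemma:unital}\,\ref{it:unital-sol}, unitality forces $J_{12}(U)-J_{21}(U)=0$ for every $U\in\SU(2)$, so every line~\eqref{eq:line} has the form $\lambda\mapsto J(U)(1-2\lambda)$ with $J(U):=J_{12}(U)=J_{21}(U)\geq 0$. The hypothesis that $\lambda=0$ is stabilizable supplies some $U$ with $0\in\derv(0)$, i.e.\ $J(U)=0$; hence $J_{12}(U)=J_{21}(U)=0$ and $(0,0)\in\mf Q$.

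I do not expect a real obstacle: the statement is essentially the observation that $(0,0)\in\mf Q$ encodes simultaneous unitary diagonalizability of the $V_k$, with the unital characterization of Lemma~\ref{lemma:unital} handling the remaining direction. The only point requiring care is making explicit that stabilizability of a single point (here $\lambda=0$) is already enough in the unital case to force $(0,0)\in\mf Q$, because then the line through that point automatically has zero slope by unitality.
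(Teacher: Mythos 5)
Your proposal is correct and follows essentially the same route as the paper: the equivalence of \ref{it:unit-stab-normal} and \ref{it:unit-stab-origin} via simultaneous unitary diagonalizability, the implication to \ref{it:unit-stab-regions} by observing the vanishing vector field $-Q_U$, and the converse using unitality to force $J_{12}(U)=J_{21}(U)$. You merely spell out the paper's terse final step (that, for a unital system, stabilizability of $\lambda=0$ already yields $(0,0)\in\mf Q$), which is a welcome clarification but not a different argument.
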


\begin{proof}
Since~\ref{it:unit-stab-normal} is equivalent to the existence of a unitary $U\in\SU(2)$ simultaneously diagonalizing all $V_k$, it is also equivalent to~\ref{it:unit-stab-origin}. 
Now assume the conditions above. 
This implies that all of $[0,1]$ is stabilizable and that the system is unital.
This shows~\ref{it:unit-stab-regions}. 
Finally assume~\ref{it:unit-stab-regions}. The system is unital and so $[0,1]$ is stabilizable only if $(0,0)\in\mf Q$. 
This concludes the proof.
\end{proof}
\noindent Clearly for a unital stabilizable system, $\mu$ is completely characterized by its value at $0$. The following result yields a simple way to determine this value.

\begin{lemma} \label{lemma:unital-stab-u0}
Consider a unital stabilizable system. 
Then $\mu(0)=\frac14\sum_{k=1}^r|\lambda_1(V_k)-\lambda_2(V_k)|^2$ where the $\lambda_i(V_k)$ denote the eigenvalues of $V_k$.
\end{lemma}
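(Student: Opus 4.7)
The plan is to exploit the characterization of unital stabilizable systems from Lemma~\ref{lemma:unital-stab}: there exists a unitary $U_0\in\SU(2)$ simultaneously diagonalizing all the $V_k$, i.e.\ $U_0^*V_kU_0=D_k:=\diag(\lambda_1(V_k),\lambda_2(V_k))$. First I would unwind the definitions to see that $\mu(0)=\max_{U\in\SU(2)}J_{12}(U)$, which follows immediately from $-Q_U(0)=J_{12}(U)$ and the definition of $\mu$ as the pointwise maximum of the affine functions $-Q_U$. Note the maximum is attained since $\SU(2)$ is compact and $U\mapsto J_{12}(U)$ is continuous.

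Then I would reparametrize the maximization by writing $U=U_0V$ with $V\in\SU(2)$, so that $U^*V_kU=V^*D_kV$. Using the standard parametrization $V=\bigl(\begin{smallmatrix}a & b\\-\bar b & \bar a\end{smallmatrix}\bigr)$ with $|a|^2+|b|^2=1$, a short direct computation gives
\[
(V^*D_kV)_{12}=\bar a\,b\,(\lambda_1(V_k)-\lambda_2(V_k)),
\]
so $|(V^*D_kV)_{12}|^2=|a|^2|b|^2\,|\lambda_1(V_k)-\lambda_2(V_k)|^2$. Summing over $k$ decouples the maximization: $J_{12}(U)=|a|^2|b|^2\sum_{k=1}^r|\lambda_1(V_k)-\lambda_2(V_k)|^2$.

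Finally, under the constraint $|a|^2+|b|^2=1$, the AM--GM inequality gives $|a|^2|b|^2\le 1/4$, with equality at $|a|=|b|=1/\sqrt2$ (e.g.\ $V=\tfrac{1}{\sqrt2}\bigl(\begin{smallmatrix}1 & 1\\-1 & 1\end{smallmatrix}\bigr)$), which yields the claimed value $\mu(0)=\tfrac14\sum_{k=1}^r|\lambda_1(V_k)-\lambda_2(V_k)|^2$. There is no real obstacle here: the only thing to keep in mind is that both eigenvalue differences $\lambda_1(V_k)-\lambda_2(V_k)$ appear with the \emph{same} factor $\bar a b$, so no cancellation can occur between different $k$, and the maximization factors cleanly through a single scalar $|a|^2|b|^2$.
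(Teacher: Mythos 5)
Your proof is correct, and it takes a mildly but genuinely different route from the paper's. Both arguments start from the same place: Lemma~\ref{lemma:unital-stab} gives a unitary simultaneously diagonalizing all $V_k$, and $\mu(0)=\max_{U\in\SU(2)}J_{12}(U)$ by definition. From there the paper normalizes the $V_k$ to be traceless multiples of $\sigma_z$ and then invokes unitary reshuffling of the Lindblad terms (an invariance of the generator, cf.~\cite[Lem.~C.3]{LindbladReduced23}) to collapse everything to a \emph{single} nonzero term, reducing the claim to a rank-one computation that it leaves implicit (``performing this calculation yields the desired result''). You instead keep all $r$ terms, parametrize $V=\big(\begin{smallmatrix}a&b\\-\bar b&\bar a\end{smallmatrix}\big)\in\SU(2)$ explicitly, and observe that $(V^*D_kV)_{12}=\bar a\,b\,(\lambda_1(V_k)-\lambda_2(V_k))$ makes the objective factor as $|a|^2|b|^2\sum_k|\lambda_1(V_k)-\lambda_2(V_k)|^2$, finished by AM--GM with equality at $|a|=|b|=1/\sqrt2$. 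Your version is more self-contained: it needs neither the reshuffling invariance nor the tracelessness normalization (adding multiples of the identity to $D_k$ leaves the off-diagonal entries of $V^*D_kV$ unchanged, which your formula shows automatically), and it supplies the final computation the paper omits. What the paper's route buys is the structural reduction to the rank-one picture, consistent with its treatment of rank one systems elsewhere; what yours buys is a one-line explicit maximization with the optimizer exhibited.
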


\begin{proof}
By Lemma~\ref{lemma:unital-stab} all Lindblad terms $V_k$ are simultaneously unitarily diagonalizable. 
Using a unitary change of basis, we may assume that they are indeed diagonal. 
Moreover we may assume that they are traceless. 
Hence all Lindblad terms are multiples of $\sigma_z$.
Applying a unitary reshuffling to the $V_k$ one may assume that all but one $V_k$ are zero.
Performing this calculation yields the desired result.
\end{proof}

Now we will completely describe $\mu$ and $\mf Q$ for arbitrary unital systems. 
For this let $C\in\C^{3,3}$ denote the Kossakowski matrix (cf.~\cite[p.~121]{BreuPetr02}) of the system with respect to the Pauli basis $\{\sigma_x,\sigma_y,\sigma_z\}$.

\begin{lemma}
The system is unital if and only if $C$ is real.
\end{lemma}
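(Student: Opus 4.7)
The plan is to compute $-L(\mathds{1})$ directly from the Kossakowski form of the generator and identify when it vanishes. Recall that with respect to the Pauli basis $\{\sigma_x,\sigma_y,\sigma_z\}$, whose elements are Hermitian, the dissipative part of the generator takes the form
$$
\sum_{a,b}C_{ab}\bigl(\sigma_a\rho\sigma_b-\tfrac12\{\sigma_b\sigma_a,\rho\}\bigr),
$$
where $C\in\C^{3,3}$ is positive semidefinite and in particular Hermitian. The Hamiltonian part annihilates $\mathds{1}$, so I would immediately obtain
$$
-L(\mathds{1})=\sum_{a,b}C_{ab}\bigl(\sigma_a\sigma_b-\sigma_b\sigma_a\bigr)=\sum_{a,b}C_{ab}[\sigma_a,\sigma_b].
$$

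Next I would plug in the Pauli commutation relations $[\sigma_a,\sigma_b]=2\iu\sum_c\epsilon_{abc}\sigma_c$ to rewrite
$$
-L(\mathds{1})=2\iu\sum_c\Bigl(\sum_{a,b}C_{ab}\epsilon_{abc}\Bigr)\sigma_c.
$$
Since the Pauli matrices are linearly independent, unitality is equivalent to $\sum_{a,b}C_{ab}\epsilon_{abc}=0$ for each $c\in\{x,y,z\}$. Grouping terms by antisymmetry of $\epsilon_{abc}$ yields $\sum_{a<b}(C_{ab}-C_{ba})\epsilon_{abc}=0$ for each $c$, and by varying $c$ this is precisely the condition that $C_{ab}=C_{ba}$ for all $a,b$, i.e. that $C$ is symmetric.

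Finally I would combine this with Hermiticity of $C$: a matrix is simultaneously Hermitian and symmetric if and only if it coincides with its complex conjugate, hence if and only if it is real. This gives the desired equivalence in both directions. The computation is entirely routine; the only conceptual point is to remember that $C$ is already Hermitian, so that symmetry and reality collapse into the same condition, which I expect to be the one subtlety worth spelling out in the written proof.
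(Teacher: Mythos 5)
Your proof is correct and is essentially the explicit version of the paper's own argument: the paper merely sketches that applying the non-diagonal (Kossakowski) form of the generator to $\mathds{1}$ shows unitality is equivalent to $C$ being symmetric, hence real, and you carry out exactly this computation via the Pauli commutation relations. You also rightly highlight the one conceptual point the paper leaves implicit, namely that Hermiticity of $C$ is what makes symmetry and reality coincide.
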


\begin{proof}
We will give a sketch of the proof.
Using the general ``non-diagonal'' form of the Lindblad equation (cf.~\cite[p.~121]{BreuPetr02}) with respect to the Pauli basis one finds by a simple computation
that the generator is unital, i.e.\ $-L(\id)=0$, if and only if the Kossakowski matrix $C$ is symmetric, or equivalently, real. 
\end{proof}

\begin{proposition} \label{proposition:unital-systems}
Consider a unital system.
Let $\gamma_1\geq\gamma_2\geq\gamma_3$ denote the eigenvalues of $C$.
Then $\derv(0)=[\gamma_2+\gamma_3,\gamma_1+\gamma_2]$.
\end{proposition}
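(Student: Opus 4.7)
The strategy is to reduce the problem to a canonical diagonal form via a unitary change of basis, then compute $\derv(0)$ directly as the range of a quadratic form on the unit sphere.

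First I would exploit the previous lemma: since the system is unital, the Kossakowski matrix $C$ is real symmetric, and it is positive semidefinite by the complete positivity of the dynamics; thus its eigenvalues are $\gamma_1\geq\gamma_2\geq\gamma_3\geq 0$. An orthogonal diagonalization of $C$ in the Pauli basis corresponds, via the standard double cover $\SU(2)\to\SO(3)$, to a unitary change of basis on $\mf{pos}_1(2)$, which preserves the space of generators $\mf Q$ and hence $\derv(0)$. After this change of basis and a unitary reshuffling of the Lindblad terms (which preserves $-L$, cf.~\cite[Lem.~C.3]{LindbladReduced23}), we may assume that the system is generated by exactly three traceless Lindblad terms of the canonical form $V_k=\sqrt{\gamma_k}\,\sigma_k$ for $k=1,2,3$, where $\sigma_1=\sigma_x,\sigma_2=\sigma_y,\sigma_3=\sigma_z$.

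Second, I would compute $J_{12}(U)$ in this canonical basis. For $U\in\SU(2)$ let $R\in\SO(3)$ be the image under the adjoint action, so $U^*\sigma_k U=\sum_{j=1}^3 R_{kj}\sigma_j$. Reading off the $(1,2)$ entries of the Pauli matrices, namely $(\sigma_x)_{12}=1$, $(\sigma_y)_{12}=-\iu$, $(\sigma_z)_{12}=0$, yields
\begin{align*}
(U^*V_k U)_{12}=\sqrt{\gamma_k}\,(R_{k1}-\iu R_{k2}),
\end{align*}
and hence $|(U^*V_k U)_{12}|^2=\gamma_k(R_{k1}^2+R_{k2}^2)=\gamma_k(1-R_{k3}^2)$. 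Summing over $k$ gives
\begin{align*}
J_{12}(U)=(\gamma_1+\gamma_2+\gamma_3)-\sum_{k=1}^3\gamma_k R_{k3}^2.
\end{align*}
As $U$ ranges over $\SU(2)$, the third column $v=(R_{13},R_{23},R_{33})^\top$ of $R$ ranges over the entire unit sphere $S^2\subset\R^3$; therefore $\sum_k\gamma_k R_{k3}^2$ ranges over $[\gamma_3,\gamma_1]$ (the Rayleigh quotient of $\diag(\gamma_1,\gamma_2,\gamma_3)$ on $S^2$), attaining both endpoints and, by connectedness of $S^2$, every value in between.

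Third, I would assemble the conclusion. From the definition in~\eqref{eq:line} we have $-Q_U(0)=J_{12}(U)$, so $\derv(0)=\{J_{12}(U):U\in\SU(2)\}$, which by the above equals $[(\gamma_1+\gamma_2+\gamma_3)-\gamma_1,\,(\gamma_1+\gamma_2+\gamma_3)-\gamma_3]=[\gamma_2+\gamma_3,\,\gamma_1+\gamma_2]$, as claimed.

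The only nontrivial step is the reduction to the canonical diagonal form $V_k=\sqrt{\gamma_k}\sigma_k$; this requires invoking the invariance of $\mf Q$ under state-space unitary conjugation together with unitary reshuffling of the Lindblad operators. Once this normalization is in place the rest is a routine rotation calculation.
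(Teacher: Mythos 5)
Your proof is correct, and it shares the paper's key first step: diagonalize the real Kossakowski matrix by an orthogonal transformation realized as a unitary basis change, reducing to the canonical Lindblad terms $\sqrt{\gamma_k}\,\sigma_k$. Where you diverge is the finish. The paper simply defers to the explicit trigonometric parametrization of Appendix~B.1: with $V_k=\sqrt{\gamma_k}\sigma_k$ one finds $\Delta=0$, $r_2=0$, $\Sigma=2(\gamma_1+\gamma_2)$, $\delta=2\gamma_3-\gamma_1-\gamma_2$, $r_1=\gamma_1-\gamma_2$, and reads off the range of $J_{12}(U_{x,z})=\tfrac12\bigl(\Sigma+(\delta+r_1\sin(4\pi z+\phi_1))\sin^2(2\pi x)\bigr)$ as $[\gamma_2+\gamma_3,\gamma_1+\gamma_2]$. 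You instead compute $J_{12}(U)=\sum_k\gamma_k-\sum_k\gamma_k R_{k3}^2$ via the adjoint action and identify the range through a Rayleigh quotient over $S^2$, which is self-contained, coordinate-free, and arguably more transparent about why only the two largest eigenvalues appear (one is maximizing $-v^\top C v$ over the sphere). Your version also makes explicit two points the paper leaves implicit: positive semidefiniteness of $C$ (needed for the square roots $\sqrt{\gamma_k}$ to make sense), and that $\derv(0)$ is invariant under basis change plus reshuffling. One pedantic remark: orthogonal diagonalization may a priori produce a matrix of determinant $-1$, which is not in the image of $\SU(2)\to\SO(3)$; flipping the sign of one column fixes this and is worth a half-sentence. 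The tradeoff is symmetric: the paper's route reuses a computation it needs anyway for general systems, while yours avoids the appendix machinery entirely.
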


\begin{proof}
Applying a unitary basis transformation to the qubit system changes the Kossakowski matrix via a corresponding orthogonal transformation.
Since $C$ is real, it can be orthogonally diagonalized and hence, without loss of generality, we can assume that the Lindblad terms are $\sqrt{\gamma_1}\sigma_x$, $\sqrt{\gamma_2}\sigma_y$, and $\sqrt{\gamma_3}\sigma_z$.
The result then follows easily from the computation in Appendix~\ref{app:parametrization}.
\end{proof}

\subsection{Bloch equations} \label{subsec:bloch-eq}

Another family of simple Lindblad generators for the qubit is given by those which have a rotation symmetry about some axis, which reduces the dimension of the problem.
Without loss of generality we assume that the symmetry is about the $z$-axis. 
Such generators correspond to the well-known Bloch equations. 
It turns out that such systems can be solved analytically, and we will do so in detail in this section. 
In Lemma~\ref{lemma:magic-plane} we recover a known result from~\cite{Lapert10} about the so-called magic plane for optimal heating and in Lemma~\ref{lemma:bloch-stab} we recover the steady state ellipsoid from~\cite{Lapert13}.
The optimal controls can also be determined, as was done in~\cite[Sec.~6]{Reduced23}.
A special case of the Bloch equations was also considered in detail in~\cite[Sec.~III]{Mukherjee13}.

The Bloch equations are equivalent to the Lindblad generator defined by the Lindblad terms $\sqrt{\gamma_+}\sigma_+$, $\sqrt{\gamma_-}\sigma_-$ and $\sqrt{\gamma_z}\sigma_z$ where $\sigma_\pm=(\sigma_x\pm\iu\sigma_y)/2$ and $\gamma_+,\gamma_-,\gamma_z\geq0$. The case considered in~\cite[Sec.~III]{Mukherjee13} corresponds to the case $\gamma_z=0$.
For convenience we introduce the following parameters:
\begin{align*}
\Delta = |\gamma_+-\gamma_-|, \quad
\Sigma = \gamma_++\gamma_-, \quad
\delta = 2\gamma_z - \Sigma/2\,.
\end{align*}
Indeed, the values of $\Delta$, $\Sigma$, and $\delta$ defined here correspond to those defined in the general case in Section~\ref{sec:general-systems} with $r_1=r_2=0$.
By a change of basis we may and will assume without loss of generality that $\gamma_+\geq\gamma_-$.

\begin{remark} 
The Bloch equations for a single qubit are often written in the following form, see for instance~\cite[Sec.~5.5]{AM11},
\begin{align*}
\dot{\bf M}=\gamma {\bf M}\times {\bf B}-R({\bf M}-{\bf M}_\beta),\quad R=\diag(T_2^{-1},T_2^{-1},T_1^{-1})\,,
\end{align*}
where ${\bf M}$ is the spin magnetization of the system, $\gamma$ is the gyromagnetic ratio, ${\bf B}=(0,0,B)$ is the magnetic field, and $R$ is the relaxation matrix with $T_1$ the longitudinal and $T_2$ the transversal relaxation time. Furthermore ${\bf M}_\beta=(0,0,M_\beta)$ is the steady state magnetization and satisfies $2M_\beta/(\gamma\hbar)\in[-1,1]$. Then we have the relations
\begin{gather*}
T_1^{-1} = \gamma_+ + \gamma_- = \Sigma,\quad
T_2^{-1} = (\gamma_+ + \gamma_- + 4\gamma_z)/2
= \Sigma + \delta \\
2M_\beta/(\gamma\hbar) = (\gamma_+-\gamma_-)/(\gamma_++\gamma_-) = \Delta/\Sigma\,,
\end{gather*}
Note that the famous relation $2T_1\geq T_2$ is equivalent to the non-negativity of the relaxation rates $\gamma_+,\gamma_-$ and $\gamma_z$ and hence to the complete positivity of the evolution.
\end{remark}

\begin{figure}[thb]
\centering
\includegraphics[width=0.35\textwidth]{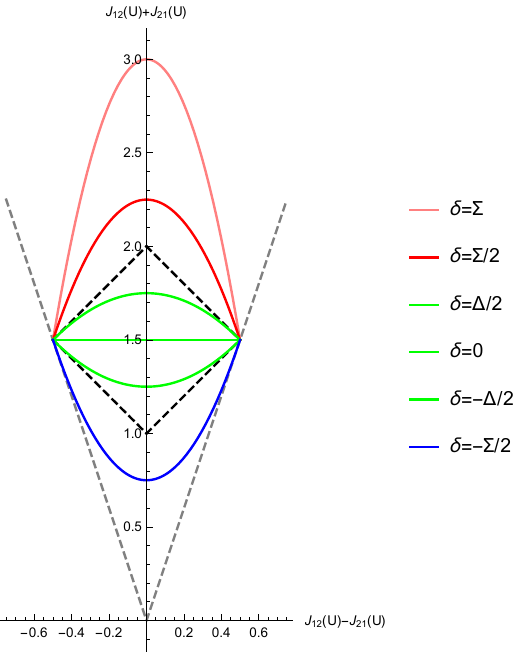}
\includegraphics[width=0.55\textwidth]{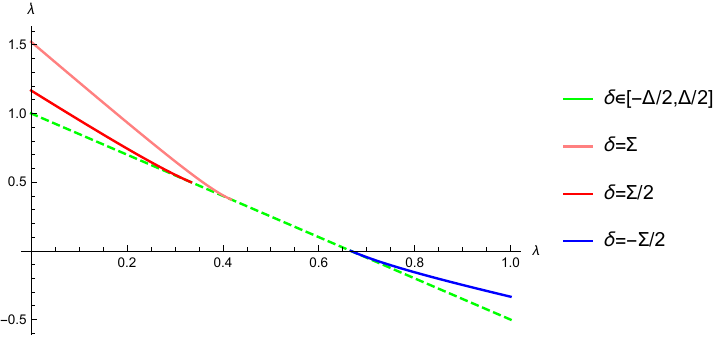}
\caption{Several examples of systems defined by Bloch equations. For all cases we chose $\gamma_+=1$ and $\gamma_-=1/2$. Hence $\Sigma=3/2$ and $\Delta=1/2$. We consider different values of $\gamma_z$, and hence $\delta$.
Left: The space of generators $\mf Q$ for fixed $\Sigma$ and $\Delta$ and different $\delta$, as described in Lemma~\ref{lemma:bloch-sol}. In each case we get a parabolic segment with endpoints $(\pm\Delta,\Sigma)$ and intersecting the $y$-axis at $\Sigma+\delta$. By definition $\delta\geq\Sigma/2$ and hence the parabolas are contained between the lines connecting the endpoints with the origin (gray dashed). This allows us to find the purest achievable state in Corollary~\ref{coro:bloch-purest}.
Right: The optimal derivative $\mu$ for fixed $\Sigma$ and $\Delta$ and different $\delta$, as described in Lemma~\ref{lemma:bloch-upper-bound}. For $\delta\in[-\Delta/2,\Delta/2]$ the optimal derivative $\mu$ is linear (green dashed) and only depends on $\gamma_+$ and $\gamma_-$ (or $\Sigma$ and $\Delta$ equivalently). For other values of $\delta$ the upper bound has to be modified on an interval $I$ defined in the same lemma. Note that this modification does not affect the intersection of $\mu$ with the abscissa, again reflecting Corollary~\ref{coro:bloch-purest}.
}
\label{fig:various-bloch}
\end{figure}

The following results analytically parametrize the space of generators $\mf Q$ and deduce the analytical formula for the optimal derivative function $\mu$. See Figure~\ref{fig:various-bloch} for examples. As a consequence we can also determine the purest stabilizable state.

\begin{lemma}
\label{lemma:bloch-sol}
For the Bloch equations, the space of generators $\mf Q$ is the graph of the parabolic segment
\begin{align*}
f(a) = \Sigma + \delta\big(1-\tfrac{a^2}{\Delta^2}\big),
\quad a\in[-\Delta,\Delta].
\end{align*}
The point $(a,f(a))$ can be obtained using the unitary $\exp(i\pi x\sigma_x)$ satisfying $a=\Delta\cos(2\pi x)$.
\end{lemma}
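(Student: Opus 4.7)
The plan is to specialize Proposition~\ref{prop:general-parametrization} to this highly symmetric setting. First I would check that the Lindblad terms $V_1=\sqrt{\gamma_+}\sigma_+$, $V_2=\sqrt{\gamma_-}\sigma_-$, $V_3=\sqrt{\gamma_z}\sigma_z$ satisfy the standing assumptions of Section~\ref{sec:gen-param}: they are traceless by inspection, and a short computation gives $\sum_k [V_k,V_k^*] = (\gamma_+-\gamma_-)\sigma_z$, which is already diagonal. Substituting into the definitions of $\Delta, \Sigma, \delta$ recovers the values stated above, with $\Delta = \gamma_+-\gamma_- \geq 0$ under the assumed ordering $\gamma_+ \geq \gamma_-$.

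The key observation is that because each of $\sigma_+$, $\sigma_-$, $\sigma_z$ has only a single nonzero matrix entry, every summand in the definitions of $r_1 e^{\iu\phi_1}$ and $r_2 e^{\iu\phi_2}$ vanishes. Hence $r_1 = r_2 = 0$, placing the Bloch equations in the degenerate regime of Remark~\ref{rmk:degen}; this is the algebraic manifestation of the rotational symmetry of the generator about the $z$-axis, and it is the single fact on which the whole computation rests.

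With $r_1 = r_2 = 0$, the function $F(x,z)$ of Proposition~\ref{prop:general-parametrization} collapses to $\Sigma + \delta\sin^2(2\pi x)$, independent of $z$. Choosing $z = 0$ then reduces the parametrizing unitary to $U_{x,0} = \exp(\iu\pi x\sigma_x)$, and the space of generators becomes
$$
\mf Q = \{(\Delta\cos(2\pi x),\,\Sigma + \delta\sin^2(2\pi x)) : x \in [0,1/2]\}.
$$
Reparametrizing by $a = \Delta\cos(2\pi x) \in [-\Delta,\Delta]$ and using $\sin^2(2\pi x) = 1 - a^2/\Delta^2$ yields the claimed $f(a) = \Sigma + \delta(1 - a^2/\Delta^2)$, and the unitary attaining the point $(a,f(a))$ is precisely $\exp(\iu\pi x\sigma_x)$ for the corresponding $x$. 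There is essentially no real obstacle once Proposition~\ref{prop:general-parametrization} is invoked; the only subtle point is the edge case $\gamma_+ = \gamma_-$, where $\Delta = 0$ degenerates the parabolic segment to a vertical segment, but this case lies outside the non-unital focus of the lemma and does not affect the statement.
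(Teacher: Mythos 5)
Your proof is correct, but it takes a genuinely different route from the paper's. The paper's own proof is self-contained: it first argues geometrically that, by the rotational symmetry of the Lindblad terms about the $z$-axis, the $z$-rotation in the Euler decomposition can be dropped, so that $\mf Q$ is swept out by the unitaries $\exp(\iu\pi x\sigma_x)$ alone (eq.~\eqref{eq:bloch-param1}); it then evaluates $J_{12}(U)\pm J_{21}(U)$ directly for the three Lindblad terms, obtaining the points of eq.~\eqref{eq:bloch-param2}, and finishes with the same substitution $a=\Delta\cos(2\pi x)$, $\cos(4\pi x)=2(a/\Delta)^2-1$ that you use. You instead verify the standing hypotheses of Section~\ref{sec:gen-param} (the $V_k$ are traceless and $\sum_k[V_k,V_k^*]=(\gamma_+-\gamma_-)\sigma_z$ is diagonal), observe that every summand of $r_1e^{\iu\phi_1}$ and $r_2e^{\iu\phi_2}$ vanishes, and specialize Proposition~\ref{prop:general-parametrization}, so that $F(x,z)=\Sigma+\delta\sin^2(2\pi x)$ is $z$-independent and one may fix $z=0$. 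The two arguments encode the same symmetry --- your $r_1=r_2=0$ is the algebraic shadow of the paper's geometric reduction, a correspondence the paper itself points out in Remark~\ref{rmk:degen} and in the sentence identifying $\Delta,\Sigma,\delta$ with the general case --- but each buys something: your version is more economical, delegating all computation to Appendix~\ref{app:parametrization} (whose algebra is valid irrespective of unitality, so the citation is legitimate even though Section~\ref{sec:general-systems} nominally assumes $-L$ non-unital), whereas the paper's direct computation keeps the Bloch appendix self-contained and makes the sufficiency of pure $x$-rotations explicit, a fact it reuses in the proof of Lemma~\ref{lemma:magic-plane}. Your flagging of the edge case $\gamma_+=\gamma_-$, where $\Delta=0$ renders the formula for $f$ vacuous and $\mf Q$ degenerates to a vertical segment, is also sound and is implicitly excluded by the lemma's statement.
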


\begin{proof}
Consider an initial density matrix on the $z$-axis of the Bloch ball. 
Then every density matrix in its unitary orbit can be reached by applying an $x$-rotation followed by a $z$-rotation. 
However, since the Lindblad terms are $z$-symmetric, it suffices to consider only $x$-rotations. That is
\begin{align} \label{eq:bloch-param1}
\mf Q=\{(J_{12}(U)-J_{21}(U),J_{12}(U)+J_{21}(U))
:U=\exp(i\pi x\sigma_x), x\in[0,1/2]\}\,.
\end{align}
Then, evaluating the above expression one obtains the points
\begin{align} \label{eq:bloch-param2}
((\gamma_+-\gamma_-)\cos(2\pi x),\;
\tfrac{3(\gamma_++\gamma_-)}{4}+\gamma_z+\tfrac{1}{4}(\gamma_++\gamma_--4\gamma_z)\cos(4\pi x)), \quad x\in[0,1/2].
\end{align}
Setting $a=\Delta\cos(2\pi x)$ and noting that $\cos(4\pi x)=2(a/\Delta)^2-1$ we obtain the desired points.
\end{proof}

\begin{lemma}
\label{lemma:bloch-upper-bound}
For the Bloch equations, the upper bound $\mu:[0,1]\to\R$ takes the form
\begin{align*}
\mu(\lambda) = 
\begin{cases}
\frac{\Delta^2}{8\delta(1-2\lambda)} + \frac{(\Sigma+\delta)(1-2\lambda)}{2}
\, &\text{ if } \lambda \in I \\
\frac{1}{2} (\Delta+(1-2\lambda)\Sigma) &\text{ if } \lambda\in[0,1]\setminus I,
\end{cases}
\,\text{ where }
I=
\begin{cases}
[0,\tfrac{1}{2}-\tfrac{\Delta}{4\delta}] \,&\text{ if } \delta\geq\frac{\Delta}{2} \\
[\tfrac{1}{2}-\tfrac{\Delta}{4\delta},1] \,&\text{ if } \delta\leq-\frac{\Delta}{2} \\
\emptyset &\text{ otherwise },
\end{cases}
\end{align*}
in particular $1/2 \notin I$.
\end{lemma}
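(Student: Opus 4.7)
The plan is to combine the parametrization of $\mf Q$ from Lemma~\ref{lemma:bloch-sol} with the linear-programming characterization of $\mu$ given in Proposition~\ref{prop:polarity}. Concretely, Proposition~\ref{prop:polarity} gives
$$
\mu(\lambda) = \max_{(a,b)\in\mf Q} \tfrac{1}{2}\bigl(a + (1-2\lambda)b\bigr),
$$
and Lemma~\ref{lemma:bloch-sol} lets us replace $b$ by $f(a)=\Sigma+\delta(1-a^2/\Delta^2)$, so that the problem reduces to the single-variable optimization
$$
\mu(\lambda) = \max_{a\in[-\Delta,\Delta]} g_s(a), \qquad g_s(a) := \tfrac12\bigl(a + s\Sigma + s\delta\bigl(1-a^2/\Delta^2\bigr)\bigr),
$$
where $s := 1-2\lambda \in [-1,1]$. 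This is a quadratic in $a$ with leading coefficient $-s\delta/(2\Delta^2)$, so its concavity is determined by the sign of $s\delta$, and its unique stationary point is $a^\star(s) = \Delta^2/(2s\delta)$.

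Next I would split into cases based on the sign of $s\delta$. When $s\delta \leq 0$ the quadratic $g_s$ is convex, so the maximum is attained at an endpoint of $[-\Delta,\Delta]$; since $g_s(\Delta)-g_s(-\Delta)=\Delta>0$, the max equals $g_s(\Delta)=\tfrac12(\Delta + s\Sigma)$, which is the second branch of the claimed formula. When $s\delta>0$ the quadratic is concave, so the interior critical point $a^\star(s)$ gives the global maximum of $g_s$ provided $|a^\star(s)|\leq\Delta$, i.e.\ provided $|s\delta|\geq \Delta/2$; otherwise the max again falls on the boundary $a=\Delta$ and yields the second branch. A direct substitution gives
$$
g_s(a^\star(s)) = \tfrac{\Delta^2}{8 s\delta} + \tfrac{s(\Sigma+\delta)}{2},
$$
which, after putting back $s=1-2\lambda$, is exactly the first branch.

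Finally, translating the interior-optimum condition $s\delta\geq\Delta/2$ into a condition on $\lambda$ yields the interval $I$. If $\delta>0$, the condition reads $1-2\lambda \geq \Delta/(2\delta)$, i.e.\ $\lambda\leq \tfrac12-\tfrac{\Delta}{4\delta}$; this interval is non-empty inside $[0,1]$ precisely when $\delta\geq\Delta/2$, giving $I=[0,\tfrac12-\tfrac{\Delta}{4\delta}]$. Symmetrically, if $\delta<0$ the condition becomes $1-2\lambda\leq\Delta/(2\delta)$, i.e.\ $\lambda\geq \tfrac12-\tfrac{\Delta}{4\delta}$, which sits inside $[0,1]$ iff $\delta\leq-\Delta/2$, giving $I=[\tfrac12-\tfrac{\Delta}{4\delta},1]$. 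In either case $|s\delta|\leq|\delta|\leq \Delta/2$ is forced at $\lambda=1/2$ (where $s=0$), so $1/2\notin I$.

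The main obstacle is really just bookkeeping the case analysis cleanly: one has to juggle the sign of $\delta$, the sign of $s=1-2\lambda$, and the endpoint-vs-interior trichotomy, and then verify that the two formulas agree on the boundary of $I$ (they do, since at $s\delta=\Delta/2$ one has $a^\star(s)=\Delta$ and the interior expression collapses to $\tfrac12(\Delta+s\Sigma)$, matching continuity of $\mu$ established in Lemma~\ref{lemma:mu-basic}). The rest is the one-line quadratic optimization carried out above.
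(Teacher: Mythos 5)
Your proposal is correct and takes essentially the same route as the paper: both reduce $\mu(\lambda)$ via Proposition~\ref{prop:polarity} to maximizing the linear functional over the parabolic segment of Lemma~\ref{lemma:bloch-sol}, arrive at the same interior stationary point $a^\star=\Delta^2/(2\delta(1-2\lambda))$ and the same switching value $\lambda=\tfrac12-\tfrac{\Delta}{4\delta}$, with your sign-of-$s\delta$ quadratic case split being just an analytic rephrasing of the paper's concavity/tangency argument. One cosmetic slip: the chain $|s\delta|\le|\delta|\le\Delta/2$ is false precisely when $I\neq\emptyset$ (there $|\delta|\ge\Delta/2$); the correct one-liner for $1/2\notin I$ is that at $\lambda=1/2$ one has $s=0$, hence $s\delta=0<\Delta/2$ and the endpoint branch applies.
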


\begin{proof}
Due to Proposition~\ref{prop:polarity} we have to solve the linear maximization $\max_{(a,b)\in\mf Q}\frac{1}{2} (a+(1-2\lambda)b)$.
We see that if $\lambda=1/2$ then the maximum is achieved at $(\Delta,\Sigma)\in\mf Q$ with value $\Delta/2$. 
We have to consider the shape of the parabolic segment depending on $\delta$. 
It is clear that for $\delta=0$ this is just a line segment, for $\delta<0$ it is convex, and for $\delta>0$ it is concave. 
Note also that $f'(\Delta)=\frac{-2\delta}{\Delta}$. Hence if $-\Delta/2<\delta<\Delta/2$ then the maximum will be achieved on $(\Delta,\Sigma)\in\mf Q$ for all $\lambda\in[0,1]$ and so $\mu$ will be the affine linear function $\mu(\lambda)=\frac12(\Delta+(1-2\lambda)\Sigma)$.
This proves the case $I=\emptyset$. 
Now let $\delta\geq\Delta/2$ (the case $\delta\leq-\Delta/2$ is analogous).
Then the parabolic segment is concave. 
For large $\lambda$, the maximum will still be at $(\Delta,\Sigma)\in\mf Q$, but for $\lambda$ small enough, the maximum will be achieved in the interior of the parabolic segment. 
The switching point occurs when the vector $(1,1-2\lambda)$ is orthogonal to $(1,f'(\Delta))$, that is when
$1+(1-2\lambda)(-2\delta/\Delta)=0$ which is equivalent to $\lambda = \frac12-\frac{\Delta}{4\delta}$. 
Hence it remains to determine $\mu$ on $I=[0,\frac12-\frac{\Delta}{4\delta}]$. 
For this we compute
\begin{align} \label{eq:bloch-opt-x}
\tfrac{-2\delta a}{\Delta^2}(1-2\lambda) = -1 \iff a(\lambda)=\tfrac{\Delta^2}{2\delta(1-2\lambda)},
\end{align}
and plugging in we get on $I$ that
$\mu(\lambda) = \tfrac12 (a(\lambda) +(1-2\lambda)f(a(\lambda)))$,
which evaluates to the desired result.
\end{proof}

\begin{corollary}
\label{coro:bloch-purest}
The purest stabilizable state $\lambda^\star$, that is the state satisfying $\mu(\lambda^\star)=0$, is given by
$\lambda^\star = \tfrac12+\tfrac\Delta{2\Sigma}$
and $\{\lambda^\star,1-\lambda^\star\}$ is the spectrum of the fixed point of the system.
\end{corollary}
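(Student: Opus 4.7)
The plan is to exploit the piecewise formula for $\mu$ established in Lemma~\ref{lemma:bloch-upper-bound} and locate its unique root in $[1/2,1]$ (uniqueness coming from monotonicity of $\mu$, cf.~Lemma~\ref{lemma:mu-basic} and Lemma~\ref{lemma:mu-strict}). On the affine linear piece $\mu(\lambda)=\tfrac12(\Delta+(1-2\lambda)\Sigma)$, solving $\mu(\lambda)=0$ is immediate and gives the candidate $\lambda^\star = \tfrac12+\tfrac{\Delta}{2\Sigma}$. So the real content of the proof is to verify that this candidate actually falls in the parameter region on which the affine formula is valid, i.e.\ in $[0,1]\setminus I$, for every admissible choice of the parameters $\gamma_+,\gamma_-,\gamma_z$.

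First I would dispose of the trivial case $I=\emptyset$, where nothing is to check. Next, if $\delta\geq \Delta/2$, the affine linear region is $[\tfrac12-\tfrac{\Delta}{4\delta},1]$; since $\Delta/(2\Sigma)\geq 0\geq -\Delta/(4\delta)$ the candidate lies in this interval automatically. The substantive case is $\delta\leq -\Delta/2$: here the affine region is $[0,\tfrac12-\tfrac{\Delta}{4\delta}]=[0,\tfrac12+\tfrac{\Delta}{4|\delta|}]$, and one needs the inequality $\tfrac{\Delta}{2\Sigma}\leq\tfrac{\Delta}{4|\delta|}$, i.e.\ $2|\delta|\leq\Sigma$. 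This is where the physical constraint enters: using $\delta=2\gamma_z-\Sigma/2$ together with $\gamma_z\geq 0$ gives $|\delta|=\Sigma/2-2\gamma_z\leq\Sigma/2$, which closes the argument. This bookkeeping across the three cases is the only genuine obstacle, and it is a short calculation rather than a conceptual difficulty.

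For the second assertion, that $\{\lambda^\star,1-\lambda^\star\}$ is the spectrum of a fixed point of the system, I would observe that $\iota(\lambda^\star)$ is diagonal and lies on the symmetry axis of the $z$-rotationally invariant Lindblad generator. Computing $-L$ on a diagonal state $\iota(\lambda)=\operatorname{diag}(\lambda,1-\lambda)$ with the terms $\sqrt{\gamma_+}\sigma_+$, $\sqrt{\gamma_-}\sigma_-$, $\sqrt{\gamma_z}\sigma_z$ gives a purely diagonal output (the $\sigma_z$-term vanishes because $\sigma_z$ is normal and $\iota(\lambda)$ commutes with it), reducing the Lindblad equation to the classical rate equation $\dot\lambda=\gamma_+(1-\lambda)-\gamma_-\lambda$. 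Its unique fixed point is $\lambda=\gamma_+/\Sigma=\tfrac12+\tfrac{\Delta}{2\Sigma}=\lambda^\star$, and by rotational symmetry no compensating Hamiltonian is required, so $\iota(\lambda^\star)$ is a genuine fixed point of the uncontrolled dynamics. This also provides a sanity check on the formula derived from $\mu$.
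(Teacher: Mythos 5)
Your proof is correct, and for the first claim it takes a genuinely different route from the paper. The paper never touches the piecewise formula for $\mu$: it argues directly in the space of generators, observing that $\gamma_z\geq0$ forces $\delta\geq-\Sigma/2$, hence the endpoint slope satisfies $f'(\Delta)=-2\delta/\Delta\leq\Sigma/\Delta$, so the endpoint $(\Delta,\Sigma)$ of the parabolic segment maximizes the ratio $a/b$ over $\mathfrak{Q}$, and Lemma~\ref{lemma:stab} then yields $\lambda^\star=\tfrac12(1+\Delta/\Sigma)$ in one line (with Lemma~\ref{lemma:heat-baths} offered as an alternative). You instead solve $\mu=0$ on the affine piece of Lemma~\ref{lemma:bloch-upper-bound} and verify by case analysis on $I$ that the candidate lies in the affine region; note that your key inequality $2|\delta|\leq\Sigma$ in the case $\delta\leq-\Delta/2$ is exactly the same physical input ($\gamma_z\geq0$, i.e.\ $\delta\geq-\Sigma/2$) as the paper's slope bound, just consumed at a different stage. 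What your route buys is an explicit verification that the nonlinear (magic-plane) modification of $\mu$ never contains the root --- a fact the paper only remarks on in the caption of Figure~\ref{fig:various-bloch} --- at the cost of a three-case bookkeeping; the paper's tangency argument is shorter and does not require $\mu$ to have been computed at all. For the fixed-point claim, the paper uses $z$-symmetry plus Brouwer to obtain existence of a diagonal fixed point and then solves the same rate balance $-\gamma_-\lambda+\gamma_+(1-\lambda)=0$; your direct computation of the dissipator on $\diag(\lambda,1-\lambda)$ (with the $\sqrt{\gamma_z}\sigma_z$ term vanishing) makes the Brouwer step unnecessary and is constructive, which is arguably cleaner. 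Two one-line points you should still record: $\Delta\leq\Sigma$ guarantees $\lambda^\star\leq1$, and Lemma~\ref{lemma:mu-strict} is applicable because for the Bloch equations $(0,0)\in\mathfrak{Q}$ would force $\Sigma/2+2\gamma_z=0$, i.e.\ the trivial generator; neither is a gap.
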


\begin{proof}
Since $\gamma_z\geq0$ we must have $\delta\geq-\Sigma/2$, and hence 
$f'(\Delta)=\frac{-2\delta}{\Delta}\leq\frac{\Sigma}{\Delta}$,
which shows that the line defined by $(\Delta,\Sigma)\in\mf Q$ is the line giving the purest stabilizable state. Hence
$\tfrac{1}{2} (\Delta+(1-2\lambda^\star)\Sigma)=0 \iff 
\lambda^\star = \tfrac12+\tfrac\Delta{2\Sigma}$,
as desired. 
(This also follows from Lemma~\ref{lemma:heat-baths} and noting that $\gamma_z$ can be set to $0$ without loss of generality.) 
By $z$-symmetry, the diagonal states are invariant, and hence by Brouwer's theorem, there is a fixed point which is diagonal. 
If this fixed point is pure, then $\gamma_-=0$ and we are done. 
Otherwise, the fixed point is unique. 
If we denote the larger eigenvalue of the fixed point by $\lambda$ we obtain that
$-\gamma_-\lambda + \gamma_+(1-\lambda) = 0$
which shows that $\lambda=\tfrac12+\tfrac\Delta{2\Sigma}$, as desired.
\end{proof}

From these results one can deduce the optimal path through the Bloch ball and the corresponding optimal controls in the original control system~\eqref{eq:bilin}, see~\cite[Sec.~6]{Reduced23} for details.
Here we just recover the so called magic plane and steady state ellipsoid from~\cite{Lapert10,Lapert13}.

\begin{lemma}
\label{lemma:magic-plane}
If $\delta\geq\Delta/2$, or equivalently, $\gamma_z\geq\gamma_+/2$, then for $\lambda\in I=[0,\frac12-\frac{\Delta}{4\delta}]$, the points in the Bloch ball achieving the optimal derivative $\mu(\lambda)$ are given by the plane perpendicular to the $z$-axis and passing through the density matrix $\diag(\lambda,1-\lambda)$ with
$\lambda=\frac{1}{2}-\frac{\Delta}{4\delta}$.
If $\lambda\in[\frac12-\frac{\Delta}{4\delta},\frac12]$, then the maximal derivative is reached on the $z$-axis on the same side of the origin as the magic plane.
\end{lemma}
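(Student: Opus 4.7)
The plan is to combine the explicit parametrization of $\mf Q$ from Lemma~\ref{lemma:bloch-sol} with the optimizer identified in the proof of Lemma~\ref{lemma:bloch-upper-bound}, lift the optimal unitary to the Bloch ball, and then invoke the $z$--rotational symmetry of the Bloch equations to sweep out the full plane.

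First, for $\lambda \in I=[0,\tfrac12-\tfrac{\Delta}{4\delta}]$ the proof of Lemma~\ref{lemma:bloch-upper-bound} shows that the maximizer of $(a,b)\mapsto \tfrac12(a+(1-2\lambda)b)$ over the parabolic segment $\mf Q$ lies in the interior and is attained at abscissa $a(\lambda)=\tfrac{\Delta^2}{2\delta(1-2\lambda)}$, cf.\ Equation~\eqref{eq:bloch-opt-x}. By Lemma~\ref{lemma:bloch-sol} this point is realized by the unitary $\exp(\iu\pi x(\lambda)\sigma_x)$ with $\cos(2\pi x(\lambda))=\tfrac{\Delta}{2\delta(1-2\lambda)}$. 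Since the Lindblad generator of the Bloch equations is $z$--symmetric, left-multiplying by any $\exp(\iu\pi z\sigma_z)$ preserves the achieved derivative, so the full set of optimizing unitaries is $\{U_{x(\lambda),z}:z\in[0,1)\}$ in the notation of Section~\ref{sec:generators}.

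Next I would compute the Bloch-ball image $U_{x(\lambda),z}\,\iota(\lambda)\,U_{x(\lambda),z}^{*}$. A direct adjoint action shows that $\exp(\iu\pi x\sigma_x)$ sends $\sigma_z$ to $\cos(2\pi x)\sigma_z-\sin(2\pi x)\sigma_y$, so the $z$--component of the Bloch vector of $\exp(\iu\pi x(\lambda)\sigma_x)\,\iota(\lambda)\,\exp(-\iu\pi x(\lambda)\sigma_x)$ equals $(\lambda-\tfrac12)\cos(2\pi x(\lambda))$. Plugging in the explicit form of $\cos(2\pi x(\lambda))$ gives
\begin{align*}
(\lambda-\tfrac12)\cos(2\pi x(\lambda))
=(\lambda-\tfrac12)\cdot\frac{\Delta}{2\delta(1-2\lambda)}
=-\frac{\Delta}{4\delta},
\end{align*}
which is independent of $\lambda\in I$ and coincides precisely with the Bloch-ball height of $\iota(\tfrac12-\tfrac{\Delta}{4\delta})$. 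The subsequent conjugation by $\exp(\iu\pi z\sigma_z)$ is a rotation about the $z$-axis and hence preserves this height while sweeping the $xy$--component through a full circle. Varying $\lambda\in I$ varies the radius of this circle, and together these circles fill out the claimed horizontal plane passing through $\diag(\tfrac12-\tfrac{\Delta}{4\delta},\tfrac12+\tfrac{\Delta}{4\delta})$.

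Finally, for $\lambda\in[\tfrac12-\tfrac{\Delta}{4\delta},\tfrac12]$ the same proof shows that the optimizer is the extremal point $(\Delta,\Sigma)\in\mf Q$, corresponding to $x=0$, so the optimal unitary is diagonal and $U\iota(\lambda)U^{*}=\iota(\lambda)$ sits on the $z$-axis at height $\lambda-\tfrac12\in[-\tfrac{\Delta}{4\delta},0]$, which has the same sign as $-\tfrac{\Delta}{4\delta}$ and hence lies on the same side of the origin as the magic plane. The main (but very mild) obstacle is merely the bookkeeping of signs and conventions when passing from the abstract maximizer in $\mf Q$ to an explicit unitary in $\SU(2)$ and then to Bloch coordinates; once this is done the key identity $(\lambda-\tfrac12)\cos(2\pi x(\lambda))=-\tfrac{\Delta}{4\delta}$ is essentially immediate.
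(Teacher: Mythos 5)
Your proof is correct and takes essentially the same route as the paper: extracting the maximizer $a(\lambda)=\Delta^2/(2\delta(1-2\lambda))$ from Equation~\eqref{eq:bloch-opt-x}, lifting it to the unitary $\exp(\iu\pi x(\lambda)\sigma_x)$ via the parametrization of Lemma~\ref{lemma:bloch-sol}, and invoking $z$-symmetry; your explicit height identity $(\lambda-\tfrac12)\cos(2\pi x(\lambda))=-\tfrac{\Delta}{4\delta}$ merely makes precise a step the paper leaves implicit. The only slip is the sign of the $\sigma_y$-component in the adjoint action (it should be $\cos(2\pi x)\sigma_z+\sin(2\pi x)\sigma_y$), which is harmless here since it does not affect the $z$-component and $z$-rotations are symmetries of the system.
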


\begin{proof}
If $\delta\geq\Delta/2$, then the interval $I=[0,\frac{1}{2}-\frac{\Delta}{4\delta}]$ lies in $[0,\frac{1}{2}]$. 
For $\lambda\in I$, the upper bound $\mu(\lambda)$ is non-linear. 
We want to find, for each $\lambda\in I$, the density matrices $\rho$ for which the optimal derivative $\mu(\lambda)$ is achieved. 
For this we find $U$ such that the optimal derivative is achieved for $U\diag(\lambda,1-\lambda)U^*$. 
From~\eqref{eq:bloch-opt-x} it follows that for $\lambda\in I$ the optimal point in the space of generators $\mf Q$ has $a=\Delta^2/(2\delta(1-2\lambda))$. 
From~\eqref{eq:bloch-param1} and~\eqref{eq:bloch-param2} it follows that a corresponding unitary is $\exp(i\pi x\sigma_x)$ with $a=\Delta\cos(2\pi x)$. 
Hence the optimal unitary can be found from
$x=\frac{1}{2\pi}\arccos\big(\frac{\Delta}{2\delta(1-2\lambda)}\big)$.
Using the $z$-symmetry of the problem, this shows that for $\lambda\in I$ the optimal derivatives are reached on a plane perpendicular to the $z$-axis and passing through the point $\diag(\lambda,1-\lambda)$ with $\lambda=\frac12-\frac\Delta{4\delta}$.
\end{proof}

\begin{remark} \label{rmk:magic-plane}
Lemma~\ref{lemma:magic-plane} recovers a known result from~\cite{Lapert10,Lapert13}, which was derived using the Pontryagin maximum principle. 
There, the parameters are scaled such that $\Sigma=\Delta$, or equivalently $\gamma_-=0$, and hence $\gamma_+=T_1^{-1}$ and $\gamma_z = (2T_2)^{-1} - (4T_1)^{-1}$. 
The condition from Lemma~\ref{lemma:magic-plane} then becomes $T_1\geq\frac{3}{2}T_2$ and the magic plane intersects the $z$-axis at radius $r=T_2/(2(T_1-T_2))$ in the lower half.
Recall that here the radius of the Bloch sphere is $\tfrac12$.
\end{remark}

\begin{lemma} \label{lemma:bloch-stab}
The set of stabilizable states in the Bloch disk is an ellipsoid in the upper halfspace which is rotationally symmetric around the $z$-axis and has a vertical semiaxis of length $\tfrac{\Delta}{4\Sigma}$ and horizontal semiaxes of length $\tfrac{\Delta}{4\sqrt{\Sigma(\Sigma+\delta/2)}}$.
\end{lemma}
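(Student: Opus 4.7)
The plan is to reduce the problem to a planar one using rotational symmetry, and then to identify the resulting cross-sectional curve as an ellipse by direct computation in Cartesian coordinates.

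First, I would observe that for the Bloch equations the two quantities $r_1$ and $r_2$ from Section~\ref{sec:gen-param} both vanish. Indeed, each Lindblad term $\sqrt{\gamma_\pm}\sigma_\pm$ has a single nonzero entry located off-diagonal while $\sqrt{\gamma_z}\sigma_z$ is purely diagonal, so every term in each of the defining sums $\sum_k \overline{(V_k)_{12}}(V_k)_{21}$ and $\sum_k \overline{(V_k)_{11}}(V_k)_{21}$ is individually zero. Consequently the function $F$ of Proposition~\ref{prop:general-parametrization} reduces to $F(x,z) = \Sigma + \delta\sin^2(2\pi x)$, independent of $z$; this is of course consistent with the explicit parabolic segment found in Lemma~\ref{lemma:bloch-sol}. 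Plugging this into Lemma~\ref{lemma:stab-ball} and writing $\theta = 2\pi x$, the boundary of the set of stabilizable states is given in spherical coordinates by
$$r(\theta,\phi) = \frac{\Delta\cos\theta}{2(\Sigma+\delta\sin^2\theta)},$$
independent of $\phi$. This already establishes rotational symmetry about the $z$-axis and reduces the task to verifying that the planar cross-section (say in the $xz$-plane), traced out by $\theta\in[0,\pi/2]$, is an ellipse with the claimed semiaxes.

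Next, I would convert the polar equation to Cartesian coordinates $x_c = r\sin\theta$, $z_c = r\cos\theta$. After multiplying through by $r$ and substituting $r^2\sin^2\theta = x_c^2$, $r\cos\theta = z_c$, and $r^2 = x_c^2 + z_c^2$, the polar relation becomes a polynomial equation of the form $A x_c^2 + B z_c^2 = C z_c$ with $A, B, C > 0$ (the positivity of the coefficient of $x_c^2$ relies on $\gamma_z\geq 0$, equivalently $\delta\geq -\Sigma/2$). Completing the square in $z_c$ casts this into the standard centered-ellipse form, from which both semiaxis lengths can be read off directly. Rotating this planar ellipse about the $z$-axis then yields the ellipsoid of revolution, with the two horizontal semiaxes equal as required.

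The main work is therefore just careful algebraic bookkeeping; I do not expect a conceptual obstacle. As a sanity check, the ellipse is tangent to the line $z_c = 0$ at the origin and attains its maximum $z_c = \Delta/(2\Sigma)$ on the positive $z$-axis, in agreement with Corollary~\ref{coro:bloch-purest}; this already confirms the vertical semiaxis $\Delta/(4\Sigma)$. A more concise alternative would be to quote Proposition~\ref{prop:potato}, which already identifies each azimuthal cross-section as an ellipse through the origin, and then extract the shape parameters from the auxiliary formulas referenced there; the direct computation sketched above has the advantage of producing the explicit semiaxis lengths in a single pass.
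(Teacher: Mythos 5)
Your route is essentially the paper's. The paper also starts from Lemma~\ref{lemma:stab-ball}, obtaining the polar parametrization $r(\theta,\phi)=\tfrac{\Delta}{2}\tfrac{\cos\theta}{\Sigma+\delta\sin(\theta)^2}$ (your observation that $r_1=r_2=0$ merely makes explicit why $F$ is independent of the azimuthal angle), and then concludes by citing Lemma~\ref{lemma:up-ellipsoid} with $\sigma=\Delta/2$, $\alpha=\Sigma$, $\beta=\delta$, $\eta=0$; your multiply-by-$r$, substitute, complete-the-square computation is precisely the (unwritten) proof of that lemma, so the only difference is whether the Cartesian step is cited or inlined.

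However, you should not dismiss the final read-off as ``careful algebraic bookkeeping'', because that is exactly where the one substantive issue sits. Carrying your computation through gives $(\Sigma+\delta)\,x_c^2+\Sigma\,z_c^2=\tfrac{\Delta}{2}\,z_c$ (with $x_c$ the distance to the $z$-axis after rotating the cross-section about it), and completing the square yields vertical semiaxis $\tfrac{\Delta}{4\Sigma}$ but horizontal semiaxes $\tfrac{\Delta}{4\sqrt{\Sigma(\Sigma+\delta)}}$ --- not the $\tfrac{\Delta}{4\sqrt{\Sigma(\Sigma+\delta/2)}}$ asserted in the statement. The paper's own route gives the same value: Lemma~\ref{lemma:up-ellipsoid} with $\alpha=\Sigma$, $\beta=\delta$, $\eta=0$ produces $a=b=\sigma/(2\sqrt{|\alpha(\alpha+\beta)|})=\Delta/(4\sqrt{\Sigma(\Sigma+\delta)})$, so the ``$\delta/2$'' in the lemma statement is an internal inconsistency of the paper (note also $\Sigma+\delta=T_2^{-1}>0$ by the relaxation-time relations, which subsumes your positivity remark that $\delta\geq-\Sigma/2$). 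Your sanity check is blind to this discrepancy, since the intersection with the $z$-axis only tests the coefficient $\Sigma$; had you actually displayed the horizontal semiaxis, you would have caught the mismatch. So your method is sound and coincides with the paper's proof, but as a proof of the statement taken verbatim it fails at the last line --- what it proves is the corrected formula with $\Sigma+\delta$ in place of $\Sigma+\delta/2$.
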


\begin{proof}
By Lemma~\ref{lemma:stab-ball} we obtain the parametrization in polar coordinates
$r(\theta,\phi)=\frac{\Delta}{2}\frac{\cos(\theta)}{\Sigma+\delta\sin(\theta)^2}$.
The result follows from Lemma~\ref{lemma:up-ellipsoid}.
\end{proof}

\begin{remark}
This recovers another known result from~\cite{Lapert13}, with parameters rescaled as in Remark~\ref{rmk:magic-plane}. 
Then the ellipse has vertical semiaxis length $1/4$, i.e. it touches the north pole, and horizontal semiaxis length $\sqrt{T_2/(2T_1)}$.
Again recall that here the radius of the Bloch sphere is $\tfrac12$.
\end{remark}

\section{Technical computations} \label{app:calc}

\subsection{Detailed computation for Proposition~\ref{prop:general-parametrization}}
\label{app:parametrization}

The unitary $U_{x,z}=\exp(\iu\pi z\sigma_z)\exp(\iu\pi x\sigma_x)$ has the form
\begin{align*}
U_{x,z}=\begin{pmatrix}
\alpha c&\iu\alpha s\\
\iu\alpha^*s&\alpha^*c
\end{pmatrix},\
\text{ and hence }\
U_{x,z}^*=
\begin{pmatrix}
\alpha^* c&-\iu\alpha s\\
-\iu\alpha^*s&\alpha c
\end{pmatrix},
\end{align*}
where $\alpha=e^{\iu\pi z}$ and $s=\sin(\pi x)$ and $c=\cos(\pi x)$. For a traceless Lindblad term
\begin{align*}
V=\begin{pmatrix}u&v\\w&-u\end{pmatrix}    
\end{align*}
where $u,v,w\in\C$ are arbitrary we immediately obtain
\begin{align*}
(U_{x,z}^* V U_{x,z})_{12} &= 2\iu cs u  + (\alpha^*)^2c^2v  + \alpha^2s^2w \\
(U_{x,z}^* V U_{x,z})_{21} &= -2\iu cs u + (\alpha^*)^2s^2v + \alpha^2c^2w\,,
\end{align*}
and thus
\begin{align*}
J_{12}(U_{x,z})-J_{21}(U_{x,z}) 
&=
|2\iu cs u + (\alpha^*)^2c^2v + \alpha^2s^2w|^2
-
|-2\iu cs u + (\alpha^*)^2s^2v + \alpha^2c^2w|^2
\\&=
(c^2 - s^2)(|v|^2 - |w|^2) 
+
2\Re(2\iu cs u(\alpha^2v^*+(\alpha^*)^2w^*))
\\&=
(|v|^2-|w|^2)\cos(2\pi x)+
2\sin(2\pi x)\Im(e^{-2\pi\iu z}(uv^*-wu^*))
\end{align*}
where the first term stems from the norm squared terms and the rest from the cross terms. Next we find
\begin{align*}
J_{12}(U_{x,z})+J_{21}(U_{x,z}) 
&=
|2\iu cs u + (\alpha^*)^2c^2v + \alpha^2s^2w|^2
+
|-2\iu cs u + (\alpha^*)^2s^2v + \alpha^2c^2w|^2
\\&= 
8c^2s^2|u|^2
+ (c^4+s^4)(|v|^2+|w|^2)
\\&\quad + 4c^2s^2\Re(\alpha^4vw^*)
 - 4cs(c^2-s^2)\Im(u(\alpha^2 v^*-(\alpha^*)^2 w^*)),
\end{align*}
where the first two terms stem from the norm squared terms and the remaining ones from the cross terms. Using the trigonometric identities $\cos(\pi x)^4 + \sin(\pi x)^4=1-\sin(2\pi x)^2/2$ and $\cos(\pi x)^2 - \sin(\pi x)^2 = \cos(2\pi x)$ and $\cos(\pi x)  \sin(\pi x)=\sin(2\pi x)/2$ we obtain
\begin{align*}
J_{12}(U_{x,z})+J_{21}(U_{x,z}) 
&=
|v|^2+|w|^2 
+ \sin(2\pi x)^2\big(2|u|^2-\tfrac{|v|^2+|w|^2}{2}\big)
\\&\quad + \sin(2\pi x)^2 \Re(\alpha^4vw^*)
 - 2\sin(2\pi x)\cos(2\pi x)\Im(\alpha^2(uv^*+u^*w)).
\end{align*}
If we now consider a finite family of traceless Lindblad terms
\begin{align*}
V_k=\begin{pmatrix}u_k&v_k\\w_k&-u_k\end{pmatrix},
\end{align*}
and define $\Sigma=\sum_{k=1}^r |v_k|^2+|w_k|^2$, and $\Delta=\sum_{k=1}^r |v_k|^2-|w_k|^2$, and $\delta=\sum_{k=1}^r 2|u_k|^2-(|v_k|^2+|w_k|^2)/2$, as well as $r_1e^{\iu\phi_1}=\iu\sum_{k=1}^r v_kw_k^*$ and $r_2e^{\iu\phi_2}=2\sum_{k=1}^r(u_kv_k^*+u_k^*w_k)=$ then we obtain
\begin{align*} 
J_{12}(U_{x,z})-J_{21}(U_{x,z}) 
&= \Delta\cos(2\pi x)+
\sin(2\pi x)\textstyle\sum_{k=1}^r\Im(e^{-2\pi\iu z}(u_kv_k^*-w_ku_k^*)) \\
J_{12}(U_{x,z})+J_{21}(U_{x,z}) 
&= \Sigma + \delta\sin(2\pi x)^2 + r_1\sin(2\pi x)^2\sin(4\pi z+\phi_1)
\\&\quad - r_2\sin(2\pi x)\cos(2\pi x)\sin(2\pi z+\phi_2),
\end{align*}
since $\Re(\alpha^4v_kw_k^*)=\Re(-\iu r_1 e^{\iu (4\pi z+\pi_1)})=r_1\sin(4\pi z+\pi_1)$ and since $2\Im(\alpha^2(u_kv_k^*+u_k^*w_k))=r_2\sin(2\pi z+\phi_2)$. Finally, since $\sum_{k=1}^r[V_k,V_k^*]$ is diagonal if and only if $\sum_{k=1}^r u_kv_k^*=\sum_k w_ku_k^*$, we obtain in this case that
\begin{align*}
J_{12}(U_{x,z})-J_{21}(U_{x,z}) 
&= \Delta\cos(2\pi x).
\end{align*}

\subsection{Study of the function $G_\phi$}
\label{app:g-phi}

\begin{figure}[htp]
\centering
\includegraphics[width=0.45\textwidth]{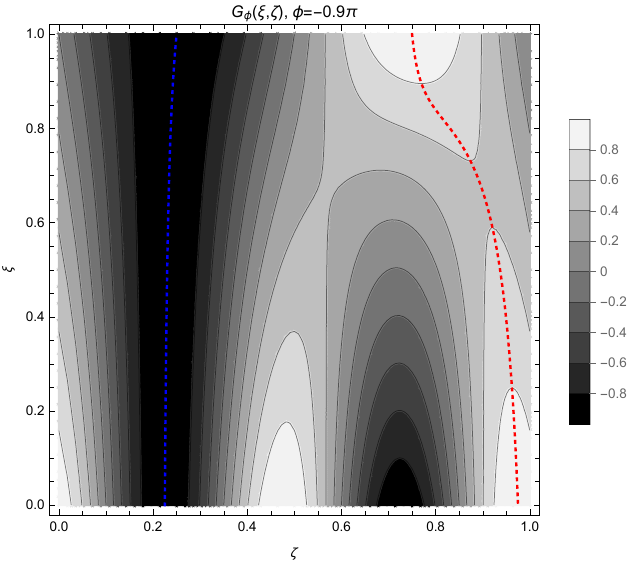}
\includegraphics[width=0.45\textwidth]{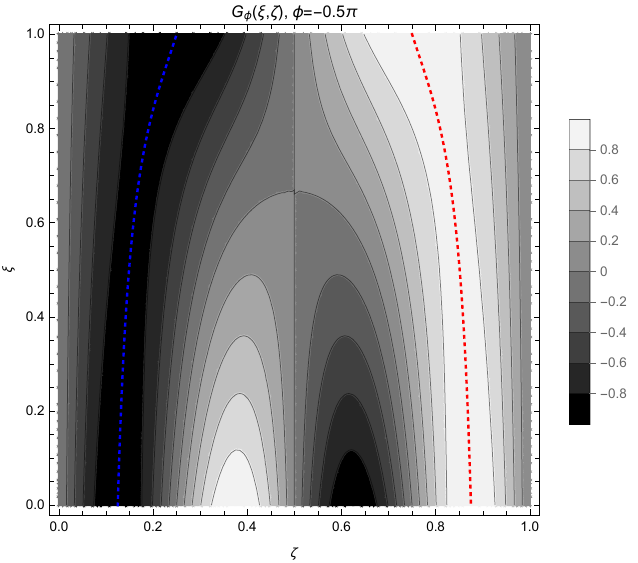}
\includegraphics[width=0.45\textwidth]{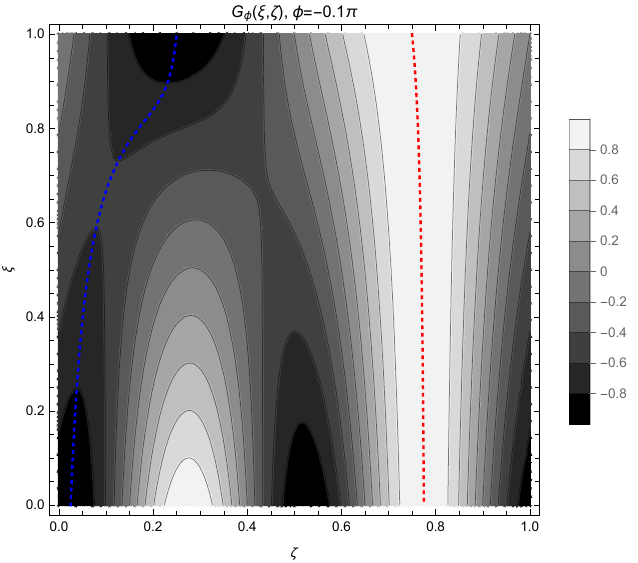}
\includegraphics[width=0.45\textwidth]{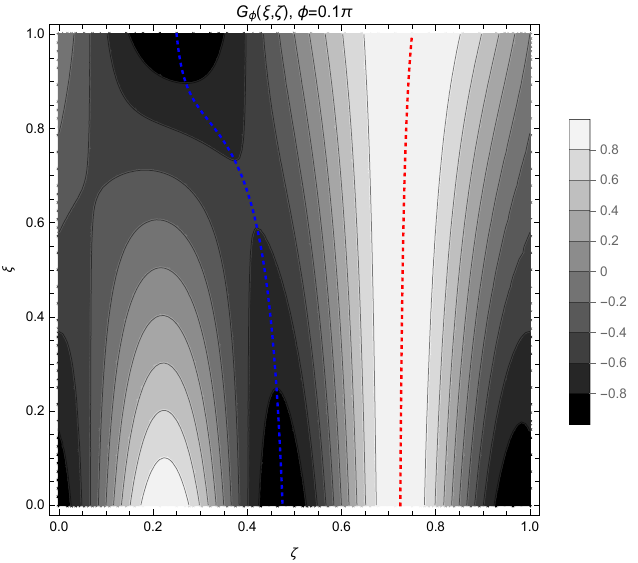}
\includegraphics[width=0.45\textwidth]{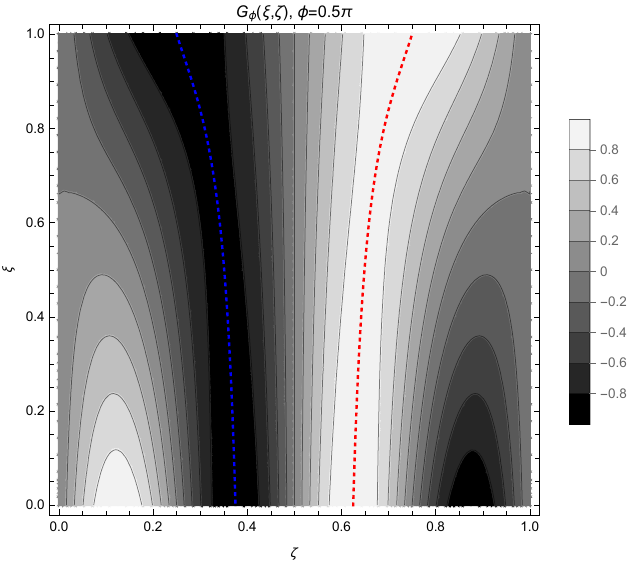}
\includegraphics[width=0.45\textwidth]{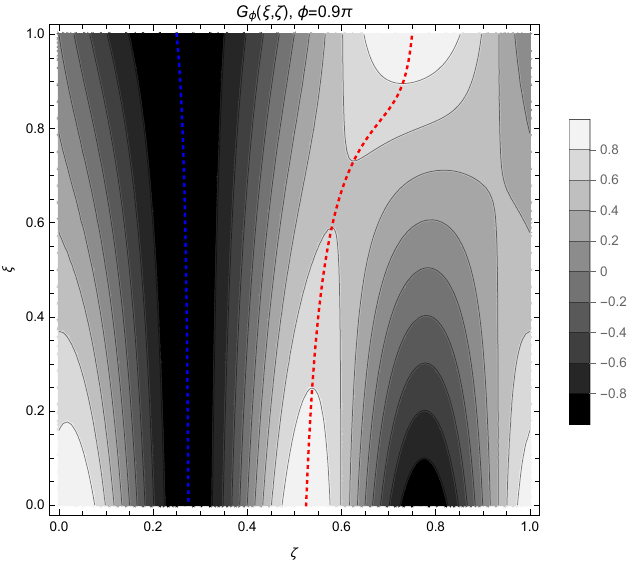}
\caption{Contour plots of the function $G_\phi(\xi,\zeta)$ for different values of $\phi$ together with the maximizers (red) and minimizers (blue) as a function of $\xi$.} 
\label{fig:contour-g}
\end{figure}

In~\eqref{eq:G-phi} we defined the function
\begin{align*}
G_\phi(\xi,\zeta)=
(1-\xi)\sin(4\pi \zeta+\phi-\pi/2)-\xi\sin(2\pi\zeta).
\end{align*}
for $\xi,\zeta\in[0,1]$ and $\phi\in(-\pi,\pi]$. See Figure~\ref{fig:contour-g} for plots of this function for different values of $\phi$.  

\begin{lemma} \label{lemma:G-phi-opt}
Let $\phi\in(-\pi,0)\cup(0,\pi)$ and define the function
\begin{align*}
\xi^\star(\zeta) = \frac{1}{1+\frac{\cos(2\pi\zeta)}{2\cos(4\pi\zeta+\phi-\pi/2)}},
\end{align*}
and the intervals
\begin{align*}
I^+=\begin{cases}
[3/4-\phi/(4\pi),3/4] \text{ if } \phi>0\\
[3/4,3/4-\phi/(4\pi)] \text{ if } \phi<0,
\end{cases} \quad
I^-=\begin{cases}
[1/4,1/4+(\pi-\phi)/(4\pi))] \text{ if } \phi>0\\
[-\phi/(4\pi),1/4] \text{ if } \phi<0.
\end{cases}
\end{align*}
Let $\xi^+$ and $\xi^-$ denote the restrictions of $\xi^\star$ to $I^+$ and $I^-$ respectively. These functions are bijective onto $[0,1]$, and it holds that
\begin{align*}
(\xi^+)^{-1}(\xi) = \underset{\tilde\zeta\in[0,1]}{\argmax}\ G_\phi(\xi,\tilde \zeta), \qquad
(\xi^-)^{-1}(\xi) = \underset{\tilde\zeta\in[0,1]}{\argmin}\ G_\phi(\xi,\tilde \zeta),
\end{align*}
for all $\xi\in[0,1]$. Here the $\argmax$ and $\argmin$ are unique for all $\xi\in(0,1]$ and for $\xi=0$ there is a spurious second solution which we omit.
\end{lemma}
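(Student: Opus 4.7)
The approach is direct calculus. The critical points of $\zeta\mapsto G_\phi(\xi,\zeta)$ are exactly the zeros of
\[
\partial_\zeta G_\phi(\xi,\zeta) = 4\pi(1-\xi)\cos(4\pi\zeta+\phi-\pi/2) - 2\pi\xi\cos(2\pi\zeta).
\]
Solving this equation for $\xi$ (at points where $\cos(4\pi\zeta+\phi-\pi/2)\neq 0$) gives $\xi = \xi^\star(\zeta)$. Thus $\zeta$ is a critical point of $G_\phi(\xi,\cdot)$ if and only if $\xi^\star(\zeta)=\xi$. I would first verify this and record that at points where $\cos(4\pi\zeta+\phi-\pi/2)=0$ the equation forces $\cos(2\pi\zeta)=0$ as well (excluded for generic $\phi$), so $\xi^\star$ captures all interior critical points.

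Next, I would locate the zeros and poles of $\xi^\star$ on $[0,1]$. The function satisfies $\xi^\star(\zeta)=0$ exactly when $\cos(4\pi\zeta+\phi-\pi/2)=0$, i.e.\ $\zeta\in\{1/4-\phi/(4\pi),\,3/4-\phi/(4\pi)\}\cap[0,1]$, and $\xi^\star(\zeta)=1$ exactly when $\cos(2\pi\zeta)=0$, i.e.\ $\zeta\in\{1/4,\,3/4\}$. Between consecutive such values the two cosines do not vanish, so $\xi^\star$ is continuous there, and a direct sign count of numerator and denominator shows it is monotone and bijective from each such subinterval onto $[0,1]$. For $\phi\in(0,\pi)$ these subintervals are precisely the four maximal pieces with endpoints in the union above, two of which are $I^+$ and $I^-$ as stated. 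For $\phi\in(-\pi,0)$ the same argument with the obvious sign flip identifies the stated intervals.

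To decide which branch is argmax and which is argmin, I would use a continuity argument anchored at $\xi=1$. Setting $\xi=1$ gives $G_\phi(1,\zeta)=-\sin(2\pi\zeta)$, whose unique maximum on $[0,1]$ is at $\zeta=3/4\in I^+$ and whose unique minimum is at $\zeta=1/4\in I^-$. As $\xi$ decreases from $1$ to $0$, the critical points move continuously along the branches $(\xi^+)^{-1}$ and $(\xi^-)^{-1}$; since $\xi^\star$ is monotone bijective on $I^\pm$, no two critical points collide on $\xi\in(0,1]$, so the global maximum remains on the branch $I^+$ and the global minimum on $I^-$ throughout. At $\xi=0$ the function $G_\phi(0,\zeta)=\sin(4\pi\zeta+\phi-\pi/2)$ has period $1/2$, hence two maxima and two minima in $[0,1]$, producing the advertised spurious second solution that one discards.

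The only step requiring real care is the case split that establishes monotonicity and the correct range of $\xi^\star$ on $I^+$ and $I^-$: one must match the sign of $\cos(2\pi\zeta)/\cos(4\pi\zeta+\phi-\pi/2)$ as $\zeta$ traverses each interval so that the ratio sweeps monotonically from $+\infty$ (or $-\infty$, after simplification) to $0$, giving $\xi^\star$ going from $0$ to $1$. The sign bookkeeping differs for $\phi>0$ and $\phi<0$, but the two cases are symmetric under $\zeta\mapsto 1-\zeta$, $\phi\mapsto-\phi$, so doing one case suffices. Everything else is a straightforward consequence of implicit differentiation and the intermediate value theorem.
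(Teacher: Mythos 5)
The paper itself states this lemma without proof (it is supported only by the contour plots of Figure~\ref{fig:contour-g}), so there is no proof to compare against; your direct-calculus route---solving $\partial_\zeta G_\phi=0$ for $\xi$ to obtain $\xi^\star$ and then analyzing its branches---is the natural one, and your derivation of $\xi^\star$ is correct. But there is a genuine gap in the branch accounting. Since $\cos(4\pi\zeta+\phi-\pi/2)=\sin(4\pi\zeta+\phi)$ has period $1/2$ in $\zeta$, it has \emph{four} zeros $k/4-\phi/(4\pi)$, $k=1,\dots,4$, in $[0,1)$, not the two you list. The circle $[0,1)$ is therefore cut into six pieces by the zeros of the two cosines, and the condition $\xi^\star(\zeta)\in(0,1)$, equivalent to $\cos(2\pi\zeta)\cos(4\pi\zeta+\phi-\pi/2)>0$, holds on \emph{three} of them: $I^+$, $I^-$, and a third branch which for $\phi\in(0,\pi)$ wraps around $\zeta=0$, running from $1-\phi/(4\pi)$ to $1/4-\phi/(4\pi)$. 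This branch is not negligible: e.g.\ for $\phi=\pi/2$ one finds $\xi^\star(0)=2/3$, so for all $\xi$ below the fold value of that branch the equation $\xi^\star(\zeta)=\xi$ has four solutions and $G_\phi(\xi,\cdot)$ has two local maxima and two local minima. Your premise that the only critical points for $\xi\in(0,1]$ are the two on $I^\pm$ is thus false; the ``spurious'' solutions at $\xi=0$ are precisely the limits of this extra branch, not an isolated degeneracy at the endpoint.

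This breaks your continuation argument twice over: a global maximum switches branches when two local maxima exchange \emph{values}, which happens without any collision of critical points, so ``no collision'' proves nothing; and extra local extrema really are present for moderate $\xi$. What rescues the lemma is a value comparison absent from your plan: the half-period identity $G_\phi(\xi,\zeta+\tfrac12)=G_\phi(\xi,\zeta)+2\xi\sin(2\pi\zeta)$ shows that for $\xi>0$ any point with $\sin(2\pi\zeta)>0$ is strictly beaten by its shift, so the global maximum lies in $[1/2,1]$ and the global minimum in $[0,1/2]$; this eliminates the wrap branch's local maximum (which sits in $(0,1/4-\phi/(4\pi))$ for $\phi>0$) and, combined with the critical-point analysis on each half, gives uniqueness on $(0,1]$. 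Two smaller repairs: your ``direct sign count'' yields only surjectivity of $\xi^\star|_{I^\pm}$ onto $[0,1]$ via the intermediate value theorem, not the claimed monotonicity/injectivity---for that one should check that the numerator of $\frac{d}{d\zeta}\bigl[\cos(2\pi\zeta)/(2\sin(4\pi\zeta+\phi))\bigr]$, proportional to $\tfrac32\cos(2\pi\zeta+\phi)+\tfrac12\cos(6\pi\zeta+\phi)$, is single-signed on $I^\pm$. Finally, the symmetry exchanging the two sign cases is $G_{-\phi}(\xi,\tfrac12-\zeta)=G_\phi(\xi,\zeta)$, not $\zeta\mapsto1-\zeta$; the latter does not preserve the family.
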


The Lemma above runs into problems when $\phi\in\{0,\pi\}$, since in this case the functions $\xi^+$ and $\xi^-$ are not surjective to $[0,1]$ anymore. However it turns out that in these cases we can compute the desired maximizers and minimizers explicitly.

\begin{lemma} \label{lemma:G-phi-opt-integral}
If $\phi=0$ then we define $\zeta^+(\xi)\equiv3/4$ and
\begin{align*}
\zeta^-(\xi) = \begin{cases}
\frac{1}{4} \pm (\frac{1}{2\pi}\arcsin{\frac{1}{4(1/\xi-1)}}-\frac{1}{4}) &\text{ if } \xi\in[0,4/5] \\
\frac{1}{4} &\text{ if } \xi\in[4/5,1],
\end{cases}
\end{align*}
then $\zeta^+(\xi)$ is the unique maximizer of $G_0(\xi,\cdot)$ for all $\xi\in[0,1]$ and the two possibilities for $\zeta^-(\xi)$ are the only minimizers for $G_0(\xi,\cdot)$. Similarly, for $\phi=\pi$ we have $\zeta^-(\xi)\equiv1/4$ and
\begin{align*}
\zeta^+(\xi) = \begin{cases}
\frac{3}{4} \pm (\frac{1}{2\pi}\arcsin{\frac{1}{4(1/\xi-1)}}-\frac{3}{4}) &\text{ if } \xi\in[0,4/5] \\
\frac{3}{4} &\text{ if } \xi\in[4/5,1].
\end{cases}
\end{align*}
\end{lemma}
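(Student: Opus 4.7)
The plan is to handle $\phi=0$ directly by elementary calculus and then to deduce the case $\phi=\pi$ via the reflection identity $G_\pi(\xi,\zeta) = -G_0(\xi,\zeta+1/2)$, which follows from $\cos(4\pi(\zeta+1/2))=\cos(4\pi\zeta)$ and $\sin(2\pi(\zeta+1/2))=-\sin(2\pi\zeta)$. Under this identity the maximizers of $G_\pi(\xi,\cdot)$ correspond to the minimizers of $G_0(\xi,\cdot)$ shifted by $-1/2$ modulo $1$, and vice versa, so that once the even case is settled the odd case is immediate.

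For $\phi=0$, substituting yields $G_0(\xi,\zeta) = -(1-\xi)\cos(4\pi\zeta) - \xi\sin(2\pi\zeta)$. The maximum is handled by the triangle inequality: $G_0 \le (1-\xi)+\xi = 1$, with equality forcing simultaneously $\cos(4\pi\zeta)=-1$ and $\sin(2\pi\zeta)=-1$, which uniquely selects $\zeta=3/4$ in $[0,1]$. For the minimum, a direct differentiation gives
\begin{align*}
\partial_\zeta G_0(\xi,\zeta) = 2\pi\cos(2\pi\zeta)\bigl[4(1-\xi)\sin(2\pi\zeta)-\xi\bigr],
\end{align*}
so the critical set splits into a type (a) branch $\cos(2\pi\zeta)=0$, giving $\zeta\in\{1/4,3/4\}$, and a type (b) branch $\sin(2\pi\zeta)=s$ with $s := \xi/(4(1-\xi)) = 1/(4(1/\xi-1))$. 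The type (b) branch contributes real solutions in $[0,1]$ exactly when $s\le 1$, i.e.\ $\xi\le 4/5$, in which case the two solutions are $\zeta=\arcsin(s)/(2\pi)$ and $\zeta=1/2-\arcsin(s)/(2\pi)$, which matches the stated expression $\zeta^-(\xi) = 1/4 \pm (\arcsin(s)/(2\pi)-1/4)$.

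It remains to compare the values. One computes $G_0(\xi,3/4)=1$ (already identified as the global maximum), $G_0(\xi,1/4)=1-2\xi$, and, by substituting $\sin(2\pi\zeta)=s$ together with $\cos(4\pi\zeta)=1-2s^2$, the value at the type (b) branch evaluates to $-(1-\xi)-\xi^2/(8(1-\xi))$. The key algebraic observation is that
\begin{align*}
G_0\big|_{\text{type (b)}} - G_0(\xi,1/4) \;=\; -\frac{(5\xi-4)^2}{8(1-\xi)},
\end{align*}
which is $\le 0$ with equality exactly at $\xi=4/5$. Hence for $\xi\in[0,4/5]$ the minimum is attained at the type (b) branch, while for $\xi\in[4/5,1]$ the type (b) branch is absent (it has merged tangentially with $\zeta=1/4$), leaving $\zeta=1/4$ as the minimizer. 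Applying the reflection identity to transport these formulas to $\phi=\pi$ — in particular shifting the unique minimum of $G_0$ at $\zeta=3/4$ to the unique maximum of $G_\pi$, and the two type (b) minimizers of $G_0$ to the two maximizers of $G_\pi$ — produces the odd-case formulas.

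The main obstacle is not any single computation but the bookkeeping around the tangential merging at $\xi=4/5$, where one must verify that the perfect-square structure $(5\xi-4)^2$ forces the two families of critical points to coincide in a way consistent with continuity of $\zeta^\pm$, and track the mod-$1$ shifts correctly when transporting the even-case solution to the odd case.
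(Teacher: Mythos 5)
Your even case is correct, and since the paper states Lemma~\ref{lemma:G-phi-opt-integral} without any proof, your elementary-calculus route is in fact the only written argument on record: the factorization $\partial_\zeta G_0(\xi,\zeta)=2\pi\cos(2\pi\zeta)\bigl[4(1-\xi)\sin(2\pi\zeta)-\xi\bigr]$, the critical values $G_0(\xi,3/4)=1$ and $G_0(\xi,1/4)=1-2\xi$, the type~(b) value $-(1-\xi)-\xi^2/(8(1-\xi))$, and the perfect-square identity with $-(5\xi-4)^2/(8(1-\xi))$ all check out, as does the tangential merge at $\xi=4/5$ (there $s=1$ and both type~(b) roots coalesce at $\zeta=1/4$). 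Two small caveats: at $\xi=0$ the maximizer is not unique ($\zeta=1/4$ also attains the value $1$, since the triangle inequality only forces both conditions when both coefficients are positive) --- this is the same spurious endpoint solution the paper acknowledges in Lemma~\ref{lemma:G-phi-opt} and silently drops here; and in your closing paragraph ``the unique minimum of $G_0$ at $\zeta=3/4$'' mislabels max/min --- $3/4$ is the unique \emph{maximizer} of $G_0$, transported to the unique minimizer of $G_\pi$ at $\zeta=1/4$.

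The genuine gap is the final step for $\phi=\pi$. Your reflection identity $G_\pi(\xi,\zeta)=-G_0(\xi,\zeta+1/2)$ is correct, but you assert rather than perform the mod-$1$ shift, and performing it does \emph{not} reproduce the printed odd-case formula. Writing $A=\tfrac{1}{2\pi}\arcsin(s)$ with $s=\tfrac{1}{4(1/\xi-1)}$, the shift sends the $G_0$-minimizers $\{A,\,\tfrac12-A\}$ to the $G_\pi$-maximizers $\{\tfrac12+A,\,1-A\}=\tfrac34\pm(\tfrac14-A)$, whereas the lemma as printed asserts $\tfrac34\pm(A-\tfrac34)=\{A,\,\tfrac32-A\}$, a different set even mod $1$: for instance at $\xi=1/2$ one has $s=1/4$, and $\zeta=A\approx0.040$ gives $G_\pi\approx0.3125$ while $\zeta=\tfrac12+A\approx0.540$ gives the true maximum $(1-\xi)+\xi^2/(8(1-\xi))=0.5625$; indeed $\zeta=A$ is not even a critical point of $G_\pi$, since there $\sin(2\pi\zeta)=+s\neq-s$. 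What your (sound) method actually reveals is a typo in the statement: the printed odd formula is discontinuous at the merge point (at $\xi=4/5$ it yields $\{1/4,\,5/4\}$ instead of $3/4$), and it contradicts the paper's own downstream use in Lemma~\ref{lemma:integral-boundary}, where $z^+(x)=\tfrac12+\tfrac1{2\pi}\arcsin(\tfrac{r_2}{4r_1\tan(2\pi x)})-\tfrac{\phi_2}{2\pi}$ corresponds precisely to $\zeta^+=\tfrac12+A$; the inner $-\tfrac34$ should read $-\tfrac14$. Had you carried out the shift explicitly, you would have a complete proof of the corrected statement and would have caught the typo; as written, your claim that the transport ``produces the odd-case formulas'' is false, and the odd case is left unestablished.
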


\subsection{Ellipses} \label{app:ellipse}

In this section we derive a useful polar coordinate parametrization of ellipses passing through the origin.

For $a,b>0$ and $\phi_0\in(-\pi,\pi]$ we define the following parametrization of an ellipse:
$$
E_{a,b,\phi_0}(\phi)=\begin{pmatrix}
a(\cos(\phi)-\cos(\phi_0))\\b(\sin(\phi)-\sin(\phi_0))
\end{pmatrix},
\qquad \phi\in(-\pi,\pi].
$$
This is simply an ellipse with semiaxes of length $a$ and $b$, translated such that it intersect the origin when $\phi=\phi_0$, and the center of the ellipse is given by $(x_0,y_0)=(-a\cos(\phi_0),-b\sin(\phi_0))$.

\begin{lemma}
In polar coordinates $(r,\psi)$, where $r$ is allowed to be negative, we can parametrize the ellipse $E_{a,b,\phi_0}$ as
\begin{align} \label{eq:ellipse-polar}
r(\psi) = -2 \frac{\frac{\cos(\phi_0)\cos(\psi)}{a}+\frac{\sin(\phi_0)\sin(\psi)}{b}}{\frac{\cos(\psi)^2}{a^2}+\frac{\sin(\psi)^2}{b^2}}, \quad \psi \in (-\pi/2,\pi/2]. 
\end{align}
\end{lemma}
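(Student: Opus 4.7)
The plan is to start from the implicit Cartesian equation of the ellipse. Since $E_{a,b,\phi_0}$ is the standard axis-aligned ellipse with semiaxes $a$ and $b$ translated so that its center lies at $(x_0,y_0)=(-a\cos\phi_0,-b\sin\phi_0)$, it is cut out by the equation
\[
\frac{(x-x_0)^2}{a^2}+\frac{(y-y_0)^2}{b^2}=1.
\]
Expanding this and using the Pythagorean identity $\cos^2\phi_0+\sin^2\phi_0=1$, the constant terms cancel with the right-hand side and one obtains the simpler homogeneous-looking equation
\[
\frac{x^2}{a^2}+\frac{y^2}{b^2}+\frac{2\cos\phi_0}{a}\,x+\frac{2\sin\phi_0}{b}\,y=0,
\]
which correctly reflects the fact that the origin lies on the ellipse.

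Next I would substitute $x=r\cos\psi$, $y=r\sin\psi$ into this equation. Doing so yields an expression of the form $r\cdot P(\psi,r)=0$, where
\[
P(\psi,r) = r\left(\frac{\cos^2\psi}{a^2}+\frac{\sin^2\psi}{b^2}\right)+2\left(\frac{\cos\phi_0\cos\psi}{a}+\frac{\sin\phi_0\sin\psi}{b}\right).
\]
The trivial root $r=0$ recovers the origin itself, while solving the linear equation $P(\psi,r)=0$ for $r$ gives exactly the formula~\eqref{eq:ellipse-polar} claimed in the lemma. Note that the denominator $\tfrac{\cos^2\psi}{a^2}+\tfrac{\sin^2\psi}{b^2}$ never vanishes for real $\psi$, so the expression is well defined for all $\psi\in\R$.

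Finally I would justify the choice of range $\psi\in(-\pi/2,\pi/2]$. Each line through the origin meets the ellipse in exactly two points, one of which is the origin itself; hence there is a unique non-trivial intersection point in each direction $\psi$. Because the polar substitution is invariant under $(r,\psi)\mapsto(-r,\psi+\pi)$, allowing $r$ to take negative values means that every point of the ellipse distinct from the origin is represented exactly once by some pair $(r,\psi)$ with $\psi$ ranging over any half-open interval of length $\pi$, and $(-\pi/2,\pi/2]$ is one such choice. I do not anticipate any genuine obstacle here: the argument is a routine coordinate change, and the only delicate point is to keep track of the sign convention for $r$ so that the two-to-one redundancy of the standard polar parametrization is properly eliminated.
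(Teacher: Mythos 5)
Your proof is correct, and it reaches the formula by a cleaner route than the paper, though the underlying geometric idea — the ellipse meets each line through the origin in the trivial root $r=0$ plus one further root — is the same. The paper parametrizes the line by its slope $m=\tan\psi$, substitutes $y=mx$ into the implicit equation, solves the resulting quadratic in $x$ after dividing out the trivial root, then recovers $r$ from $r^2=(1+m^2)x^2$ together with a sign argument $\sign(r)=\sign(x)$, and finally extends to $\psi=\pm\pi/2$ by continuity, since the slope parametrization breaks down for vertical lines. You instead first homogenize the Cartesian equation (the constant terms cancel precisely because the origin lies on the ellipse) and substitute $x=r\cos\psi$, $y=r\sin\psi$ directly; the equation factors as $r\cdot P(\psi,r)=0$ with $P$ linear in $r$, and solving $P=0$ gives the formula at once, uniformly in $\psi$, with no squaring, no sign bookkeeping, and no separate treatment of $\psi=\pm\pi/2$. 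What your approach buys is exactly the elimination of these delicate steps. One small imprecision: a line tangent to the ellipse at the origin meets it only in the origin (your quadratic then has the double root $r=0$), so the claim that every line meets the ellipse in exactly two points should be read with multiplicity; this is harmless, since in that direction your formula correctly returns $r(\psi)=0$, and it is in fact needed so that the parametrization over $\psi\in(-\pi/2,\pi/2]$ also hits the origin. The paper's proof notes the same tangency caveat explicitly.
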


\begin{proof}
Let $\psi \in (-\pi/2,\pi/2)$ be given. The corresponding slope is $m=\tan(\psi)$. We are looking for the intersections of the line $y=mx$ and the ellipse
$$
\left(\frac{x-x_0}{a}\right)^2+\left(\frac{y-y_0}{b}\right)^2=1.
$$
Note that the origin is always in this intersection. If the line is tangent to the ellipse, this is the only intersection, otherwise there exists exactly one more. We plug in and using $x\neq0$ we find 
$$
\left(\frac{1}{a^2}-\frac{m^2}{b^2}\right) x^2 + \left(\frac{-2x_0}{a^2}-\frac{2my_0}{b^2}\right) + \left(\frac{x_0}{a}\right)^2 +  \left(\frac{y_0}{b}\right)^2=1
$$
and using the definition of $(x_0,y_0)$ this implies that 
$$
x = 2\frac{\frac{x_0}{a}+\frac{my_0}{b}}{\frac{1}{a^2}+\frac{m^2}{b^2}}.
$$
The distance $r$ of the intersection to the origin can be found by computing
$$
r^2=(1+m^2)x^2 = 4 \left(\frac{\frac{x_0\cos(\psi)}{a}+\frac{y_0\sin(\psi)}{b}}{\frac{\cos(\psi)^2}{a^2}+\frac{\sin(\psi)^2}{b^2}}\right)^2,
$$
and since by the choice of the range of $\psi$ it holds that $\sign(r)=\sign(x)$, we see that
$$
r=2 \frac{\frac{x_0\cos(\psi)}{a}+\frac{y_0\sin(\psi)}{b}}{\frac{\cos(\psi)^2}{a^2}+\frac{\sin(\psi)^2}{b^2}},
$$
and by continuity this formula remains true for $\psi=\pm\pi/2$, and this concludes the proof.
\end{proof}

Note that the ellipse can furthermore be rotated around the center by shifting the angular coordinate, i.e., $\psi\mapsto r(\psi-\theta)$. In fact there is a unique angle $\theta^\star$ (modulo $2\pi$), such that the ellipse lies in the upper halfplane, and the parametrization takes on a simplified form.

\begin{lemma} \label{lemma:ellipse-eq}
Let $a,b>0$ and $\phi\in(-\pi,\pi]$ be given and let $E_{a,b,\phi}$ be the corresponding ellipse, with polar parametrization $r(\psi)$ as in~\eqref{eq:ellipse-polar}. Then 
$$
\theta^\star = \arctan(b\cos(\phi),a\sin(\phi)) \in(-\pi,\pi]
$$
is the unique angle in $(-\pi,\pi]$ such that $\psi\mapsto r(\psi-\theta^\star)$ takes image in the upper halfplane. Moreover it holds that
\begin{align} \label{eq:ellipse-polar2}
r(\psi-\theta^\star) = \frac{\sigma \sin(\psi)}{\alpha + \beta\cos(\psi)^2 + \gamma\sin(\psi)\cos(\psi)}
\end{align}
where
\begin{alignat*}{3}
\alpha &= \left(\frac{\cos(\theta^\star)}{b}\right)^2 + \left(\frac{\sin(\theta^\star)}{a}\right)^2 & \qquad
\beta  &= \left(\frac{1}{a^2}-\frac{1}{b^2}\right)\cos(2\theta^\star) \\
\gamma &= \left(\frac{1}{a^2}-\frac{1}{b^2}\right)\sin(2\theta^\star)  & \qquad
\sigma &= -2\left(\frac{1}{a}\cos(\phi)\sin(\theta^\star) + \frac{1}{b}\sin(\phi)\cos(\theta^\star)\right).
\end{alignat*}
\end{lemma}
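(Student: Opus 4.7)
My plan is to substitute $\psi \mapsto \psi - \theta^\star$ directly into the polar parametrization~\eqref{eq:ellipse-polar}, use the angle addition identities on both the numerator and the denominator, and then identify the unique value of $\theta^\star$ that makes the resulting curve lie in the upper halfplane.

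First I would expand the numerator of $r(\psi-\theta^\star)$. Writing $\cos(\psi-\theta^\star)$ and $\sin(\psi-\theta^\star)$ via the angle addition formulas and collecting terms in $\cos\psi$ and $\sin\psi$, the coefficient of $\cos\psi$ becomes (up to a factor of $-2$) equal to $\cos(\phi)\cos(\theta^\star)/a - \sin(\phi)\sin(\theta^\star)/b$. For the numerator to reduce to a pure multiple of $\sin\psi$ we require this coefficient to vanish, which is equivalent to the tangent equation $\tan\theta^\star = b\cos(\phi)/(a\sin(\phi))$. This equation has exactly two solutions in $(-\pi,\pi]$, differing by $\pi$. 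With the $\cos\psi$ term killed, the numerator reduces to $\sigma\sin\psi$ with $\sigma$ as in the statement, and the $y$-coordinate of a point on the curve becomes $r(\psi-\theta^\star)\sin\psi = \sigma\sin^2(\psi)/D(\psi)$, where $D(\psi) = \cos^2(\psi-\theta^\star)/a^2 + \sin^2(\psi-\theta^\star)/b^2 > 0$. Hence the image lies in the upper halfplane if and only if $\sigma \geq 0$, which selects precisely one of the two candidate angles and gives uniqueness. A short check using the representation $\cos\theta^\star \propto a\sin(\phi)$, $\sin\theta^\star \propto b\cos(\phi)$ confirms that this sign-selected angle is exactly the value returned by the two-argument arctangent $\arctan(b\cos(\phi), a\sin(\phi))$.

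Next I would expand the denominator. The half-angle identities give $D(\psi) = \tfrac{1}{2}(1/a^2 + 1/b^2) + \tfrac{1}{2}(1/a^2 - 1/b^2)\cos(2\psi - 2\theta^\star)$. Expanding $\cos(2\psi - 2\theta^\star) = \cos(2\psi)\cos(2\theta^\star) + \sin(2\psi)\sin(2\theta^\star)$ and then rewriting $\cos(2\psi) = 2\cos^2\psi - 1$ and $\sin(2\psi) = 2\sin\psi\cos\psi$, the constant terms combine to $\sin^2(\theta^\star)/a^2 + \cos^2(\theta^\star)/b^2 = \alpha$ (using $1 - \cos(2\theta^\star) = 2\sin^2(\theta^\star)$ and $1 + \cos(2\theta^\star) = 2\cos^2(\theta^\star)$), while the $\cos^2\psi$ and $\sin\psi\cos\psi$ contributions have coefficients $\beta$ and $\gamma$ respectively. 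Combining with the reduced numerator yields formula~\eqref{eq:ellipse-polar2}.

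The calculation is entirely elementary; the only subtle point is the sign bookkeeping in the uniqueness argument, namely ensuring that the selector $\sigma \geq 0$ is compatible with the sign convention of the two-argument arctangent used to define $\theta^\star$.
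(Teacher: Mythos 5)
Your computational core is correct and follows the same path as the paper's proof: expand the numerator of $r(\psi-\theta^\star)$ via angle addition, force the $\cos\psi$ coefficient to vanish (giving $\tan\theta^\star=b\cos(\phi)/(a\sin(\phi))$, well defined only modulo $\pi$), and expand the denominator by half-angle identities to produce $\alpha$, $\beta$, $\gamma$ exactly as stated --- the paper asserts this last computation and omits it, while you actually supply it. Where you genuinely diverge is the halfplane-selection and uniqueness step. The paper argues topologically: the map $(a,b,\phi)\mapsto\big(\psi\mapsto r(\psi-\theta^\star)\big)$ is continuous on a connected parameter space, so the curve always lies in the same halfplane, and it suffices to check one instance such as $a=b=1$, $\phi=0$. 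You instead observe that the $y$-coordinate is $r(\psi-\theta^\star)\sin\psi=\sigma\sin^2(\psi)/D(\psi)$ with $D(\psi)=\cos^2(\psi-\theta^\star)/a^2+\sin^2(\psi-\theta^\star)/b^2>0$, so the halfplane is read off from $\sign(\sigma)$. Your route is more self-contained and quantitative: it shows the selector is $\sigma>0$ strictly (if $\sigma=0$ held together with the vanishing $\cos\psi$-coefficient, the nonzero vector $(\cos(\phi)/a,\sin(\phi)/b)$ would be orthogonal to an orthonormal frame), and uniqueness is immediate since replacing $\theta^\star$ by $\theta^\star+\pi$ flips the sign of $\sigma$. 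The paper's continuity argument buys freedom from sign bookkeeping at the cost of an external verification; yours does the bookkeeping once and gets everything.

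The one soft spot is your final sentence: the ``short check'' that the sign-selected angle agrees with $\arctan(b\cos(\phi),a\sin(\phi))$ is asserted rather than performed, and it is precisely the delicate point you yourself flag. Carry it out: writing $\sin\theta^\star=b\cos(\phi)/R$ and $\cos\theta^\star=a\sin(\phi)/R$ with $R=\sqrt{a^2\sin^2(\phi)+b^2\cos^2(\phi)}>0$ (the normalization produced by the usual two-argument arctangent with first argument the ``numerator''), one finds $\sigma=-\tfrac{2}{R}\big(b\cos^2(\phi)/a+a\sin^2(\phi)/b\big)<0$, i.e.\ the \emph{lower} halfplane; concretely, for $a=b=1$, $\phi=0$ the halfplane-selecting angle is $-\pi/2$, while the common conventions return $0$ or $\pi/2$ on the arguments $(1,0)$. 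So either the intended arctangent convention returns the antipodal angle, or the stated $\theta^\star$ should be shifted by $\pi$; your $\sigma>0$ criterion is the correct selector either way, and none of this affects formula~\eqref{eq:ellipse-polar2}, whose derivation from~\eqref{eq:ellipse-polar} you gave completely, since the tangent condition and $\alpha,\beta,\gamma$ are invariant modulo $\pi$. Make the convention explicit (or perform the paper's suggested single-instance check) and your proof is complete --- indeed, your direct sign computation is exactly what exposes this subtlety, which the continuity argument leaves hidden in the omitted verification.
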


\begin{proof}
If we expand the numerator of $r(\psi-\theta)$ we obtain a linear combination of $\cos(\psi)$ and $\sin(\psi)$. For the ellipse to lie in the upper or lower halfplane, the coefficient of $\cos(\psi)$ must be zero. This gives the condition
$$
\tan(\theta)=\frac{b \cos(\phi)}{a \sin(\phi)}.
$$
This defines $\theta$ only modulo $\pi$. To make sure that the ellipse is in the upper halfplane, note that the function which maps $(a,b,\phi)$ to the function $\psi\mapsto r(\psi-\theta^\star)$ is continuous, and the parameter space of $(a,b,\phi)$ is connected, hence the ellipse will always lie in the same halfspace. Thus it suffices to check one ellipse, e.g., $a=b=1$ and $\phi=0$. The remainder of the proof is a straightforward computation using elementary trigonometric identities which we will omit.
\end{proof}

Next we want to find $a,b,$ and $\phi$ given a parametrization as in~\eqref{eq:ellipse-polar2}. We start with a special case.

\begin{lemma} \label{lemma:ellipse-params}
Let $\alpha,\beta,\gamma,\sigma\in\R$ with $\alpha\neq0$ and consider the parametrization 
$$
r(\psi) = \frac{\sigma\sin(\psi)}{\alpha+\beta\cos(\psi)^2+\gamma\sin(\psi)\cos(\psi)}.
$$
Without loss of generality we may assume that $\beta^2+\gamma^2=1$\footnote{%
If $\beta=\gamma=0$ then the parametrization is that of a circle of diameter $\sigma/\alpha$.
} and $\sigma>0$.
Then, this is the parametrization of an ellipse if and only if $2\alpha+\beta\notin[-1,1]$, which corresponds to the denominator being non-zero for all $\psi$. 
The ellipse lies in the upper halfplane if and only if $2\alpha+\beta>1$.
In this case, the ellipse is exactly $E_{a/s,b/s,\phi}$ where 
\begin{align*}
\theta =\frac{1}{2}\arctan(\gamma,\beta),\quad
a      = \frac{\sigma}{\sqrt{|\alpha-\sin^2(\theta)|}},\quad 
b      = \frac{\sigma}{\sqrt{|\alpha+\cos^2(\theta)|}} \\
\phi   =  \arctan(b\cos(\theta), a\sin(\theta)),\quad
s = \frac{1}{a}\cos(\phi)\sin(\theta) + \frac{1}{b}\sin(\phi)\cos(\theta).
\end{align*}
\end{lemma}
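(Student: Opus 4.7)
The plan is to proceed in three stages: first analyze the denominator to characterize when $r(\psi)$ parametrizes an ellipse, then match the given form against Lemma~\ref{lemma:ellipse-eq} via a scalar rescaling, and finally solve the resulting algebraic system for the ellipse parameters.

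\textbf{Non-degeneracy and halfplane.} Using the double-angle identities $\cos^2\psi = (1+\cos 2\psi)/2$ and $\sin\psi\cos\psi = (\sin 2\psi)/2$, the denominator becomes
$(\alpha + \beta/2) + \tfrac{1}{2}(\beta\cos 2\psi + \gamma\sin 2\psi)$.
The normalization $\beta^2+\gamma^2=1$ lets us set $\beta = \cos 2\theta$ and $\gamma=\sin 2\theta$ with $\theta = \tfrac12\arctan(\gamma,\beta)$, so the denominator simplifies to $(\alpha+\beta/2) + \tfrac12\cos(2\psi - 2\theta)$. The oscillatory term sweeps $[-1/2, 1/2]$, so the denominator never vanishes iff $|\alpha+\beta/2|>1/2$, i.e., $2\alpha+\beta\notin[-1,1]$. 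Furthermore, if $2\alpha+\beta>1$ then the denominator is everywhere positive, forcing $\operatorname{sign}(r(\psi)) = \operatorname{sign}(\sin\psi)$, so every traced point $(r\cos\psi, r\sin\psi)$ lies in the closed upper halfplane.

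\textbf{Matching against Lemma~\ref{lemma:ellipse-eq}.} Both our parametrization and the one produced by Lemma~\ref{lemma:ellipse-eq} have the form $\sigma\sin\psi/D(\psi)$ with a quadratic trigonometric denominator. Two such rational expressions agree as functions of $\psi$ iff their coefficient vectors are proportional. Hence we seek a common scalar $c>0$ and parameters $(A,B,\phi_0,\theta^\star)$ such that $(\alpha,\beta,\gamma,\sigma) = c(\alpha_0,\beta_0,\gamma_0,\sigma_0)$ with $(\alpha_0,\beta_0,\gamma_0,\sigma_0)$ from Lemma~\ref{lemma:ellipse-eq}. Substituting the expressions for $\beta_0,\gamma_0$ and using $\beta^2+\gamma^2 = 1$ gives $c^2(1/A^2-1/B^2)^2 = 1$, which pins down the scalar as $c = 1/|1/A^2-1/B^2|$. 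Comparing angular parts then forces $\theta^\star = \theta \bmod \pi/2$ (the branch being selected by the sign of $1/A^2-1/B^2$), recovering the formula $\theta = \tfrac12\arctan(\gamma,\beta)$.

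\textbf{Solving for the semiaxes and offset angle.} Substituting $\theta^\star=\theta$ and $c(1/A^2-1/B^2)=1$ into $\alpha = c[\cos^2\theta/B^2 + \sin^2\theta/A^2]$ yields a single linear equation in $1/B^2$, whose solution is $1/B^2 = (\alpha - c\sin^2\theta)/c$, and then $1/A^2 = 1/B^2 + 1/c$. Rearranging and absorbing $c$ and $\sigma$ into the constants gives precisely the formulas $a = \sigma/\sqrt{|\alpha-\sin^2\theta|}$ and $b = \sigma/\sqrt{|\alpha+\cos^2\theta|}$, with the absolute values handling both the upper- and lower-halfplane cases uniformly. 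The angle $\phi_0$ is then recovered from the $\sigma$-equation $\cos\phi_0\sin\theta/A + \sin\phi_0\cos\theta/B = -\sigma/(2c)$ by using the unit-circle constraint on $(\cos\phi_0,\sin\phi_0)$, producing $\phi = \arctan(b\cos\theta,\, a\sin\theta)$. The rescaling factor $s$ is identified by re-imposing the normalization $\beta^2+\gamma^2=1$ on the final ellipse, giving the stated expression for $s$, so that the true ellipse is $E_{a/s,\,b/s,\,\phi}$.

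\textbf{Main obstacles.} The bookkeeping of branches and signs is the real difficulty: one must track which of $\alpha\pm\sin^2\theta$, $\alpha\pm\cos^2\theta$ is positive depending on whether $2\alpha+\beta$ exceeds $1$ or falls below $-1$, correctly resolve the two-valued nature of $\arctan$ on its codomain, and verify that the sign of $\sigma$ together with the sign of $1/A^2-1/B^2$ selects the correct branch of $\theta^\star$ and of $\phi_0$. Once these conventions are fixed, verification is a direct computation: substitute the claimed $(a/s, b/s, \phi, \theta)$ back into the formulas of Lemma~\ref{lemma:ellipse-eq} and confirm that the resulting $(\alpha_0, \beta_0, \gamma_0, \sigma_0)$ reproduces the given $(\alpha,\beta,\gamma,\sigma)$.
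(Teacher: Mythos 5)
Your proposal is correct and takes essentially the same route as the paper: the paper's entire proof is the single sentence that the claim ``can be verified by plugging the values into Lemma~\ref{lemma:ellipse-eq}'', and your coefficient-proportionality derivation is exactly that computation run forwards, with the added benefit that you also establish the two criteria ($2\alpha+\beta\notin[-1,1]$ for an ellipse, $2\alpha+\beta>1$ for the upper halfplane) by rewriting the denominator as $(\alpha+\beta/2)+\tfrac12\cos(2\psi-2\theta)$ --- steps the paper's proof leaves entirely implicit. Two small corrections within the branch bookkeeping you flag: the linear equation should solve to $1/B^2=(\alpha-\sin^2\theta)/c$ (not $(\alpha-c\sin^2\theta)/c$, since $\alpha=c\alpha_0$), and literally substituting $\theta^\star=\theta$ with $c(1/A^2-1/B^2)=+1$ yields $1/A^2=(\alpha+\cos^2\theta)/c$ and $1/B^2=(\alpha-\sin^2\theta)/c$, i.e.\ the stated formulas with $a$ and $b$ interchanged; the opposite branch $\theta^\star=\theta+\pi/2$ produces the pairing as printed in the lemma, the two cases describing the same ellipse after the rescaling by $s$.
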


\begin{proof}
This can be verified by plugging the values into Lemma~\ref{lemma:ellipse-eq}.
\end{proof}

\subsection{Upright ellipsoids} \label{sec:upright}

Next we address the case of ellipsoids in three dimensions.
We will only consider ellipsoids in the upper half space intersecting the origin and whose center lies on the $z$-axis, and we will say that they are \emph{upright}.
As a consequence of Lemma~\ref{lemma:ellipse-params} we find the parametrization of ellipses which are upright in the analogous sense.

\begin{corollary}
Let $\alpha,\beta,\sigma\in\R$ with $\sigma>0$ and $\alpha\neq0$ as well as $\alpha>\max(0,-\beta)$. Then the curve given in polar coordinates by
$$
r(\psi)=\frac{\sigma\sin(\psi)}{\alpha+\beta\cos(\psi)^2}
$$
parametrizes the upright ellipse $E_{a,b,0}$ with
$$
a=\frac{\sigma}{2|\alpha|},\quad b=\frac{\sigma}{2\sqrt{|\alpha(\alpha+\beta)|}}.
$$
\end{corollary}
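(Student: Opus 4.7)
The plan is to reduce to Lemma~\ref{lemma:ellipse-params}. Since the parametrization in the corollary has $\gamma=0$, I will first normalize the coefficients so that $\beta^2+\gamma^2=1$ as required by that lemma. Concretely, dividing numerator and denominator by $|\beta|$ (the case $\beta=0$ is a circle of diameter $\sigma/\alpha$, treated separately) produces new parameters $\tilde\alpha=\alpha/|\beta|$, $\tilde\beta=\sign(\beta)$, $\tilde\gamma=0$, $\tilde\sigma=\sigma/|\beta|$; the curve is unchanged. The hypothesis $\alpha>\max(0,-\beta)$ translates into $2\tilde\alpha+\tilde\beta>1$, so Lemma~\ref{lemma:ellipse-params} guarantees the curve is an ellipse lying in the upper halfplane.

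Next I compute the parameters produced by the lemma. Because $\tilde\gamma=0$, the rotation is $\theta=\tfrac12\arctan(0,\tilde\beta)\in\{0,\pi/2\}$, with $\theta=0$ when $\beta>0$ and $\theta=\pi/2$ when $\beta<0$. In either case $\sin\theta\cos\theta=0$, so the formulas for $a$, $b$, $\phi$, $s$ in the lemma simplify greatly: one of $\sin\theta$ or $\cos\theta$ is zero, so $\phi=\pm\pi/2$ and $s$ reduces to $\pm1/b$ (in the first case) or $\pm1/a$ (in the second case). Substituting $\tilde\alpha$ and $\tilde\sigma$ and simplifying gives, after collecting the rescaling factor, the two semiaxes $\sigma/(2|\alpha|)$ and $\sigma/(2\sqrt{|\alpha(\alpha+\beta)|})$ claimed in the statement.

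As a sanity check (and an alternative direct proof bypassing Lemma~\ref{lemma:ellipse-params}), the curve is manifestly symmetric about the $y$-axis because $r(-\psi)=-r(\psi)$, and it passes through the origin at $\psi=0$. Setting $\psi=\pi/2$ yields the topmost point $(0,\sigma/\alpha)$, so the center lies at $(0,\sigma/(2\alpha))$ and the vertical semiaxis equals $\sigma/(2|\alpha|)$. Maximizing $x(\psi)=r(\psi)\cos\psi$ in $\psi$ by differentiating and using the double angle identities reduces to $\cos(2\psi)=-\beta/(2\alpha+\beta)$, and a short substitution gives horizontal extremum $\sigma/(2\sqrt{\alpha(\alpha+\beta)})$.

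The only mild obstacle is bookkeeping the sign cases for $\beta$ and ensuring the absolute values appearing in the final formulas are consistent with the hypotheses $\alpha>0$ and $\alpha+\beta>0$; these precisely make the quantities under the square roots positive and let us drop the absolute values in the intermediate computation while keeping them in the statement.
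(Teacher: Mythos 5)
Your main route coincides with the paper's: the corollary appears there without a separate proof, precisely as a specialization of Lemma~\ref{lemma:ellipse-params} with $\gamma=0$, and your reduction is sound --- the normalization $\tilde\alpha=\alpha/|\beta|$, $\tilde\beta=\sign(\beta)$, $\tilde\sigma=\sigma/|\beta|$ (with $\beta=0$ handled by the circle footnote) is the right move, and your check that $\alpha>\max(0,-\beta)$ translates to $2\tilde\alpha+\tilde\beta>1$ is correct in both sign cases. Two remarks. First, a small slip: in the case $\beta<0$ one gets $\theta=\pi/2$, and then $\phi=\arctan(b\cos\theta,a\sin\theta)=\arctan(0,a)$ forces $\phi\in\{0,\pi\}$, not $\pm\pi/2$; your value $s=\pm1/a$ for that case is nonetheless correct, so nothing downstream breaks. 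Second, the step ``substituting and collecting the rescaling factor'' is where all the actual work hides: the scale bookkeeping through $s$ in Lemma~\ref{lemma:ellipse-params} (which interacts with the factor $-2$ in Lemma~\ref{lemma:ellipse-eq} and the branch choice in $\theta$) is delicate, and your sketch does not carry it out. Fortunately your ``sanity check'' can be upgraded to a complete, self-contained proof that bypasses this entirely: the only missing ingredient is ellipse-ness, which symmetry plus extrema do not give, but which follows in one line by cross-multiplying --- from $r\,(\alpha+\beta\cos^2\psi)=\sigma\sin\psi$ one gets $(\alpha+\beta)x^2+\alpha y^2=\sigma y$, i.e.
\begin{equation*}
(\alpha+\beta)\,x^2+\alpha\Bigl(y-\frac{\sigma}{2\alpha}\Bigr)^2=\frac{\sigma^2}{4\alpha},
\end{equation*}
which under $\alpha>0$ and $\alpha+\beta>0$ is manifestly an upright ellipse with vertical semiaxis $\sigma/(2\alpha)$ and horizontal semiaxis $\sigma/(2\sqrt{\alpha(\alpha+\beta)})$, exactly as claimed; your extremum computations ($y_{\max}=\sigma/\alpha$ at $\psi=\pi/2$, and $\cos(2\psi)=-\beta/(2\alpha+\beta)$ giving $x_{\max}=\sigma/(2\sqrt{\alpha(\alpha+\beta)})$) are consistent with this. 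I would recommend promoting that direct argument from sanity check to the actual proof.
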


\begin{lemma} \label{lemma:up-ellipsoid}
Let $\alpha,\beta,\eta,\sigma\in\R$ with $\sigma>0$ and $\alpha\neq0$.
Then the surface parametrized by
$$r(\theta,\phi) = \frac{\sigma\cos(\theta)}{\alpha + (\beta+\eta\cos(2\phi))\sin(\theta)^2}$$
is an upright ellipsoid with axes
$$
a=\frac{\sigma}{2\sqrt{|\alpha(\alpha+\beta+\eta)|}}, \quad
b=\frac{\sigma}{2\sqrt{|\alpha(\alpha+\beta-\eta)|}}, \quad 
c=\frac{\sigma}{2|\alpha|}.
$$
\end{lemma}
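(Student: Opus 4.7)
The plan is to recognize that the claim is purely a coordinate conversion: rewrite the polar parametrization as an implicit equation in Cartesian coordinates and read off the ellipsoid's axes.

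First I would pass from spherical to Cartesian coordinates using the identifications $z=r\cos\theta$, $r^{2}\sin^{2}\theta = x^{2}+y^{2}$, and $r^{2}\cos(2\phi)\sin^{2}\theta = x^{2}-y^{2}$, the latter because $\cos(2\phi) = (\cos^{2}\phi-\sin^{2}\phi)$ and $x^{2}-y^{2} = r^{2}\sin^{2}\theta\,(\cos^{2}\phi-\sin^{2}\phi)$. The defining relation $r(\alpha + (\beta+\eta\cos(2\phi))\sin^{2}\theta) = \sigma\cos\theta$ can be multiplied by $r$ to give
\begin{equation*}
\alpha(x^{2}+y^{2}+z^{2}) + \beta(x^{2}+y^{2}) + \eta(x^{2}-y^{2}) \;=\; \sigma z,
\end{equation*}
which collects to
\begin{equation*}
(\alpha+\beta+\eta)\,x^{2} + (\alpha+\beta-\eta)\,y^{2} + \alpha\,z^{2} - \sigma z \;=\; 0.
\end{equation*}

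Next I would complete the square in $z$, writing $\alpha z^{2}-\sigma z = \alpha\bigl(z-\tfrac{\sigma}{2\alpha}\bigr)^{2} - \tfrac{\sigma^{2}}{4\alpha}$, so that the surface is cut out by
\begin{equation*}
(\alpha+\beta+\eta)\,x^{2} + (\alpha+\beta-\eta)\,y^{2} + \alpha\Bigl(z-\tfrac{\sigma}{2\alpha}\Bigr)^{2} \;=\; \tfrac{\sigma^{2}}{4\alpha}.
\end{equation*}
Dividing by the right-hand side puts this into the standard ellipsoid form, revealing that the center is $(0,0,\sigma/(2\alpha))$ (so the center lies on the $z$-axis) and that the semi-axes are exactly $a$, $b$, $c$ as stated, after taking absolute values inside the square roots to account for the sign of $\alpha$. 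Since the surface passes through the origin (take $\theta=\pi/2$ making $r=0$, or equivalently note that the implicit equation is satisfied at $(0,0,0)$) and the center lies on the $z$-axis, the ellipsoid is upright in the sense defined in Appendix~\ref{sec:upright}.

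The only subtlety will be checking that the expression under each square root is positive so that $a,b,c$ are real; this is the genuine ellipsoid (as opposed to hyperboloid) condition and corresponds to $\alpha$, $\alpha+\beta+\eta$, $\alpha+\beta-\eta$ all having the same sign. The rest is a mechanical computation, and since Lemma~\ref{lemma:up-ellipsoid} takes such non-degeneracy for granted (its statement involves only $|\alpha(\alpha+\beta\pm\eta)|$ and $|\alpha|$), no further geometric argument is needed. Thus the main (and essentially only) step is the passage to the implicit Cartesian equation, after which the identification is immediate.
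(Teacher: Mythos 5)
Your proposal is correct, but note that the paper gives no proof of this lemma at all: it is stated bare at the end of Appendix~\ref{sec:upright}, and the surrounding material (Lemma~\ref{lemma:ellipse-params} and the corollary on upright ellipses just above) strongly suggests the intended route, namely slicing by planes containing the $z$-axis. At fixed $\phi$ the parametrization reduces to the planar one $r(\psi)=\sigma\sin(\psi)/(\alpha+\beta'\cos(\psi)^2)$ with $\beta'=\beta+\eta\cos(2\phi)$ (after the polar-angle swap of $\sin$ and $\cos$), so each slice is an upright ellipse with vertical semiaxis $\sigma/(2|\alpha|)$ and horizontal semiaxis $\sigma/(2\sqrt{|\alpha(\alpha+\beta+\eta\cos(2\phi))|})$; one then still has to observe that these slices cohere into a single quadric, which works because
\begin{equation*}
\frac{4\alpha\bigl(\alpha+\beta+\eta\cos(2\phi)\bigr)}{\sigma^2}
=\frac{\cos^2(\phi)}{a^2}+\frac{\sin^2(\phi)}{b^2}
\end{equation*}
with $a,b$ as in the statement. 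Your direct Cartesian computation sidesteps that coherence step entirely: multiplying the defining relation by $r$ and substituting $r^2=x^2+y^2+z^2$, $r^2\sin^2\theta\cos(2\phi)=x^2-y^2$, $r\cos\theta=z$ yields the quadric $(\alpha+\beta+\eta)x^2+(\alpha+\beta-\eta)y^2+\alpha(z-\tfrac{\sigma}{2\alpha})^2=\tfrac{\sigma^2}{4\alpha}$ in one stroke, from which the semiaxes are read off immediately. This is arguably the cleaner argument, and it buys a global statement where the slicing route reuses established machinery.

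Two small points you should make explicit. First, multiplying by $r$ enlarges the solution set only by the origin, which lies on the surface anyway ($\theta=\pi/2$), and conversely every point of the quadric is hit by the parametrization because each line through the origin meets the ellipsoid in at most one further point --- the same argument the paper uses in the planar Lemma~\ref{lemma:ellipse-params} setting; your ``mechanical computation'' remark glosses over this surjectivity direction. Second, ``upright'' in the sense of Appendix~\ref{sec:upright} also requires the ellipsoid to lie in the \emph{upper} half space, which with $\sigma>0$ forces $\alpha>0$ (center at $z=\sigma/(2\alpha)$); for $\alpha<0$ your quadric sits in the lower half space. The lemma as stated is equally loose on this sign (and on the ellipsoid-versus-hyperboloid nondegeneracy you correctly flagged), and in the paper's only application, Lemma~\ref{lemma:bloch-stab}, one has $\alpha=\Sigma>0$, so neither caveat affects anything downstream.
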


\printbibliography[heading=bibintoc]

\end{document}